\DeclareMathOperator*{\im}{im}
\newcommand{\seepage}[1]{\marginpar{\scriptsize (p.~\pageref{#1})}}
\begin{document}
\title{Descriptional Complexity  of Semi-Simple Splicing Systems}
\author{Lila Kari \and Timothy Ng}

\institute{School of Computer Science, University of Waterloo\\
Waterloo, Ontario N2L 3G1, Canada\\
  \email{\{lila.kari, tim.ng\}@uwaterloo.ca}
}

\maketitle

\begin{abstract}
Splicing systems are generative mechanisms introduced by Tom Head in 1987 to model the biological process of DNA recombination. The computational engine of a splicing system is the ``splicing operation",  a cut-and-paste binary string operation defined by a set  of ``splicing rules" $r = (\alpha_1, \alpha_2 ; \alpha_3, \alpha_4)$ where $\alpha_1, \alpha_2, \alpha_3, \alpha_4$  are words over an alphabet $\Sigma$. For two strings $x = x_1 \alpha_1 \alpha_2 x_2$ and $y = y_1 \alpha_3 \alpha_4 y_2$, applying the  splicing rule $r$ produces the string $z = x_1 \alpha_1 \alpha_4 y_2$. 
In this paper we  focus on a particular type of  splicing systems, called $(i, j)$ semi-simple splicing systems, $i = 1,2$ and $j = 3, 4$, wherein all splicing rules  have the property that the  two strings in positions $i$ and $j$ are   singleton letters, while the other two strings are empty. The language    generated by such a system consists of the set of words that are obtained starting from an initial set called ``axiom set", by iteratively applying the splicing rules to  strings in the axiom set as well as to intermediately produced strings. We consider semi-simple splicing systems where the axiom set is a regular language, and investigate the descriptional complexity of such systems  in terms of the size of the minimal deterministic finite automata that recognize the languages they generate.  

\end{abstract}

\section{Introduction}

Splicing systems are generative mechanisms introduced by Tom Head \cite{Head1987} to model the biological process of DNA recombination. A splicing system consists of an initial language called an {\it axiom set}, and a set of so-called  {\it splicing rules}. The result of applying a splicing rule  to a pair of operand strings  is a new ``recombinant" string, and  the language generated by a splicing system consists  all the words that can be obtained by successively applying splicing rules to axioms and the intermediately produced words. The most natural variant of splicing systems, often referred to as {\em finite splicing systems}, is to consider a finite set of axioms and a finite set of rules. 
Several different types of splicing systems  have been proposed in the literature,  and Bonizzoni et al.~\cite{Bonizzoni2001} showed that the classes of languages they generate are related: the class of languages generated by finite Head splicing systems \cite{Head1987} is strictly contained in the class of languages generated by finite P\u{a}un splicing systems \cite{Paun1996}, which is strictly contained in the class of languages generated by finite Pixton splicing systems \cite{Pixton1996}.


In this paper we will use the P\u{a}un definition \cite{Paun1996}, which defines a splicing rule   
as a quadruplet of words $r = (u_1,v_1;u_2,v_2)$.
This rule splices two words 
$x_1u_1v_1y_1$ and $x_2u_2v_2y_2$ as follows:
The words are cut between the factors $u_1,v_1$, respectively $u_2,v_2$, and
the prefix  of the first word (ending in $u_1$) is 
recombined by catenation with the suffix  of the second word (starting with $v_2$), resulting in the word $x_1 u_1 v_2 y_2$.
 
 Culik~II and Harju \cite{Culik1991} proved that
finite Head splicing systems can only generate regular languages, while also \cite{Head2006} and \cite{Pixton1996}  proved a similar  result for P\u{a}un, respectively Pixton splicing systems.
Gatterdam \cite{Gatterdam1989} gave $(aa)^*$ as an example of a
regular language which cannot be generated by a finite Head splicing system, which proved that this is a strict inclusion.


As the classes of languages generated by finite splicing systems are subclasses of the family of regular languages, their descriptional complexity can be considered in terms of the finite automata that recognize them.
For example, Loos et al.~\cite{Loos2008} gave a bound on the number of states required for a nondeterministic finite automaton to recognize the language generated by an equivalent Paun finite splicing system.
Other descriptional complexity measures  that have been investigated in the literature include  the number of rules, the number of words in the initial language, the maximum length of a word in the initial axiom set, and the sum of the lengths of all words in the axiom set, for simple  splicing systems (see below),  ~\cite{Mateescu1998}; the number of rules, the length of the rules, and the size of the axiom set;  the radius, the largest $u_i$ in a rule, P\u{a}un~\cite{Paun1996}.

In the original definition, simple splicing systems are finite  splicing systems where all the words in the splicing rules are singleton letters, and  the descriptional complexity of simple splicing systems was considered by Mateescu et al.~\cite{Mateescu1998}  in terms of the size of a right linear grammar that generates a simple splicing language. 
Semi-simple splicing systems were introduced in Goode and Pixton \cite{Goode2001} as having a finite  axiom set, and splicing rules  of the form $(a, \epsilon; b, \epsilon)$ where $a, b$ are singleton letters, and $\epsilon$ denotes the empty word. 

In this paper we  consider an expanded  definition of semi-simple splicing systems that allows the axiom set to  be a regular language. We focus  our study on some variants of  such semi-simple splicing systems, called $(i, j)$ semi-simple splicing systems, $i = 1,2$ and $j = 3, 4$, wherein all splicing rules  have the property that the  two strings in positions $i$ and $j$ are   singleton letters, while the other two strings are empty. Ceterchi et al. \cite{Ceterchi2003} showed that  all classes of languages generated by semi-simple splicing systems are pairwise incomparable\footnote{Simple splicing language classes are pairwise incomparable except for  the pair (1,3) and (2,4), which are equivalent~\cite{Mateescu1998}}. In addition, in a departure from the  original definition of semi-simple splicing systems \cite{Goode2001}, in this paper the  axiom set is allowed to be a (potentially infinite) regular set. 

More precisely,  we investigate the descriptional complexity of   $(i, j)$ semi-simple splicing systems with regular axiom sets, in terms of the size of the minimal deterministic finite automaton that recognizes the language generated by the system.
The paper is organized as follows:  Section \ref{sec:prelim}  introduces definitions and notations, Section \ref{sec:splicing} defines  splicing systems and outlines some basic results on simple splicing systems, Sections \ref{sec:24}, \ref{sec:23}, \ref{sec:14} investigate the state complexity of (2,4), (2,3) respectively (1,4) semi-simple splicing systems, and Section \ref{sec:concl} summarizes our results (Table \ref{tab:conclusion}).

\section{Preliminaries}
\label{sec:prelim}

Let $\Sigma$ be a finite alphabet. We denote by $\Sigma^*$ the set of all finite words over $\Sigma$, including the empty word, which we denote by $\varepsilon$. We denote the length of a word $w$ by $|w| = n$. If $w = xyz$ for $x,y,z \in \Sigma^*$, we say that $x$ is a prefix of $w$, $y$ is a factor of $w$, and $z$ is a suffix of $w$.

A deterministic finite automaton (DFA) is a tuple $A = (Q,\Sigma,\delta,q_0,F)$ where $Q$ is a finite set of states, $\Sigma$ is an alphabet, $\delta$ is a function $\delta : Q \times \Sigma \to Q$, $s \in Q$ is the initial state, and $F \subseteq Q$ is a set of final states. We extend the transition function $\delta$ to a function $Q \times \Sigma^* \to Q$ in the usual way. A DFA $A$ is complete if $\delta$ is defined for all $q \in Q$ and $a \in \Sigma$. In this paper, all DFAs are defined to be complete. We will also make use of the notation $q \xrightarrow{w} q'$ for $\delta(q,w) = q'$, where $w \in \Sigma^*$ and $q,q' \in Q$. The language recognized or accepted by $A$ is $L(A) = \{w \in \Sigma^* \mid \delta(q_0,w) \in F\}$.

Each letter $a \in \Sigma$ defines a transformation of the state set $Q$. Let $\delta_a : Q \to Q$ be the transformation on $Q$ induced by $a$, defined by $\delta_a(q) = \delta(q,a)$. We extend this definition to words by composing the transformations $\delta_w = \delta_{a_1} \circ \delta_{a_2} \circ \cdots \circ \delta_{a_n}$ for $w = a_1 a_2 \cdots a_n$. 
We denote by $\im \delta_a$ the image of $\delta_a$, defined $\im \delta_a = \{ \delta(p,a) \mid p \in Q \}$.

A state $q$ is called \emph{reachable} if there exists a string $w \in \Sigma^*$ such that $\delta(q_0,w) = q$. A state $q$ is called \emph{useful} if there exists a string $w \in \Sigma^*$ such that $\delta(q,w) \in F$. A state that is not useful is called \emph{useless}. A complete DFA with multiple useless states can be easily transformed into an equivalent DFA with at most one useless state, which we refer to as the \emph{sink state}.

Two states $p$ and $q$ of $A$ are said to be \emph{equivalent} in the case that $\delta(p,w) \in F$ if and only if $\delta(q,w) \in F$ for every word $w \in \Sigma^*$. A DFA $A$ is minimal if each state $q \in Q$ is reachable from the initial state and no two states are equivalent. The state complexity of a regular language $L$ is the number of states of the minimal complete DFA recognizing $L$ \cite{Gao2016}.

A nondeterministic finite automaton (NFA) is a tuple $A = (Q,\Sigma,\delta,I,F)$ where $Q$ is a finite set of states, $\Sigma$ is an alphabet, $\delta$ is a function $\delta: Q \times \Sigma \to 2^Q$, $I \subseteq Q$ is a set of initial states, and $F \subseteq Q$ is a set of final states. The language recognized by an NFA $A$ is $L(A) = \{ w \in \Sigma^* \mid \bigcup_{q \in I} \delta(q,w) \cap F \neq \emptyset \}$. As with DFAs, transitions of $A$ can be viewed as transformations on the state set. Let $\delta_a : Q \to 2^Q$ be the transformation on $Q$ induced by $a$, defined by $\delta_a(q) = \delta(q,a)$. We define $\im \delta_a = \bigcup_{q \in Q} \delta_a(q)$. We make use of the notation $P \xrightarrow{w} P'$ for $P' = \bigcup_{q \in P} \delta(q,w)$, where $w \in \Sigma^*$ and $P,P' \subseteq Q$. 


\section{Semi-simple Splicing Systems} \label{sec:splicing}


In this paper we will use the notation of P\u{a}un~\cite{Paun1996}. The splicing operation is defined via sets of quadruples $r = (\alpha_1, \alpha_2 ; \alpha_3, \alpha_4)$ with $\alpha_1, \alpha_2, \alpha_3, \alpha_4 \in \Sigma^*$ called splicing rules. For two strings $x = x_1 \alpha_1 \alpha_2 x_2$ and $y = y_1 \alpha_3 \alpha_4 y_2$, applying the rule $r = (\alpha_1, \alpha_2 ; \alpha_3, \alpha_4)$ produces a string $z = x_1 \alpha_1 \alpha_4 y_2$, which we denote by $(x,y) \vdash^r z$. 

A \emph{splicing scheme} is a pair $\sigma = (\Sigma,\mathcal R)$ where $\Sigma$ is an alphabet and $\mathcal R$ is a set of splicing rules. For a splicing scheme $\sigma = (\Sigma,\mathcal R)$ and a language $L \subseteq \Sigma^*$, we denote by $\sigma(L)$ the language
\begin{equation*}
\sigma(L) = L \cup \{z \in \Sigma^* \mid \text{$(x,y) \vdash^r z$, where $x,y \in L, r \in \mathcal R$} \}.
\end{equation*}
Then we define $\sigma^0(L) = L$ and $\sigma^{i+1}(L) = \sigma(\sigma^i(L))$ for $i \geq 0$ and
\begin{equation*}
\sigma^*(L) = \lim_{i \to \infty} \sigma^i(L) = \bigcup_{i \geq 0} \sigma^i(L).
\end{equation*}
For a splicing scheme $\sigma = (\Sigma, \mathcal R)$ and an initial language $L \subseteq \Sigma^*$, we say the triple $H = (\Sigma, \mathcal R, L)$ is a \emph{splicing system}. The language generated by $H$ is defined by $L(H) = \sigma^*(L)$.

Goode and Pixton \cite{Goode2001} define a restricted class of splicing systems called semi-simple splicing systems. A semi-simple splicing system is a triple $H = (\Sigma,M,I)$, where $\Sigma$ is an alphabet, $M \subseteq \Sigma \times \Sigma$ is a set of markers, and $I$ is a finite initial language over $\Sigma$. We have $(x,y) \vdash^{(a,b)} z$ if and only if $x = x_1 a x_2$, $y = y_1 b y_2$, and $z = x_1 a y_2$ for some $x_1, x_2, y_1, y_2 \in \Sigma^*$. That is, a semi-simple splicing system is a splicing system in which the set of rules is $\mathcal M = \{(a,\varepsilon;b,\varepsilon) \mid (a,b) \in M \}$. Since the rules are determined solely by our choice of $M \subseteq \Sigma \times \Sigma$, the set $M$ is used in the definition of the semi-simple splicing system rather than the set of rules $\mathcal M$. 

It is shown in~\cite{Goode2001} that the class of languages generated by semi-simple splicing systems is a subclass of the regular languages. Semi-simple splicing systems are a generalization of the class of simple splicing systems, defined by Mateescu et al.~\cite{Mateescu1998}. A splicing system is a simple splicing system if it is a semi-simple splicing system and all markers are of the form $(a,a)$ for $a \in \Sigma$. It is shown in~\cite{Mateescu1998} that the class of languages generated by simple splicing systems is a subclass of the extended star-free languages.

Observe that the set of rules $\mathcal M = \{(a,\varepsilon;b,\varepsilon) \mid (a,b) \in M\}$ of a semi-simple splicing system consist of 4-tuples with symbols from $\Sigma$ in positions 1 and 3 and $\varepsilon$ in positions 2 and 4. We can call such splicing rules (1,3)-splicing rules. Then a (1,3)-splicing system is a splicing system with only (1,3)-splicing rules and ordinary semi-simple splicing systems can be considered (1,3)-semi-simple splicing systems. The state complexity of (1,3)-simple and (1,3)-semi-simple splicing systems was studied previously by the authors in \cite{Kari2019a}.

We can consider variants of semi-simple splicing systems in this way by defining semi-simple $(i,j)$-splicing systems, for $i = 1,2$ and $j = 3,4$. A semi-simple (2,4)-splicing system is a splicing system $(\Sigma,M,I)$ with rules $\mathcal M = \{(\varepsilon,a;\varepsilon,b) \mid (a,b) \in M\}$. A (2,3)-semi-simple splicing system is a splicing system $(\Sigma,M,I)$ with rules $\mathcal M = \{(\varepsilon,a;b,\varepsilon) \mid (a,b) \in M\}$. A (1,4)-semi-simple splicing system is a semi-simple splicing system $(\Sigma,M,I)$ with rules $\mathcal M = \{(a,\varepsilon;\varepsilon,b) \mid (a,b) \in M\}$. 

Ceterchi et al.~\cite{Ceterchi2003} show that class of languages generated by (1,3)-, (1,4)-, (2,3)-, and (2,4)-semi-simple splicing systems are all incomparable. However, it was shown in~\cite{Mateescu1998} that the classes of languages generated by (1,3)-simple splicing systems (i.e. ordinary simple splicing systems) and (2,4)-simple splicing systems are equivalent, while, the classes of languages generated by (1,3)-, (1,4)-, and (2,3)-simple splicing systems are all incomparable and subregular. 

In this paper, we will relax the condition that the initial language of a semi-simple splicing system must be a finite language. We will consider also semi-simple splicing systems with regular initial languages. By \cite{Paun1996}, it is clear that such a splicing system will also produce a regular language. In the following, we will use the convention that $I$ denotes a finite language and $L$ denotes an infinite language. 

\section{State Complexity of (2,4)-semi-simple Splicing Systems} \label{sec:24}
In this section, we will consider the state complexity of (2,4)-semi-simple splicing systems. Recall that a (2,4)-semi-simple splicing system is a splicing system with rules of the form $(\varepsilon,a;\varepsilon,b)$ for $a, b \in \Sigma$. As mentioned previously, the classes of languages generated by (1,3)- and (2,4)-simple splicing systems were shown to be equivalent by Mateescu et al.~\cite{Mateescu1998}, while the classes of languages generated by (1,3)- and (2,4)-semi-simple splicing systems were shown to be incomparable by Ceterchi et al.~\cite{Ceterchi2003}. In this section, we will consider the state complexity of languages generated by (2,4)-semi-simple splicing systems.

First, we define an NFA that recognizes the language of a given (2,4)-semi-simple splicing system. 
This construction is based on the construction of Head and Pixton~\cite{Head2006} for P\u{a}un splicing rules, which is based on the construction by Pixton~\cite{Pixton1996} for Pixton splicing rules. The original proof of regularity of finite splicing is due to Culik and Harju~\cite{Culik1991}. We follow the Head and Pixton construction and apply $\varepsilon$-transition removal on the resulting NFA to obtain an NFA for the semi-simple splicing system with the same number of states as the DFA for the initial language of the splicing system.

\begin{proposition} \label{prop:pixton-construction}
Let $H = (\Sigma,M,L)$ be a (2,4)-semi-simple splicing system with a regular initial language and let $L$ be recognized by a DFA with $n$ states. Then there exists an NFA $A_H'$ with $n$ states such that $L(A_H') = L(H)$.
\end{proposition}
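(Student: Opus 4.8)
Start from a DFA $A = (Q, \Sigma, \delta, q_0, F)$ with $|Q| = n$ recognizing the initial language $L$. The key observation for $(2,4)$-rules $(\varepsilon, a; \varepsilon, b)$ is this: splicing $x = x_1 a x_2$ with $y = y_1 b y_2$ yields $z = x_1 a y_2$, i.e. we keep a prefix of $x$ ending in the letter $a$ and append a suffix of $y$ that begins immediately after an occurrence of $b$. So a word $z$ is producible by one splicing step precisely when $z = u a v$ where $ua$ is a prefix of some $x \in L$ and $v$ is a "tail after $b$" of some $y \in L$ — meaning $y = wbv$ for some $w$. I would first check that once we close under iteration, nothing new beyond this single-step picture appears at the "seam" level: the suffix $v$ of $z$ can itself be any tail-after-$b$ of any word already generated, but since the generated language is regular and we are building an automaton, the cleanest route is to define the automaton directly and prove it computes the full closure $\sigma^*(L)$.

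**The construction.** Build an NFA $A_H$ on the same state set $Q$ (before $\varepsilon$-removal, following Head–Pixton, one adds $\varepsilon$-transitions; I will describe the end product). Keep all original transitions of $A$. For each marker $(a,b) \in M$: (i) let $T_b = \{\delta(q, b) \mid q \in Q\} = \im \delta_b$ be the set of states reachable by reading $b$ from anywhere — these are the "post-$b$" states; (ii) for every transition $p \xrightarrow{a} p'$ in $A$ (so $p' \in \im \delta_a$), and every state $t \in T_b$, add a transition $p \xrightarrow{a} t$. Intuitively, after reading a prefix $x_1 a$ we may "jump" to any state the original DFA would be in just after some occurrence of $b$, and then continue reading $y_2$ along original transitions; acceptance is by the original $F$. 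The initial state is $q_0$ and the final states are $F$. This is exactly the $\varepsilon$-transition-removed form of the Head–Pixton NFA specialized to $(2,4)$-rules, and it has $n$ states. One must argue the $\varepsilon$-closure manipulation does not add states — that is routine since the added $\varepsilon$-transitions in the generic construction all land among the original $n$ states.

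**Correctness.** I would prove $L(A_H) = \sigma^*(L)$ by two inclusions. For $\sigma^*(L) \subseteq L(A_H)$: induct on the number of splicing steps. The base case $L \subseteq L(A_H)$ is immediate since all original transitions are retained. For the inductive step, if $z = x_1 a y_2$ comes from $x = x_1 a x_2 \in \sigma^i(L)$ and $y = y_1 b y_2 \in \sigma^i(L)$, then by induction there is an accepting run of $A_H$ on $x$ and on $y$; take the run on $x$ up to the state reached after $x_1 a$, use the new $(a,b)$-jump to land on the state $t$ the $y$-run occupies right after it has read $y_1 b$, then follow the $y$-run on $y_2$ into $F$ — the catch is ensuring $t \in T_b$, which holds because $t$ is the image of reading $b$ from the pre-$b$ state of the $y$-run. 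Wait — subtlety: the $y$-run is a run of $A_H$, not of $A$, so after $y_1 b$ it sits at a state in $\im \delta_b$ of $A_H$, which may differ from $\im \delta_b$ of $A$; I need $T_b$ to be the image under $A_H$'s $b$-transitions, and since $A_H$ adds no new $b$-transitions unless $b$ is itself some marker's first letter, I should define $T_b = \im \delta_b$ in $A_H$ to be safe, or better, argue by a secondary induction that suffices. For the reverse inclusion $L(A_H) \subseteq \sigma^*(L)$: given an accepting run of $A_H$ on $z$, decompose it at the jump-transitions; each maximal jump-free segment is an original-DFA path, and splice the corresponding witness words together, again by induction on the number of jumps used, using that $\sigma^*(L)$ is closed under $\sigma$.

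**Main obstacle.** The delicate point is the interaction between the newly added jump transitions and the definition of $T_b$: because a jump transition is labelled by a letter $a$, it can interfere with what "$\im \delta_b$ in $A_H$" means if $a = b$ or if $b$ appears as the trigger letter of another marker. I expect to handle this by taking $T_b$ to be the image of $b$ under the transition function of $A_H$ (a fixed point, computed after all jumps are added, which is still a subset of the original $n$ states), and then the induction goes through cleanly because any run-position of $A_H$ immediately after reading the letter $b$ lies in that set by definition. The rest — the two inductions and the $\varepsilon$-removal bookkeeping — is routine once this definitional point is pinned down.
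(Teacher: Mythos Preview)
Your overall plan---build an $n$-state NFA on the same state set as the DFA for $L$ by adding ``jump'' transitions at splice points and prove the two inclusions by induction on splicing steps and on jumps---is the same approach the paper takes (it is the Head--Pixton construction followed by $\varepsilon$-removal). However, the construction you give is for the wrong rule variant.

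For a $(2,4)$-rule $(\varepsilon,a;\varepsilon,b)$ we have $\alpha_1=\varepsilon$, $\alpha_2=a$, $\alpha_3=\varepsilon$, $\alpha_4=b$, so from $x=x_1 a x_2$ and $y=y_1 b y_2$ the spliced word is $z=x_1\alpha_1\alpha_4 y_2=x_1 b y_2$, \emph{not} $x_1 a y_2$. What you described (``keep a prefix of $x$ ending in $a$'') is the $(1,3)$ case. Consequently your jump transitions are attached to the wrong letter: you add $p\xrightarrow{a} t$ for $t\in\im\delta_b$, whereas the correct NFA (and the paper's) adds, for each state $q$ with $\delta(q,a)$ useful and each $(a,b)\in M$, the transitions $q\xrightarrow{b}\im\delta_b$. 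With your transitions the machine accepts words of the form $x_1 a y_2$ that need not lie in $L(H)$ whenever $a\neq b$, so the reverse inclusion fails for genuinely semi-simple (non-simple) markers.

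A second, smaller gap: you quantify over ``every transition $p\xrightarrow{a}p'$ in $A$'', but $A$ is complete, so this is every state. The correct side condition is that $\delta(q,a)$ be \emph{useful} (i.e.\ on an accepting path), since otherwise there is no $x_2$ with $x_1 a x_2\in L$ to serve as the first operand. Without this restriction your NFA again accepts too much. The paper imposes this usefulness condition explicitly. Once you fix the label of the jump (from $a$ to $b$) and add the usefulness guard, your argument---including the fixed-point remark about $T_b$ possibly growing under the added transitions, which the paper handles by iterating the $\varepsilon$-transition insertion until stable---lines up with the paper's proof.
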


\begin{proof} 
Let $H = (\Sigma,M,L)$ and let $A = (Q,\Sigma,\delta,q_0,F)$ be a DFA for $L$, with $|Q| = n$. Recall that markers $(a,b)$ correspond to a splicing rule $(\varepsilon, a; \varepsilon, b)$. For each marker $(a,b) \in M$, let $B_{(a,b)}$ be an automaton with initial state $i_{(a,b)}$ and final state $t_{(a,b)}$ which accepts the word $b$. The automaton $B_{(a,b)}$ is called a bridge for $(a,b)$ and is shown in Figure~\ref{fig:bridge}.

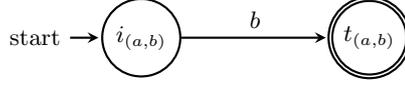
\begin{figure}
\begin{center}
\begin{tikzpicture}[shorten >=1pt,on grid,node distance=3cm,>=stealth,thick,auto]
    \node[state,initial] (0) {$i_{(a,b)}$};
    \node[state,accepting] (1) [right of=0]{$t_{(a,b)}$};
    \path[->]
        (0) edge  node {$b$} (1)
        ;
\end{tikzpicture}
\caption{The bridge $B_{(a,b)}$ for (2,4)-semi-simple splicing rules}
\label{fig:bridge}
\end{center}
\end{figure}

We will define the NFA $A_H = (Q',\Sigma,\delta',q_0,F)$, where the state set is  
\begin{equation*}
Q' = Q \cup \bigcup_{(a,b) \in M} \left\{i_{(a,b)}, t_{(a,b)} \right\}.
\end{equation*} 
Note that the initial and final states of $A_H$ stay unchanged despite the addition of states from the automata $B_{(a,b)}$.

We will now construct $\delta'$. First, we define $\delta_0'$ by
\begin{equation*}
\delta_0' = \delta \cup \bigcup_{(a,b) \in M} \left\{i_{(a,b)} \xrightarrow{b} t_{(a,b)}\right\}.
\end{equation*}
Then, we define $\delta_k'$ recursively for $k > 0$ by adding new transitions to $\delta_{k-1}'$ in the following way. For each marker $(a,b) \in M$,
\begin{enumerate}
    \item if $q \in Q'$ and $q \neq t_{(a',b')}$ for any $(a',b') \in M$, and
    \item $\delta_{k-1}'(q,a)$ is defined and useful,
\end{enumerate}
add a transition $q \xrightarrow{\varepsilon} i_{(a,b)}$ to $\delta_k'$, and,
\begin{enumerate}
    \item If $q \in Q'$ and $q \neq i_{(a',b')}$ for any $(a',b') \in M$, and
    \item $q \in \im (\delta_{k-1}')_b$,
\end{enumerate}
add a transition $t_{(a,b)} \xrightarrow{\varepsilon} q$ to $\delta_k$.

\if 0
\begin{figure}
\begin{center}
\begin{tikzpicture}[shorten >=1pt,on grid,node distance=2.5cm,>=stealth,thick,auto]
    \node[state] (0) {$i_{(a,b)}$};
    \node (2) [right of=0] {$\cdots$};
    \node[state] (1) [right of=2]{$t_{(a,b)}$};
    \node[state] (3) [below left of=0] {};
    \node[state] (4) [below right of=1] {};
    \node (5) [below of=0] {$\cdots$};
    \node (6) [below of=1] {$\cdots$};
    \path[->]
        (0) edge  node {} (2)
        (2) edge  node {} (1)
        (3) edge [bend left] node {$\varepsilon$} (0)
        (1) edge [bend left] node {$\varepsilon$} (4)
        (3) edge [bend right] node {$u_1 u_2$} (5)
        (6) edge [bend right] node {$u_3 u_4$} (4)
        ;
\end{tikzpicture}
\caption{The bridge $B_{(a,b)}$ for (2,4)-semi-simple splicing rules}
\label{fig:bridge}
\end{center}
\end{figure}
\fi

Since there are finitely many states, there can be only finitely many $\varepsilon$-transitions that can be added at each iteration and therefore there exists some $k$ for which $\delta_k' = \delta_{k+1}'$.  

We will now show that for (2,4)-semi-simple splicing, we have $k = 2$. Observe that $\delta_1'$ consists of all $\varepsilon$-transitions that either go from states of the original DFA to a bridge or transitions that go from a bridge to states of the original DFA. Then the only transitions that are in $\delta_2'$ which are not already present in $\delta_1'$ are $\varepsilon$-transitions of the form $i_{(a,b)} \xrightarrow{\varepsilon} i_{(a,b')}$ where $(a,b), (a,b') \in M$ and $t_{(a,b)} \xrightarrow{\varepsilon} t_{(a',b)}$ for $(a,b),(a',b) \in M$. From this, it is clear that no other $\varepsilon$-transitions can be added and therefore we have $\delta_2' = \delta_3'$. Therefore, by construction we have $L(A_H) = L(H)$.

To see this, informally, we can consider the path in $L(A_H)$ of a word $w \in L(H)$ and suppose that $w = uv$ is the result of a splicing action on the marker $(a,b)$, with $b$ being the first symbol of $v$. Such a path would trace $u$ from the initial state until it reaches a state $q$ with an outgoing transition on $a$. By construction, there is a $\varepsilon$-transition from $q$ to the state $i_{(a,b)}$. From $i_{(a,b)}$, following the transition on $b$ takes us to state $t_{(a,b)}$, from which there are $\varepsilon$-transitions to all states with an incoming transition on $b$. Since $v$ begins with $w$, the rest of the path follows the path corresponding to the rest of the word $v$ to an accepting state. Since $\varepsilon$-transitions are added for states that are on an accepting path (that is, those states that are useful), this process can be repeated several times before reaching an accepting state.

\if 0
\begin{itemize}
\item $\delta'(q,a) = \{\delta(q,a)\}$ if $q \in Q$ and $a \in \Sigma$,
\item $\delta'(q,\varepsilon) = \{p_a\}$ if $q \in Q$, $a \in M$, and $\delta(q,a)$ is not the sink state,
\item $\delta'(p_a,b) = \{p_b'\}$ if $p_a \in Q_M$ and $(a,b) \in M$,
\item $\delta'(p_b',\varepsilon) = \im \delta_b$ if $p_b \in Q_M$ and $(a,b) \in M$ for some $a \in \Sigma$.
\end{itemize}

Since every rule of $H$ is of the form $(\varepsilon,a;\varepsilon,b)$, for each such rule, we add states and transitions for $p_a \xrightarrow{b} p_b$. Then add $\varepsilon$-transitions from states $q$ of $A$ to $p_a$ if $q$ has an outgoing transition on $a$ to a non-dead state of $A$. From each state $p_b$, add $\varepsilon$-transitions to each state of $A$ with an incoming transition on $b$. Recall that $\im \delta_b$ is the image of the transformation of $\delta$ induced by $b$, and therefore it is the set of states of $A$ with an incoming transition on $b$.
\fi

Finally, we can simplify this NFA by removing $\varepsilon$-transitions in the usual way to obtain an NFA $A_H' = (Q,\Sigma,\delta',q_0,F)$, where
\begin{equation*}
\delta'(q,b) = \begin{cases}
\im \delta_b & \text{if $(a,b) \in M$ and $\delta(q,a)$ is useful,} \\
\{\delta(q,b)\} & \text{otherwise.}
\end{cases}
\end{equation*}
Observe that by removing the $\varepsilon$-transitions, we also remove the states that were initially added earlier in the construction of $A_H$. Thus, the state set of $A_H'$ is exactly the state set of the DFA $A$ recognizing $L$.
\qed
\end{proof}

From this NFA construction, we can obtain a DFA via subset construction. This gives an upper bound of $2^n-1$ reachable states. This upper bound is the same for (1,3)-simple and (1,3)-semi-simple splicing systems and was shown to be tight~\cite{Kari2019a}. Since (1,3)-simple splicing systems and (2,4)-simple splicing systems are equivalent, we state without proof that the same result holds for (2,4)-simple splicing systems via the same lower bound witness. Therefore, this bound is reachable for (2,4)-semi-simple splicing systems via the same lower bound witness.

\begin{proposition}[\cite{Kari2019a}]
For $|\Sigma| \geq 3$ and $n \geq 3$, there exists a (2,4)-simple splicing system with a regular initial language $H = (\Sigma,M,L_n)$ with $|M| = 1$ where $L_n$ is a regular language with state complexity $n$ such that the minimal DFA for $L(H)$ requires at least $2^n - 1$ states.
\end{proposition}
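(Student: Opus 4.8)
The plan is to give an explicit witness family. Fix $n \geq 3$, set $Q = \{0,1,\dots,n-1\}$, and let $A_n = (Q,\{a,b,c\},\delta,0,\{0\})$ be the DFA in which $a$ acts as the identity on $Q$, $b$ acts as the cyclic shift $\delta(q,b) = (q+1)\bmod n$, and $c$ acts as the collapse $\delta(n-1,c) = n-2$, $\delta(q,c) = q$ for $q \neq n-1$. Every state is reached from $0$ by a power of $b$ and reaches $0 \in F$ by a power of $b$, so $A_n$ has no useless states, and distinct states $i \neq j$ are separated by $b^{(n-i)\bmod n}$; hence $L_n := L(A_n)$ has state complexity exactly $n$. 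With $M = \{(a,a)\}$, the triple $H = (\{a,b,c\},M,L_n)$ is a $(2,4)$-simple splicing system with $|M| = 1$, and it remains to show that the minimal DFA for $L(H)$ has at least $2^n-1$ states. (Since $(1,3)$- and $(2,4)$-simple splicing languages coincide, one could alternatively transport the known $(1,3)$-witness of~\cite{Kari2019a}, but a direct $(2,4)$-witness is cleaner.)

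First I would apply Proposition~\ref{prop:pixton-construction} to $H$. Since $\delta_a$ is a permutation, $\im\delta_a = Q$; and since every state of $A_n$ is useful, the exceptional case of the construction fires at \emph{every} state on the letter $a$. Thus the $n$-state NFA $A_H'$ satisfies $\delta'(q,a) = Q$ for all $q$, while $\delta'(q,b) = \{(q+1)\bmod n\}$ and $\delta'(\cdot,c)$ is the deterministic collapse map. In the subset automaton of $A_H'$, the letter $a$ maps every nonempty subset to $Q$, $b$ rotates subsets cyclically, and $c$ merges $n-1$ into $n-2$.

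It then remains to prove that all $2^n-1$ nonempty subsets of $Q$ are reachable and pairwise inequivalent. For reachability: $\{0\} \xrightarrow{a} Q$, and from $Q$ one obtains an arbitrary nonempty $S$ by deleting the elements of $Q \setminus S$ one at a time, each deletion realised by a word $b^{i} c b^{j}$ that rotates the element to be removed into position $n-1$ (its cyclic predecessor then sitting in position $n-2$), applies $c$, and rotates back. Carrying out the deletions in decreasing cyclic order starting just below a fixed element of $S$ ensures the collapse target $n-2$ is always occupied, so $c$ never reintroduces a deleted element; hence exactly the $2^n-1$ nonempty subsets are reachable (the empty set is not, since no letter empties a nonempty subset). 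For distinguishability: on $b^{j}$ a nonempty subset $S$ goes to $\{(q+j)\bmod n : q \in S\}$, which meets $F = \{0\}$ iff $(n-j)\bmod n \in S$; as $j$ runs over $0,\dots,n-1$ these $n$ acceptance tests recover the characteristic vector of $S$, so any two distinct nonempty subsets are separated. Therefore the minimal DFA for $L(H)$ has at least (in fact exactly) $2^n-1$ states.

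I expect the reachability step to be the main obstacle: one has to order the single-state deletions carefully so that the collapse $c$ never puts back an element it was meant to remove, and one must keep $A_n$ simultaneously minimal and free of useless states — the latter being precisely what guarantees that the splicing construction turns $a$ into the full ``reset to $Q$'' transformation on subsets. The hypothesis $|\Sigma| \geq 3$ is also used essentially: once the marker has collapsed the behaviour of $a$, a single surviving letter cannot generate enough transformations of subsets to reach all $2^n-1$ of them, so at least two more letters — here the rotation $b$ and the merge $c$ — are required.
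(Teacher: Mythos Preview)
Your proposal is correct. The witness $A_n$ is minimal (you verify reachability and distinguishability directly), the application of Proposition~\ref{prop:pixton-construction} is accurate --- since $a$ is the identity and every state is useful, $\delta'(q,a)=Q$ for all $q$ --- and the subset-automaton analysis goes through. The reachability argument via ordered deletions is sound: processing the elements of $Q\setminus S$ in decreasing cyclic order starting just below a fixed $s_0\in S$ guarantees that when $k$ is rotated to position $n-1$, its predecessor $k-1\pmod n$ has not yet been removed and so occupies position $n-2$; hence $c$ genuinely deletes. Distinguishability by the words $b^j$ is correct.

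The paper itself does \emph{not} prove this proposition: it simply invokes the equivalence of $(1,3)$- and $(2,4)$-simple splicing language classes (Mateescu et al.) and transports the $(1,3)$-witness from~\cite{Kari2019a} verbatim. You explicitly acknowledge that route and then choose instead to build a direct $(2,4)$-witness. What your approach buys is self-containment --- no appeal to the class equivalence or to the separate~\cite{Kari2019a} construction is needed, and the argument runs entirely through the $(2,4)$-NFA of Proposition~\ref{prop:pixton-construction}. What the paper's route buys is brevity: once the equivalence is on record, nothing further needs to be said. Both are fine; yours is more informative for a reader who has not seen~\cite{Kari2019a}.
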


It was also shown in~\cite{Kari2019a} that if the initial language is finite, this upper bound is not reachable for (1,3)-simple and (1,3)-semi-simple splicing systems. This result holds for all variants of semi-simple splicing systems and the proof is exactly the same as in~\cite{Kari2019a}. We state the result for semi-simple splicing systems and include the proof for completeness.
\begin{proposition}[\cite{Kari2019a}] \label{prop:finite-upper}
Let $H = (\Sigma, M, I)$ be a semi-simple splicing system with a finite initial language where $I$ is a finite language recognized by a DFA $A$ with $n$ states. Then a DFA recognizing $L(H)$ requires at most $2^{n-2} + 1$ states.
\end{proposition}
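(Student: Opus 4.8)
The plan is to feed the DFA for $I$ into the NFA construction of Proposition~\ref{prop:pixton-construction} (for the general semi-simple case, the analogous construction of \cite{Kari2019a}) and then determinize, exploiting two structural features that are specific to finite languages. Throughout, write $A = (Q,\Sigma,\delta,q_0,F)$ for a DFA recognizing $I$ and $A_H'$ for the resulting NFA, which has the same state set $Q$ and satisfies $L(A_H') = L(H)$.

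First I would normalize $A$. Unreachable states can be deleted and all useless states merged into a single sink state $d$ without changing $I$; since this only decreases $|Q|$, it suffices to prove the bound for the normalized automaton. (If $I = \emptyset$ then $L(H) = \emptyset$ and one state suffices, so we may assume $I \neq \emptyset$, whence $q_0$ is useful and $q_0 \neq d$.) Two observations are then needed. (i) Because $I$ is finite and $q_0$ is useful, $q_0$ has no incoming transition in $A$: an incoming transition would create a cycle through $q_0$ and hence infinitely many words in $I$. The construction of $A_H'$ only adds transitions whose targets are original transition targets $\delta(q,b)$ or images $\im \delta_b$, none of which equals $q_0$; hence $q_0$ has no incoming transition in $A_H'$ either. (ii) The sink $d$ never triggers a splicing rule, since the trigger condition requires some state reachable from $d$ to be useful, which fails; so $d$ keeps only its original self-loops and is a non-final absorbing state of $A_H'$, hence useless there too.

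Next I would delete $d$ and every transition touching it from $A_H'$. Since $d$ is non-final and absorbing, no accepting run of $A_H'$ ever visits $d$, so the resulting NFA still recognizes $L(H)$. Determinizing this $d$-free NFA by the subset construction produces a DFA $\mathcal D$ with $L(\mathcal D) = L(H)$ whose states are subsets of $Q \setminus \{d\}$. By observation (i), any reachable state of $\mathcal D$ that contains $q_0$ must be the start state $\{q_0\}$ itself: a later occurrence of $q_0$ inside a subset would require an incoming transition to $q_0$ in $A_H'$. Hence every reachable state of $\mathcal D$ is either $\{q_0\}$ or a subset of $Q \setminus \{q_0,d\}$, for a total of at most $1 + 2^{|Q|-2} = 2^{n-2}+1$, and therefore the minimal DFA for $L(H)$ has at most $2^{n-2}+1$ states.

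The one point needing care — what I would regard as the main obstacle — is confirming that properties (i) and (ii) survive the NFA construction for \emph{each} semi-simple variant $(1,3)$, $(1,4)$, $(2,3)$, $(2,4)$, not just $(2,4)$; this is immediate from the shape of the bridges and of the transitions added after $\varepsilon$-removal, but it should be checked against the construction rather than merely asserted. Everything else is a routine subset-construction count, together with the trivial degenerate case $n \le 1$.
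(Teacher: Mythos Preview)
Your proposal is correct and follows essentially the same approach as the paper: both arguments use that $q_0$ has no incoming transitions (so $\{q_0\}$ is the only reachable subset containing it) and that the sink state contributes nothing (the paper merges $P$ with $P\cup\{q_\emptyset\}$ by indistinguishability, you delete $d$ from the NFA before determinizing---these are equivalent bookkeeping choices yielding the same count $2^{n-2}+1$). Your version is in fact slightly more careful than the paper's, since you explicitly verify that the NFA construction adds no incoming transitions to $q_0$ and leaves $d$ absorbing; the paper takes these for granted.
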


\begin{proof}
Let $A = (Q,\Sigma,\delta,q_0,F)$ and let $A_H$ be the DFA recognizing $L(H)$ obtained via the construction from Proposition~\ref{prop:pixton-construction}.  We will show that not all $2^n - 1$ non-empty subsets of $Q$ are reachable in $A_H$. First, since $I$ is a finite language, its DFA $A$ is acyclic. Therefore, $q_0$, the initial state of $A$, has no incoming transitions and thus the only reachable subset containing $q_0$ is $\{q_0\}$. Secondly, since $I$ is finite, $A$ must contain a sink state, which we will call $q_\emptyset$. Note that for any subset $P \subseteq Q$, we have that $P$ and $P \cup \{q_\emptyset\}$ are indistinguishable and can be merged together. This gives us a total of $2^{n-2} - 1 + 2$ states. 
\qed
\end{proof}

We will show that the bound of Proposition~\ref{prop:finite-upper} is reachable for (2,4)-semi-simple splicing systems.
\begin{lemma} \label{lem:finite-lower}
There exists a (2,4)-semi-simple splicing system with a finite initial language $H = (\Sigma,M,I_n)$ where $I_n$ is a finite language with state complexity $n$ such that a DFA recognizing $L(H)$ requires $2^{n-2} + 1$ states.
\end{lemma}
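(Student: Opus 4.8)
The plan is to exhibit an explicit witness. I would fix an alphabet $\Sigma$, choose a marker set $M$ (plausibly a single marker $(a,b)$ together with one or two auxiliary letters), and design a finite language $I_n$ whose minimal DFA $A=(Q,\Sigma,\delta,q_0,F)$ has exactly $n$ states. Since $I_n$ is finite, $A$ is acyclic; in particular $q_0$ has no incoming transitions and $A$ has a sink state $q_\emptyset$, so writing $R := Q\setminus\{q_0,q_\emptyset\}$ we have $|R| = n-2$. By Proposition~\ref{prop:finite-upper} it then suffices to make all $2^{n-2}+1$ potentially distinguishable subset-classes reachable in the determinization of the NFA $A_H'$ produced by Proposition~\ref{prop:pixton-construction}, namely $\{q_0\}$, the sink class, and all $2^{n-2}-1$ nonempty subsets of $R$ (identifying $P$ with $P\cup\{q_\emptyset\}$), and then to verify these are pairwise inequivalent.

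Recall that $A_H'$ has the same state set $Q$ as $A$ and differs from $A$ only on the letter $b$: $\delta'(q,b)=\im\delta_b$ whenever $\delta(q,a)$ is defined and useful, and $\delta'(q,b)=\{\delta(q,b)\}$ otherwise. So the design target is to choose the transitions of $A$, the set $\im\delta_b$, and the set of states possessing a useful $a$-transition so that (i) reading $b$ from the start configuration reaches the full set $R=\im\delta_b$, and (ii) using the remaining letters together with the mixed behaviour of the $b$-transition — part of the current subset being ``blown up'' to $\im\delta_b$ and the rest shifted along $A$'s ordinary transitions — one can reach every subset of $R$. I would then prove reachability of each $P\subseteq R$ by an explicit induction (for instance on $|R\setminus P|$, or on the topological position of the least element of $P$), exhibiting for each $P$ a word $w_P$ with $\{q_0\}\xrightarrow{w_P}P$ in the power-set automaton.

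The main obstacle is precisely that the finiteness of $I_n$ forces $A$ to be acyclic, so no letter can induce a permutation — or even a self-loop — on $R$ through $A$'s own transitions; every transformation of subsets coming from the ``plain'' letters is a strictly order-increasing map, which by itself collapses subsets rather than mixing them, unlike the cyclic-permutation gadgets used in the infinite-initial-language witness. All of the combinatorial richness needed to reach $2^{n-2}$ distinct subsets must therefore be supplied by the single splice-induced transition $\delta'(\cdot,b)=\im\delta_b$, and engineering $A$ — keeping it acyclic, minimal, of exactly $n$ states, with a finely enough controlled useful-$a$-transition set — so that this one device suffices is the technical heart of the argument (and the reason the witness alphabet may have to be larger than in the infinite-language case). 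Once reachability is established, the remaining step is routine: any two of the $2^{n-2}+1$ subset-classes are separated by a word read in $A$ that is accepted from exactly one of them, using the final states of $A$ and the fact that $b$ need not be re-read after the relevant prefix; combined with the upper bound of Proposition~\ref{prop:finite-upper}, this shows the minimal DFA for $L(H)$ has exactly $2^{n-2}+1$ states.
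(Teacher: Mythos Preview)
Your plan has a genuine gap: the single-marker picture you set up cannot reach the required number of subsets, and the ``mixed behaviour'' you rely on does not exist.

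Concretely, with one marker $(a,b)$ the NFA $A_H'$ satisfies, for every $q\in Q$,
\[
\delta'(q,b)=
\begin{cases}
\im\delta_b & \text{if }\delta(q,a)\text{ is useful},\\
\{\delta(q,b)\} & \text{otherwise.}
\end{cases}
\]
But $\delta(q,b)\in\im\delta_b$ by definition, so for any subset $P$ in which the splice fires (i.e.\ some $q\in P$ has $\delta(q,a)$ useful) the image under $b$ is \emph{exactly} $\im\delta_b$; there is no ``rest shifted along $A$'s ordinary transitions'' that lands outside $\im\delta_b$. Thus reading $b$ from any such $P$ yields one fixed set. Now take the state $q_1\in R$ that is reachable only from $q_0$ (which exists since $A$ is acyclic). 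Any reachable subset $P$ with $q_1\in P$ and $|P|\ge 2$ must arise from a splice step, hence $P=\im\delta_b$. You therefore reach at most two subsets containing $q_1$ (namely $\{q_1\}$ and $\im\delta_b$), whereas to hit $2^{n-2}+1$ you need all $2^{n-3}$ subsets of $R$ that contain $q_1$. No choice of auxiliary letters repairs this, because auxiliary letters act by the acyclic $\delta$ and can never reintroduce~$q_1$.

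The paper resolves this by abandoning the single-marker hypothesis entirely: it takes $\Gamma_n=\{a_S\mid S\subseteq\{2,\dots,n-2\}\}$ and $M_n=\{b\}\times\Gamma_n$, engineering the transitions so that $\im\delta_{a_S}=\{1\}\cup S$. Reading $a_S$ from $\{0\}$ then directly produces $\{1\}\cup S$, after which a single application of the (ordinary) letter $b$ shifts off the $1$ to reach every subset of $\{2,\dots,n-2\}$. In fact the very next lemma in the paper shows this blow-up is unavoidable: any witness meeting $2^{n-2}+1$ must have $|\Sigma|\ge 2^{n-3}$, for exactly the reason above. So your parenthetical ``the witness alphabet may have to be larger'' is the crux, not a side remark; you need one splice-target per subset containing $q_1$, hence exponentially many second components in $M$.
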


\begin{proof}
We take $\Sigma = \Sigma_n$ and $M = M_n$ and construct the DFA $A_n = (Q_n,\Sigma_n,\delta_n,0,F_n)$ recognizing $I_n$, where $Q_n = \{0,\dots,n-1\}$, $\Sigma_n = \{b\} \cup \Gamma_n$ where $\Gamma_n = \{a_S \mid S \subseteq \{2,\dots,n-2\} \}$, and $F_n = \{n-2\}$. Then we define $\delta_n$ by
\begin{itemize}
\item $\delta_n(i,a_S) = \min\{j \in S \mid i < j \leq n-2\}$ for $1 \leq i \leq n-3$ and $S \subseteq \{2, \dots, n-2\}$,
\item $\delta_n(0,a_S) = 1$ for all $a_S \in \Gamma_n$,
\item $\delta_n(i,b) = i+1$ for $0 \leq i \leq n-3$,
\item $\delta_n(i,a) = n-1$ for all $a \in \Sigma_n$ and $i \in \{n-2,n-1\}$.
\end{itemize} 
Then we consider the (2,4)-semi-simple splicing system $H = (\Sigma_n, M_n, A_n)$ with $M_n = \{b\} \times \Gamma_n$. Consider the NFA recognizing $L(H)$ obtained via the construction from Proposition~\ref{prop:pixton-construction} and let $A_n'$ be the DFA that results from applying the subset construction. 

Since $(b,a_S) \in M$ and $\delta(i,b) \neq n-1$ for all $i < n-2$, by the definition of $A_n$, we can reach any subset $S \cup \{1\}$ with $S \subseteq \{2,\dots,n-2\}$ from the initial state $\{0\}$ via the symbol $a_S$. We will show that from each of these states, we can reach a state $T = \{i_1, \dots, i_k\}$ where $2 \leq i_1 < \cdots < i_k \leq n-2$. First, if $i_1 = 2$, then we let $T' = \{i_2 - 1, \dots, i_k - 1\}$ and the subset $T$ is reachable from the initial state via the word $a_{T'} b$. Otherwise, if $i_1 > 2$, then the subset $T$ is reachable from the initial state via the word $a_{T' \cup \{i_1 - 1\}} b$.

To show that each of these states is pairwise distinguishable, first we note that $\{0\}$ is distinguishable from every other state. Now suppose that we have two subsets $S, S' \subseteq \{1,\dots,n-2\}$ such that $S \neq S'$. Without loss of generality, there is a state $t \in S$ such that $t \not \in S'$. Then these two states can be distinguished by the word $b^{n-2-t}$. This gives us $2^{n-2} - 1$ states.

For the last two states, we see that $\{0\}$ is reached on the word $\varepsilon$ and it is clearly distinguishable from every other state. The sink state $\{n-1\}$ is reachable via the word $b^{n-1}$ and is distinguishable since it is the sole sink state of the machine. Thus, in total $A_n'$ requires $2^{n-2} + 1$ states.
\qed
\end{proof}

Here, our lower bound example requires an alphabet that grows exponentially with the number of states. We will show in the following that this is necessary. This is in contrast to the lower bound witness for (1,3)-semi-simple systems from~\cite{Kari2019a}, which requires only three letters. We also note that the initial language used for this witness is the same as that for (1,3)-simple splicing systems from~\cite{Kari2019a}. From this, we observe that the choice of the visible sites for the splicing rules (i.e. (1,3) vs. (2,4)) makes a difference in the state complexity. We will see other examples of this later as we consider semi-simple splicing systems with other rule variants.
\begin{lemma}
Let $H = (\Sigma,M,I)$ be a (2,4)-semi-simple splicing system with a finite initial language where $I$ is a finite language with state complexity $n$. If the DFA recognizing $L(H)$ requires $2^{n-2} + 1$ states, then $|\Sigma| \geq 2^{n-3}$.
\end{lemma}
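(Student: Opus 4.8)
The plan is to show that in order to reach all $2^{n-2}-1$ subsets of the "middle" states of the DFA $A$ for $I$, the NFA $A_H'$ from Proposition~\ref{prop:pixton-construction} must have enough distinct transition behaviours to generate them, and the number of such behaviours on a single letter is limited by $|\Sigma|$. First I would fix notation: let $A=(Q,\Sigma,\delta,q_0,F)$ be the minimal DFA for $I$ with $|Q|=n$; since $I$ is finite, $A$ is acyclic, $q_0$ has no incoming transition, and there is a unique sink $q_\emptyset$. By the argument of Proposition~\ref{prop:finite-upper}, the only reachable subsets in $A_H'$ (after subset construction) that matter are subsets of $Q'=Q\setminus\{q_0,q_\emptyset\}$, which has $n-2$ elements, and we are assuming all $2^{n-2}-1$ nonempty such subsets are reachable.

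Next I would examine how subsets grow under the transitions of the $\varepsilon$-free NFA $A_H'$. Recall that $\delta'(q,b)=\im\delta_b$ whenever $(a,b)\in M$ and $\delta(q,a)$ is useful, and $\delta'(q,b)=\{\delta(q,b)\}$ otherwise. So for a set $P\subseteq Q$ and a letter $c$, the image $P\xrightarrow{c}P'$ is a union: it contains $\im\delta_c$ if some $q\in P$ has $\delta(q,a)$ useful for a marker $(a,c)$, together with $\{\delta(q,c):q\in P\}$. The crucial observation is that $\im\delta_c$ is a fixed set depending only on $c$ (not on $P$), and once it is "switched on" by $P$, the resulting set is $\im\delta_c\cup\bigcup_{q\in P}\{\delta(q,c)\}$. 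Because $A$ is acyclic and deterministic, the "non-splicing" part $\{\delta(q,c):q\in P\}$ is severely constrained — for instance it has size at most $|P|$, and on an acyclic DFA repeated application strictly decreases a topological-level parameter. I would make precise that a reachable nonempty subset $T\subseteq Q'$ of size $\geq 2$ must, at the last step that increased its cardinality, have been obtained as $\im\delta_c\cup(\text{something})$ for some letter $c$ that is the right component of a marker; so $T\supseteq \im\delta_c\cap Q'$ essentially, or more carefully, $T$ is determined up to the acyclic "shift" part by the choice of $c$. The key counting step is then: every reachable subset of $Q'$ of size $\ge 2$ arises from one of at most $|\Sigma|$ "splicing images" $\im\delta_c$, each then modified by a bounded-complexity acyclic process; tracking how many distinct sets this process can yield gives an upper bound on $2^{n-2}-1$ in terms of $|\Sigma|$ and $n$, and comparing forces $|\Sigma|\geq 2^{n-3}$.

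More concretely, I expect the cleanest route is: (i) show that any reachable $P\subseteq Q'$ with $|P|\ge 2$ must contain $\im\delta_b\cap Q'$ for the letter $b$ read on the last transition into $P$, since that transition fired a splicing rule (otherwise $|P|$ would not have increased beyond the acyclic shift, and the acyclic shift cannot build up arbitrary sets from a singleton); (ii) conclude that each such $P$ is of the form $(\im\delta_b\cap Q')\cup R$ where $R$ is the image under $\delta_b$ of the predecessor subset, hence $R$ lies among the subsets reachable "one level earlier"; (iii) argue by induction on the (bounded) acyclic depth that the number of reachable subsets of $Q'$ is at most roughly $|\Sigma|\cdot 2^{\,\text{(depth)}}$ times a small factor, and since the depth is $O(\log)$ smaller than $n$ it is not enough — so in fact one needs the finer bound that the "free" choices all come from choosing which letter $b$ to splice on and which subset of the image of $\delta_b$ to keep, giving at most $|\Sigma|\cdot 2^{\,|\im\delta_b|}$; (iv) since $|\im\delta_b|\le$ (number of states reachable by $b$) and the acyclic structure forces these images to be small, push through the inequality $|\Sigma|\cdot(\text{small}) \ge 2^{n-2}-1$ to extract $|\Sigma|\ge 2^{n-3}$.

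The main obstacle, and where I would spend the most care, is step (iii)–(iv): controlling precisely how the "non-splicing shift" part $\bigcup_{q\in P}\{\delta(q,c)\}$ interacts with the fixed splicing image $\im\delta_c$ across several rounds, so that one genuinely pins the number of reachable large subsets to something like $|\Sigma|$ times a factor that is at most $2^{n-3}/( \text{growth from shifts})$. In the witness of Lemma~\ref{lem:finite-lower}, each letter $a_S$ is tailored so that $\im(\delta_{a_S})$ (after the splicing substitution on $b$) equals exactly $S\cup\{1\}$, which is why a distinct letter is needed for essentially each subset; the proof must show this is unavoidable, i.e. that two subsets differing in the "splicing core" cannot both be produced without two distinct letters playing the role of the spliced symbol, and that the number of distinct splicing cores is at most $|\Sigma|$. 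Establishing that the shift-part cannot compensate — cannot, from few letters, manufacture exponentially many distinct cores — is the technical heart, and I would handle it by a direct argument on the acyclic DFA $A$: the shift part strictly reduces the maximum topological level of the set it is applied to, so after at most $n$ steps it contributes nothing new, leaving the letter-choice as the only source of exponential variety.
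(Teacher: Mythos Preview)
Your proposal is headed in a plausible direction but is far more involved than necessary, and step~(i) as stated is not correct. It is \emph{not} true that every reachable $P\subseteq Q'$ with $|P|\ge 2$ must have been entered via a splicing transition: a non-splicing step $P\xrightarrow{c}\{\delta(q,c):q\in P\}$ can perfectly well preserve $|P|$, so a large set can be passed along for several steps after one splicing event. Trying to control the interaction of these ``shifts'' with the splicing images over many rounds (your steps (iii)--(iv)) is exactly the complication the paper avoids.

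The paper's proof is a two-line argument that hinges on a sharper version of an observation you almost make. First, because $\{\delta(q,b):q\in P\}\subseteq\im\delta_b$ always, whenever a splicing transition on $b$ fires from \emph{any} state of $P$ the resulting set is \emph{exactly} $\im\delta_b$, not merely a superset of it. So a splicing step on $b$ produces one specific set, determined by $b$ alone. Second --- and this is the idea you are missing --- rather than analysing all large subsets, fix a state $q_1\in Q'$ whose only incoming transitions in $A$ are from $q_0$ (such a state exists because $A$ is acyclic). Then any reachable $P\subseteq Q'$ with $q_1\in P$ and $|P|\ge 2$ \emph{cannot} have been reached by a non-splicing step (that would require $q_0$ in the predecessor, but the only reachable set containing $q_0$ is $\{q_0\}$). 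Hence every such $P$ equals $\im\delta_b$ for some letter $b$, and distinct such $P$'s require distinct letters. There are $2^{n-3}$ subsets of $Q'$ containing $q_1$, and to meet the bound $2^{n-2}+1$ all of them must be reachable, so $|\Sigma|\ge 2^{n-3}$.

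In short: replace your global induction over acyclic depth by the single local observation about $q_1$, and sharpen ``contains $\im\delta_b$'' to ``equals $\im\delta_b$''. That collapses the whole argument to a direct count.
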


\begin{proof}
Let $A = (Q,\Sigma,\delta,q_0,F)$ be a DFA with $n$ states recognizing $I$. Since $I$ is a finite language, there exists at least one state of $A$, say $q_1$ that is reachable only from the initial state $q_0$. Let $A'$ be the DFA obtained via applying the subset construction to the NFA for $L(H)$ obtained via the construction of Proposition~\ref{prop:pixton-construction}. Then any subset $P \subseteq Q$ with $q_1 \in P$ and $|P| \geq 2$ can only be reached in $A'$ via a transition on a symbol $b$ with $(a,b) \in M$. However, there can be up to $2^{n-3}$ subsets of $Q$ that contain $q_1$. Therefore, $\Sigma$ must contain at least $2^{n-3}$ symbols.
\qed
\end{proof}

Together, Proposition~\ref{prop:finite-upper} and Lemma~\ref{lem:finite-lower} give the following result.
\begin{theorem}
Let $H = (\Sigma,M,I)$ be a (2,4)-semi-simple splicing system with a finite initial language, where $I$ is a finite language with state complexity $n$ and $M \subseteq \Sigma \times \Sigma$. Then the state complexity of $L(H)$ is at most $2^{n-2} + 1$ and this bound can be reached in the worst case.
\end{theorem}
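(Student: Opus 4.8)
The plan is to assemble the theorem directly from the two bounds that have already been established, since the statement is precisely the conjunction of Proposition~\ref{prop:finite-upper} and Lemma~\ref{lem:finite-lower}. For the upper bound I would argue as follows: a (2,4)-semi-simple splicing system is in particular a semi-simple splicing system with a finite initial language, so the construction of Proposition~\ref{prop:pixton-construction} applies and, after the subset construction, yields a DFA for $L(H)$ whose states are subsets of the $n$-state set $Q$ of the minimal DFA $A$ for $I$. Because $I$ is finite, $A$ is acyclic: the initial state $q_0$ has no incoming transition, so the only reachable subset containing $q_0$ is $\{q_0\}$; and $A$ has a sink state $q_\emptyset$ which can be merged into every subset without affecting acceptance. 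This is exactly the counting argument of Proposition~\ref{prop:finite-upper}, giving at most $2^{n-2}-1+2 = 2^{n-2}+1$ states. So I would simply invoke Proposition~\ref{prop:finite-upper} here rather than repeat the bookkeeping.

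For the matching lower bound, I would invoke Lemma~\ref{lem:finite-lower}, which for each $n$ supplies an explicit (2,4)-semi-simple splicing system $H=(\Sigma_n,M_n,I_n)$ with $\Sigma_n=\{b\}\cup\Gamma_n$, $M_n=\{b\}\times\Gamma_n\subseteq\Sigma_n\times\Sigma_n$, and $I_n$ finite (since the DFA $A_n$ recognizing it is acyclic), such that the minimal DFA for $L(H)$ has exactly $2^{n-2}+1$ states. The reachability of all subsets $S\cup\{1\}$ and then all subsets $T\subseteq\{2,\dots,n-2\}$, together with the distinguishing words $b^{n-2-t}$ and the roles of $\{0\}$ and the sink $\{n-1\}$, are already carried out in that proof, so the witness meets the theorem's hypotheses and attains the bound.

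There is no genuine new obstacle in this theorem — it is a summary statement — but the two points one must be sure of are: (i) that the witness of Lemma~\ref{lem:finite-lower} really has state complexity exactly $n$, which follows from the pairwise-distinguishability argument there showing $A_n$ is minimal; and (ii) that Proposition~\ref{prop:finite-upper}'s proof, phrased via the Pixton-style construction of Proposition~\ref{prop:pixton-construction}, indeed covers the (2,4) rule shape (it does, as that construction was stated for (2,4)-semi-simple systems). Granting these, the theorem follows immediately: the upper bound is Proposition~\ref{prop:finite-upper}, the tightness is Lemma~\ref{lem:finite-lower}, and nothing further needs to be proved. The only real work — designing the exponential-alphabet witness and verifying reachability and distinguishability of all $2^{n-2}+1$ states — has already been done in Lemma~\ref{lem:finite-lower}.
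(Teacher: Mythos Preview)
Your proposal is correct and matches the paper's own treatment: the theorem is stated as an immediate corollary of Proposition~\ref{prop:finite-upper} (upper bound) and Lemma~\ref{lem:finite-lower} (matching witness), with no additional argument. Your extra checks that the witness $A_n$ is minimal and that Proposition~\ref{prop:finite-upper} applies to the $(2,4)$ variant are reasonable due-diligence points but are not spelled out in the paper either.
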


\section{State complexity of (2,3)-semi-simple splicing systems} \label{sec:23}
We will now consider the state complexity of (2,3)-semi-simple splicing systems. Recall that a (2,3)-semi-simple splicing system is a splicing system with rules of the form $(\varepsilon,a;b,\varepsilon)$ for $a, b \in \Sigma$. We can follow the same construction from Proposition~\ref{prop:pixton-construction} with slight modifications to account for $(2,3)$-semi-simple splicing rules to obtain an NFA for a language generated by a (2,3)-semi-simple splicing system with the same number of states as the DFA for the initial language of the splicing system.

\begin{proposition} \label{prop:23-construction}
Let $H = (\Sigma,M,L)$ be a (2,3)-semi-simple splicing system with a regular initial language and let $L$ be recognized by a DFA with $n$ states. Then there exists an NFA $A_H'$ with $n$ states such that $L(A_H') = L(H)$.
\end{proposition}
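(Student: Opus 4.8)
The plan is to run the construction of Proposition~\ref{prop:pixton-construction} verbatim, changing only the bridge automaton to reflect the $(2,3)$-recombinant. For a marker $(a,b)$ the rule is now $(\varepsilon,a;b,\varepsilon)$, so splicing $x=x_1ax_2$ with $y=y_1by_2$ produces $z=x_1y_2$: neither $a$ nor $b$ survives into $z$. Accordingly I would let the bridge $B_{(a,b)}$ accept the empty word, i.e.\ replace the edge $i_{(a,b)}\xrightarrow{b}t_{(a,b)}$ of Figure~\ref{fig:bridge} by $i_{(a,b)}\xrightarrow{\varepsilon}t_{(a,b)}$. Everything else in the construction of $A_H=(Q',\Sigma,\delta',q_0,F)$ is kept: the state set $Q'=Q\cup\bigcup_{(a,b)\in M}\{i_{(a,b)},t_{(a,b)}\}$, and the same iterative addition of $\varepsilon$-transitions $q\xrightarrow{\varepsilon}i_{(a,b)}$ (when $q$ is not a bridge target and $\delta(q,a)$ is defined and useful) and $t_{(a,b)}\xrightarrow{\varepsilon}q$ (when $q$ is not a bridge source and $q\in\im\delta_b$). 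Since a $(2,3)$-bridge is the Head--Pixton bridge specialised to $\alpha_1=\alpha_4=\varepsilon$, the argument of Proposition~\ref{prop:pixton-construction} carries over to show the iteration stabilises and that $L(A_H)=L(H)$.

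It then remains to remove the $\varepsilon$-transitions and count states. Assume (without loss of generality, by trimming, which does not raise the state count) that $A$ has only reachable states, and apply the standard $\varepsilon$-transition removal. The point is that for $(2,3)$-rules every letter-carrying transition of $A_H$ is already a transition of $A$ --- the bridges contribute only $\varepsilon$-edges --- so the bridge states $i_{(a,b)},t_{(a,b)}$ carry nothing but $\varepsilon$-edges and are deleted, leaving an NFA $A_H'$ with state set exactly $Q$, hence $n$ states. Explicitly, writing $E(q)\subseteq Q$ for the least set with $q\in E(q)$ closed under ``$p\in E(q)$, $(a,b)\in M$, $\delta(p,a)$ useful $\Rightarrow\im\delta_b\subseteq E(q)$'' (the $\varepsilon$-closure of $q$ inside $Q$), one obtains $A_H'=(Q,\Sigma,\delta',q_0,F')$ with $\delta'(q,c)=\bigcup_{p\in E(q)}E(\delta(p,c))$ and $F'=\{q\mid E(q)\cap F\neq\emptyset\}$, and $L(A_H')=L(A_H)=L(H)$. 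This is visibly different from the $(2,4)$ case, where the bridge carried the letter $b$ and removal produced the transitions $\delta'(q,b)=\im\delta_b$; here removal adds no letter transitions, only a non-trivial $\varepsilon$-closure spread over $Q$.

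A hands-on check that $L(A_H)=L(H)$, in case one does not wish to invoke Head--Pixton, mirrors the discussion after Proposition~\ref{prop:pixton-construction}: for $L(H)\subseteq L(A_H)$, each splice producing $z=x_1y_2$ from $x=x_1ax_2,\,y=y_1by_2\in\sigma^i(L)$ with $(a,b)\in M$ is realised by running $x_1$ to a state $q$ that (since $x$ is accepted) has a useful $a$-successor, so $q\xrightarrow{\varepsilon}i_{(a,b)}$ is present; the accepting run on $y$ is in some $s\in\im\delta_b$ just after $y_1b$ (every letter-edge of $A_H$ being an $A$-edge), so $t_{(a,b)}\xrightarrow{\varepsilon}s$ is present; chaining $q\xrightarrow{\varepsilon}i_{(a,b)}\xrightarrow{\varepsilon}t_{(a,b)}\xrightarrow{\varepsilon}s$ then running $y_2$ into $F$ accepts $z$, the case $z=\varepsilon$ being $s\in F$. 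For the converse, an accepting run of $A_H'$ alternates $A$-moves with teleports from a state with a useful $a$-successor into $\im\delta_b$; reachability of $A$'s states lets each teleport be unwound into a $(2,3)$-splice, so the accepted word lies in $\sigma^*(L)=L(H)$, teleports into useless states being irrelevant.

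The one genuinely new point, with no analogue in the $(2,4)$ proof, is the $\varepsilon$-removal bookkeeping. Because a $(2,3)$-bridge is a pure $\varepsilon$-jump, the $\varepsilon$-closures really do propagate among the states of $A$ and can chain across several markers (a teleport into $\im\delta_b$ may, from there, trigger a teleport via a different marker $(a',b')$ into $\im\delta_{b'}$, and so on), so the substance of the proof is to pin down the operator $E$ --- hence $\delta'$ and $F'$ --- correctly. This includes the empty-word case, where the chaining first bites: $\varepsilon\in L(A_H')$ iff $q_0\in F'$ iff $E(q_0)\cap F\neq\emptyset$, and this must be shown to coincide with $\varepsilon\in L(H)$ even though the latter may require several splices. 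Once that is settled, $A_H'$ has $n$ states and recognises $L(H)$, as claimed.
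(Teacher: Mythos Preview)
Your approach is essentially the same as the paper's: modify the bridge $B_{(a,b)}$ so that it carries only the $\varepsilon$-edge $i_{(a,b)}\xrightarrow{\varepsilon}t_{(a,b)}$, run the Head--Pixton construction of Proposition~\ref{prop:pixton-construction}, and observe that after $\varepsilon$-removal the bridge states disappear and the state set collapses back to $Q$. The paper additionally notes the sharper fact that the iterative process stabilises after a single step (since every non-$\varepsilon$ transition lies inside $A$, the images $\im(\delta_k')_b$ never grow and no new $a$-successors appear at bridge states), whereas you are more explicit than the paper about the $\varepsilon$-closure operator $E$ and the possibility of chaining several markers during removal --- a point the paper glosses over with ``it is clear''.
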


\begin{proof}
Let $H = (\Sigma,M,L)$ be a (2,3)-semi-simple splicing system with a regular initial language, where $M \subseteq \Sigma \times \Sigma$ and $L \subseteq \Sigma^*$, and let $A = (Q,\Sigma,\delta,q_0,F)$ be a DFA that recognizes $L$. We will define the NFA $A_H = (Q',\Sigma,\delta',q_0,F)$ by following the construction of Proposition~\ref{prop:pixton-construction} with a modification to the definition of the bridges $B_{(a,b)}$.

For (2,3)-semi-simple splicing, for each marker $(a,b) \in M$, we define the bridge $B_{(a,b)}$ as an automaton with initial state $i_{(a,b)}$, final state $t_{(a,b)}$, and a transition $i_{(a,b)} \xrightarrow{\varepsilon} t_{(a,b)}$. The bridge $B_{(a,b)}$ for (2,3)-semi-simple splicing rules is shown in Figure~\ref{fig:bridge23}.

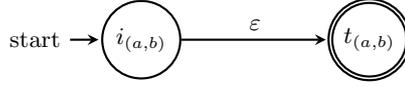
\begin{figure}
\begin{center}
\begin{tikzpicture}[shorten >=1pt,on grid,node distance=3cm,>=stealth,thick,auto]
    \node[state,initial] (0) {$i_{(a,b)}$};
    \node[state,accepting] (1) [right of=0]{$t_{(a,b)}$};
    \path[->]
        (0) edge  node {$\varepsilon$} (1)
        ;
\end{tikzpicture}
\caption{The bridge $B_{(a,b)}$ for (2,3)-semi-simple splicing rules}
\label{fig:bridge23}
\end{center}
\end{figure}

We define the transition function $\delta'$ in the same way as in the construction of Proposition~\ref{prop:pixton-construction} and note that for (2,3)-semi-simple splicing systems, only one iteration is necessary. That is, $\delta' = \delta_1' = \delta_2'$, where for $a,b \in \Sigma$ and $q \in Q'$,
\begin{align*}
\delta_0' &= \delta \cup \bigcup_{(a,b) \in M} \left\{ i_{(a,b)} \xrightarrow{\varepsilon} t_{(a,b)} \right\}\text{, and,} \\
\delta_1' &= \delta_0' \cup \bigcup_{(a,b) \in M} (\{q \xrightarrow{\varepsilon} i_{(a,b)} \mid \text{$\delta_0'(q,a)$ is useful}\} \cup \{t_{(a,b)} \xrightarrow{\varepsilon} \im (\delta_0')_b\}).
\end{align*}

\if 0
\begin{itemize}
\item $\delta'(q,a) = \{\delta(q,a)\}$ for $q \in Q$ and $a \in \Sigma$,
\item $\delta'(q,\varepsilon) = \{q_a \mid \text{$(a,b) \in M$ for some $b \in \Sigma$ and $\delta(q,a)$ is not the sink state} \}$ if $q \in Q$ is not the sink state,
\item $\delta'(q_a,\varepsilon) = \im \delta_b$ for $(a,b) \in M$.
\end{itemize}
Similar to the NFA construction for (2,4)-semi-simple splicing systems, there are states $q_a$ for $a \in \Sigma$ such that there exists $b \in \Sigma$ and $(a,b) \in M$ from which there are transitions to each state in $\im \delta_b$. Unlike the construction for (2,4)-semi-simple splicing systems, the corresponding incoming and outgoing transitions are $\varepsilon$-transitions. 
\fi

Since the only transitions that are not $\varepsilon$-transitions are between states of the original NFA, it is clear that after the first iteration of additional $\varepsilon$-transition, no further $\varepsilon$-transitions may be added. Therefore, by construction, we have $L(A_H) = L(A_H')$.
Furthermore, it is clear that removing $\varepsilon$-transitions from $A_H$ will result in an NFA $A_H'$ that has a state set $Q$, the state set of $A$.
\end{proof}

From Proposition~\ref{prop:23-construction}, we get an upper bound of $2^n - 1$ reachable states via the subset construction. However, we will show that this bound cannot be reached.


\begin{proposition} \label{prop:23-upper}
Let $H = (\Sigma,M,L)$ be a (2,3)-semi-simple splicing system with a regular initial language, where $M \subseteq \Sigma \times \Sigma$ and $L \subseteq \Sigma^*$ is recognized by a DFA with $n$ states. Then there exists a DFA $A_H$ such that $L(A_H) = L(H)$ and $A_H$ has at most $2^{n-1}$ states.
\end{proposition}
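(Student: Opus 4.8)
The goal is to show that the DFA obtained from the subset construction on the NFA $A_H'$ of Proposition~\ref{prop:23-construction} has at most $2^{n-1}$ reachable states, rather than the naive $2^n-1$. The plan is to identify a structural property of the reachable subsets that cuts the count roughly in half. Let me think about what the $\varepsilon$-removed NFA $A_H'$ looks like for (2,3)-rules. In the bridge for $(a,b)$, the transition $i_{(a,b)} \xrightarrow{\varepsilon} t_{(a,b)}$ is itself an $\varepsilon$-transition, and the incoming $\varepsilon$-transitions are from states $q$ with $\delta(q,a)$ useful, while the outgoing $\varepsilon$-transitions from $t_{(a,b)}$ go to $\im \delta_b$. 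So after $\varepsilon$-removal, whenever a state $q$ has $\delta(q,a)$ useful and $(a,b)\in M$, reading $b$ from $q$ takes us to $\{\delta(q,b)\} \cup \im\delta_b$ — but note that, crucially for (2,3), reading the symbol $a$ itself from such a $q$ already places us at $\delta(q,a)$, and more importantly there is no symbol consumed at all across the bridge. So whenever any state in a current subset $P$ has an outgoing $a$-transition to a useful state with $(a,b)\in M$, the subset $P$ must actually already contain $\im\delta_b$ for every such $b$ once it was produced by reading $b$. The key observation I would try to pin down precisely is: there is a distinguished letter-like behaviour that forces a dependency among the "splice-reachable" states, so that the reachable subsets all lie in a set of size $2^{n-1}$.

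Concretely, I would argue as follows. Let $U = \bigcup \{\im\delta_b \mid (a,b)\in M \text{ for some } a\}$ be the set of all "landing sets" of splicing. The point is that after any splice on marker $(a,b)$, the NFA is simultaneously in $\delta(q,b)$ and in all states of $\im\delta_b$; since $\delta(q,b)\in\im\delta_b$ already, once a splice of type $b$ has occurred, the subset contains all of $\im\delta_b$. I would then fix one state, say a state $p^\ast$ that is guaranteed to be "tied" — e.g. show that there is a state $p_1$ which is reachable in $A$ only via some letter $b$ with $(a,b)\in M$ (or handle the case where no such $b$ exists, in which case $L(H)=L$ and the bound is trivial), and then show that any reachable subset $P$ of $A_H$ with $|P|\ge 2$ containing $p_1$ must contain all of $\im\delta_b$, hence is determined by its intersection with $Q\setminus(\im\delta_b\setminus\{p_1\})$, which has fewer than $n$ elements. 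Summing the singletons and these constrained subsets should land at $2^{n-1}$. Alternatively — and this may be cleaner — I would show directly that for the sink state $q_\emptyset$ of the (completed) DFA $A$, every reachable subset $P$ satisfies $P$ and $P\cup\{q_\emptyset\}$ being indistinguishable (exactly as in the proof of Proposition~\ref{prop:finite-upper}), collapsing $q_\emptyset$, and then find one additional merging coming from the (2,3)-structure: the "$\varepsilon$-transition inside the bridge" means the initial state can never be re-entered and also that some second state becomes redundant, giving the extra factor of $2$.

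The main obstacle I anticipate is making the "extra factor of two" argument rigorous and uniform over all regular initial languages, not just finite ones — the trick from Proposition~\ref{prop:finite-upper} used acyclicity and the presence of a sink state, neither of which is available for general regular $L$. So the real work is isolating the genuinely (2,3)-specific redundancy. My best guess is that it comes from the following: because the bridge for $(a,b)$ contributes an $\varepsilon$-edge $i_{(a,b)}\to t_{(a,b)}$ and the outgoing edges land in $\im\delta_b$, the NFA $A_H'$ always, after reading any $b$ with $(a,b)\in M$ from a state with $\delta(\cdot,a)$ useful, enters the full set $\im\delta_b$; and since the DFA $A$ was complete and deterministic, $\im\delta_b\ni \delta(q_0', b)$ for suitable predecessors, forcing the reachable subsets to be "upward closed" relative to a fixed nonempty set $R\subseteq Q$ of states that always appear together. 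If $|R|\ge 2$, the number of subsets of $Q$ that either avoid $R$ entirely or contain all of $R$ is $2^{n-|R|} + 2^{n-|R|}(2^{|R|}-1)\cdot$—better computed as: subsets $P$ with $P\cap R\in\{\emptyset, R\}$ number $2^{n-|R|}+2^{n-|R|} = 2^{n-|R|+1}\le 2^{n-1}$ when $|R|\ge 2$. I would therefore structure the proof as: (i) define $R$, (ii) prove every reachable subset $P$ has $P\cap R\in\{\emptyset,R\}$ by induction on the length of the reaching word, using the $\varepsilon$-removed transition rule of Proposition~\ref{prop:23-construction}, (iii) count, and (iv) dispose of the degenerate case $|R|\le 1$ (no applicable splices) where $L(H)=L$ has state complexity $n\le 2^{n-1}$.
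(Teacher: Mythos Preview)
Your plan has the right instinct---find a structural constraint on the reachable subsets of the determinized NFA---but the constraint you have written down is not the correct one, and the case you label ``degenerate'' is in fact the case that achieves the bound.

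First, the mechanics of the $(2,3)$ bridge: in Proposition~\ref{prop:23-construction} the bridge carries only an $\varepsilon$-transition $i_{(a,b)}\to t_{(a,b)}$, and both the incoming edge $q\to i_{(a,b)}$ (for $\delta(q,a)$ useful) and the outgoing edges $t_{(a,b)}\to \im\delta_b$ are also $\varepsilon$-transitions. Hence nothing is read across the bridge. The effect is on the $\varepsilon$-closure: whenever a subset $P$ contains a state $q$ with $\delta(q,a)$ useful, the closure already forces $\im\delta_b\subseteq P$. So the invariant on reachable subsets is the one-sided implication
\[
\bigl(\exists\, q\in P:\ \delta(q,a)\text{ useful}\bigr)\ \Longrightarrow\ \im\delta_b\subseteq P,
\]
not an all-or-nothing condition $P\cap R\in\{\emptyset,R\}$ on some fixed $R$. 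Your description ``reading $b$ from $q$ takes us to $\{\delta(q,b)\}\cup\im\delta_b$'' conflates the $(2,4)$ behaviour with the $(2,3)$ behaviour.

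Second, and more seriously, your step~(iv) discards $|R|\le 1$ as ``no applicable splices''. But $|\im\delta_b|=1$ is perfectly compatible with active splicing, and is exactly the configuration the paper identifies as worst case. The paper's argument is a case split on whether $A$ has a sink. If $A$ has no sink then every state is useful, so for every $q\in Q$ the state $\delta(q,a)$ is useful; thus \emph{every} nonempty reachable subset $P$ must contain $\im\delta_b$, and hence there are at most $2^{n-|\im\delta_b|}\le 2^{n-1}$ of them, with equality possible precisely when $|\im\delta_b|=1$. If $A$ does have a sink, the paper shows separately that one gets at most $2^{n-2}+2^{n-3}<2^{n-1}$. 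Your counting scheme, which relies on $|R|\ge 2$ to get $2^{n-|R|+1}\le 2^{n-1}$, never covers the sink-free, $|\im\delta_b|=1$ situation, and your fallback ``$L(H)=L$'' is false there.
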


\begin{proof}
Let $A = (Q,\Sigma,\delta,q_0,F)$ be the DFA for $L$ and let $B_H = (Q,\Sigma,\delta',q_0,F)$ be the NFA obtained via the construction given the (2,3)-semi-simple splicing system $H$. Let $A_H$ be the DFA obtained by applying the subset construction to $B_H$. Note that the states of $A_H$ are subsets of states of $B_H$. 

Recall that for each symbol $a \in \Sigma$ for which there is a pair $(a,b) \in M$, if the machine $B_H$ enters a state $q \in Q$ with an outgoing transition on $a$, the machine $B_H$ also simultaneously enters any state with an incoming transition on~$b$. Consider $a \in \Sigma$ with $(a,b) \in M$ and $\delta(q,a) = q'$ for some $q' \in Q$. Assuming that $(a,b)$ is non-trivial and $\im \delta_b$ contains useful states, for any set $P \subseteq Q$, we must have $\im \delta_b \subseteq P$ if $q \in P$. This implies that not all $2^n - 1$ non-empty subsets of $Q$ are reachable, since that would mean the singleton set $\{q\}$ is unreachable.  

Thus, to maximize the number of sets that can be reached, the number of states with incoming transitions on any symbol $b$ with $(a,b) \in M$ must be minimized. Therefore, for $(a,b) \in M$, there can be only one useful state with incoming transitions on $b$. Let us call this state $q_b \in Q$. 

We claim that to maximize the number of states, $A$ must contain no useless states and therefore $A$ contains no sink state. First, suppose otherwise and that $A$ contains a sink state $q_\emptyset$. To maximize the number of states, we minimize the number of states of $A$ with outgoing transitions, so there is only one state of $A$, say $q'$, with an outgoing transition on $a$. We observe that $q' \neq q_b$, since otherwise, $|\im \delta_b| = 1$ and the only reachable subset that contains $q_b$ is the singleton set $\{q_b\}$.

Now, recall that for all subsets $P \subseteq Q \setminus \{q_\emptyset\}$, the two sets $P$ and $P \cup \{q_\emptyset\}$ are indistinguishable. Then there are at most $2^{n-2}$ subsets containing $q_b$ and at most $2^{n-3}-1$ nonempty subsets of $Q \setminus \{q_b,q',q_\emptyset\}$. Together with the sink state, this gives a total of at most $2^{n-2} + 2^{n-3}$ states in~$A_H$.

Now, we consider when $A$ contains no sink state. In this case, since $A$ must be a complete DFA, in order to satisfy the condition that $|\im \delta_b|$ is minimal, we must have $\delta(q,a) = q_b$ for all $q \in Q$. But this means that for any state $q \in Q$ and subset $P \subseteq Q$, if $q \in P$, then $q_b \in P$. Therefore, every reachable subset of $Q$ must contain $q_b$. This gives an upper bound of $2^{n-1}$ states in~$A_H$.

Since $2^{n-1} > 2^{n-2} + 2^{n-3}$ for $n \geq 3$, the DFA $A_H$ can have at most $2^{n-1}$ states in the worst case.
\qed
\end{proof}

This bound is reachable when the initial language is a regular language, even when restricted to simple splicing rules defined over an alphabet of size 3.
\begin{lemma} \label{lem:23-lower}
There exists a (2,3)-simple splicing system with a regular initial language $H = (\Sigma,M,L_n)$ with $|\Sigma| = 3$, $|M| = 1$, and $L_n$ is a regular language with state complexity $n$ such that the minimal DFA for $L(H)$ requires at least $2^{n-1}$ states.
\end{lemma}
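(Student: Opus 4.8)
The plan is to build, for each $n$, an explicit (2,3)-simple splicing system over a three-letter alphabet that attains the $2^{n-1}$ bound of Proposition~\ref{prop:23-upper}. That proof already dictates the shape of an extremal witness: the initial DFA must have no useless state, and all transitions on the unique marker letter must collapse onto a single state, which is then forced into every reachable subset. Accordingly I take $\Sigma = \{a,b,c\}$ and $M = \{(a,a)\}$ (so the system is simple and $|M|=1$), and the DFA $A_n$ with states $\{0,1,\dots,n-1\}$, initial state $0$, final states $\{n-1\}$, and transitions $\delta_n(q,a)=0$ for all $q$; $\delta_n(0,b)=0$, $\delta_n(i,b)=i+1$ for $1\le i\le n-2$, and $\delta_n(n-1,b)=1$ (so $b$ acts as the $(n-1)$-cycle on $\{1,\dots,n-1\}$ and fixes $0$); and $\delta_n(0,c)=1$ with $\delta_n(i,c)=i$ for $1\le i\le n-1$ ($c$ is the identity on $\{1,\dots,n-1\}$ and moves $0$ to $1$). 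One checks directly that $A_n$ has no useless state, that it is minimal --- every pair of states in $\{1,\dots,n-1\}$ is separated by a power of $b$, and $0$ is separated from every other state because $b$ fixes $0$ while from $0$ the only way to reach $n-1$ is first to read a $c$ --- and that the $a$-transitions all land in the single state $0$, which is useful and nonfinal. Set $L_n=L(A_n)$, of state complexity $n$, and $H=(\Sigma,M,L_n)$.

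Next I would run the construction of Proposition~\ref{prop:23-construction} on $H$ and simplify. Since $\delta_n(q,a)=0$ is useful for every $q$, the $\varepsilon$-closure of every state of the intermediate $\varepsilon$-NFA contains the bridge states together with the state $0$, so after $\varepsilon$-removal the NFA $A_H'$ on state set $\{0,\dots,n-1\}$ satisfies $\delta_{A_H'}(q,x)=\{\delta_n(q,x),\delta_n(0,x),0\}$ for every letter $x$ (in particular $\delta_{A_H'}(q,a)=\{0\}$), and its only final state is $n-1$ (this is where $0\notin\{n-1\}$ is used: otherwise every state would become final). Determinizing from the start set $\{0\}$, every reachable subset contains $0$; writing a reachable subset as $\{0\}\cup R$ with $R\subseteq U:=\{1,\dots,n-1\}$, the three letters act by $b\colon R\mapsto R+1$ (the cyclic shift on $U$), $c\colon R\mapsto R\cup\{1\}$, and $a\colon R\mapsto\varnothing$ --- the choices $\delta_n(0,b)=0$ and $\delta_n(0,c)=1$ being made precisely so that the extra state $\delta_n(0,x)$ forced in by the construction is, for $b$, the harmless state $0$ and, for $c$, exactly the element that $c$ adds. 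An induction on $|R|$ then gives reachability of all $2^{n-1}$ subsets $\{0\}\cup R$: rotate $R$ by a power of $b$ until it contains $1$, delete $1$ (the set now has one fewer element, hence is reachable by the inductive hypothesis), reinsert $1$ via $c$, and rotate back via $b$. For distinguishability, if $j\in R\setminus R'$ then $b^{n-1-j}$ rotates $j$ onto $n-1$ while fixing $0$ and moving every other element off $n-1$, so $b^{n-1-j}$ is accepted from $\{0\}\cup R$ but not from $\{0\}\cup R'$. Hence this DFA has $2^{n-1}$ reachable, pairwise inequivalent states, and since it recognizes $L(H)$ by Proposition~\ref{prop:23-construction}, the minimal DFA for $L(H)$ needs at least $2^{n-1}$ states.

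I expect the NFA bookkeeping in the second step to be the main obstacle: the $\varepsilon$-transitions introduced by the splicing construction ``pollute'' every transition of $A_H'$ with the state $\delta_n(0,x)$ as well as with $0$ itself, so the determinized machine offers only the three crude moves ``cyclic shift'', ``add $1$'', and ``reset'' instead of a clean $2^{n-1}$-blow-up gadget, and the whole design of $A_n$ is arranged so that this pollution is harmless and the reachability induction still closes. A secondary point is that $A_n$ must simultaneously be minimal and have no useless states --- no useless states so that the image of the marker letter really is the singleton $\{0\}$ (which is what pins the bound at $2^{n-1}$ rather than the smaller value available when a sink is present), and minimal so that $L_n$ genuinely has state complexity $n$.
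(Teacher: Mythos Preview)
Your proposal is correct and follows essentially the same strategy as the paper: build a witness DFA with no useless states in which the single marker letter sends every state to a fixed state (here $0$), so that after the construction of Proposition~\ref{prop:23-construction} every reachable subset must contain $0$, and then use the remaining two letters to reach and distinguish all $2^{n-1}$ such subsets. Your concrete witness differs from the paper's---you take the marker to be $a$ with $b$ acting as the $(n{-}1)$-cycle on $\{1,\dots,n-1\}$ and $c$ as ``insert $1$'', whereas the paper takes the marker to be $c$ with $a$ a full $n$-cycle and $b$ merging $1$ into $0$---and your derivation of the post-$\varepsilon$-removal transition function $\delta_{A_H'}(q,x)=\{\delta_n(q,x),\delta_n(0,x),0\}$ is more explicit, but the underlying argument is the same.
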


\begin{proof}
Let $L_n$ be the language recognized by the DFA $A_n = (Q_n,\Sigma,\delta_n,0,F_n)$, where $Q_n = \{0,1,\dots,n-1\}$, $F_n = \{n-1\}$, and the transition function $\delta_n$ is defined by
\begin{itemize}
\item $\delta_n(i,a) = i+1 \bmod n$ for $0 \leq i \leq n-1$,
\item $\delta_n(0,b) = b$, $\delta_n(1,b) = 0$, $\delta_n(i,b) = i$ for $2 \leq i \leq n-1$,
\item $\delta_n(i,c) = 0$ for $0 \leq i \leq n-1$.
\end{itemize}
The DFA $A_n$ is shown in Figure~\ref{fig:23-witness}.

\begin{figure}
\begin{center}
\begin{tikzpicture}[shorten >=1pt,on grid,node distance=1.8cm,>=stealth,thick,auto]
    \node[state,initial] (0) {$0$};
    \node[state] (1) [right of=0]{$1$};
    \node[state] (2) [right of=1]{$2$};
    \node (3) [right of=2]{$\cdots$};
    \node[state] (4) [right of=3]{${{n-2}}$};
    \node[state,accepting] (5) [right of=4]{${{n-1}}$};
    \path[->]
        (0) edge [loop above] node {$b,c$} (0)
        (2) edge [loop above] node {$b$} (2)
        (4) edge [loop above] node {$b$} (4)
        (5) edge [loop above] node {$b$} (5)
        (0) edge  node {$a$} (1)
        (1) edge [bend right=45,above] node {$b,c$} (0)
        (1) edge  node {$a$} (2)
        (2) edge  node {$a$} (3)
        (3) edge  node {$a$} (4)
        (4) edge  node {$a$} (5)
        (2) edge [bend left=25] node {$c$} (0)
        (4) edge [bend left=35] node {$c$} (0)
        (5) edge [bend left=45] node {$a,c$} (0)
        ;
\end{tikzpicture}
\caption{The DFA $A_n$ of Lemma~\ref{lem:23-lower}}
\label{fig:23-witness}
\end{center}
\end{figure}
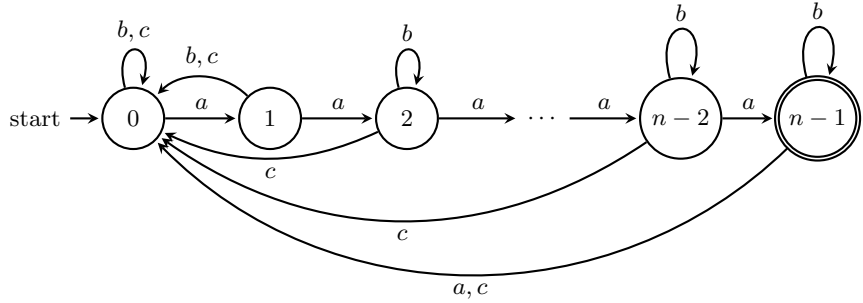
Consider the (2,3)-simple splicing system $H = (\Sigma,\{(c,c)\},L_n)$ and consider the DFA obtained via the construction of Proposition~\ref{prop:23-upper}. Then the states of $A_n'$ are subsets of $Q_n$. Observe that by definition of $A_n$ and $H$, every reachable subset of $A_n'$ must contain $0$. We will show that all states $P \subseteq Q$ with $0 \in P$ are reachable and pairwise distinguishable.

First, the initial state $\{0\}$ is clearly reachable. We will show that for $S \subseteq \{1,\dots,n-1\}$, all states $\{0\} \cup S$ are reachable, by induction on the size of $S$. First, for $|S| = 1$, we have
\begin{equation*}
\{0\} \xrightarrow{a(ab)^{i-1}} \{0,i\}
\end{equation*}
for $1 \leq i \leq n-1$. Thus, all sets $\{0\} \cup S$ with $|S| = 1$ are reachable. Now, for $k \geq 2$, suppose that all sets $\{0\} \cup S$ with $|S| = k$ are reachable. We will show that sets of size $k+1$ are reachable. Let $T = \{0, i_1,\dots,i_{k+1} \}$ with $0 < i_1 < \cdots < i_{k+1} \leq n-1$. Then,
\begin{equation*}
\{0, i_2 - i_1, \dots, i_{k+1} - i_1 \} \xrightarrow{a(ab)^{i_1-1}} \{0, i_1, \dots, i_{k+1} \}.
\end{equation*}
Thus, all sets $\{0\} \cup S$ with $|S| = k+1$ are reachable and therefore all sets $\{0\} \cup S$ with $S \subseteq \{1,\dots,n-1\}$ are reachable.

To see that each of these states is pairwise distinguishable, consider two subsets $P,P' \subseteq Q$ with $P \neq P'$. Then there is an element $t \in Q$ such that $t \in P$ and $t \not \in P'$ and these two states can be distinguished by the word $a^{n-1-t}$. 

Thus, we have shown that $A_n'$ contains at least $2^{n-1}$ reachable and distinguishable states.
\qed
\end{proof}

Together, Proposition~\ref{prop:23-upper} and Lemma~\ref{lem:23-lower} give the following result.
\begin{theorem}
Let $H = (\Sigma,M,L)$ be a (2,3)-semi-simple splicing system with a regular initial language, where $L \subseteq \Sigma^*$ is a regular language with state complexity $n$ and $M \subseteq \Sigma \times \Sigma$. Then the state complexity of $L(H)$ is at most $2^{n-1}$ and this bound can be reached in the worst case.
\end{theorem}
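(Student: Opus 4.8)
The plan is to obtain the theorem immediately from the two halves already in place: Proposition~\ref{prop:23-upper} supplies the upper bound $2^{n-1}$, and Lemma~\ref{lem:23-lower} supplies a matching witness, so the theorem itself is a one-line combination. What is worth spelling out is the architecture behind each half.

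For the upper bound I would start from the $\varepsilon$-free NFA $A_H'$ of Proposition~\ref{prop:23-construction}, which has the same state set $Q$ as the DFA $A$ for $L$, and apply the subset construction. The crucial structural fact is that a non-trivial marker $(a,b)\in M$ forces $\im\delta_b$ into any reachable subset that meets the set of states carrying an outgoing $a$-transition; hence a singleton $\{q\}$ for such a $q$ is never reachable and the full $2^n-1$ cannot occur. To make the reachable count as large as possible one is driven to minimize $|\im\delta_b|$, i.e. $|\im\delta_b|=1$ with target state $q_b$, and then split on whether $A$ has a sink state. If it does, $P$ and $P\cup\{q_\emptyset\}$ collapse and counting subsets that either contain $q_b$ or avoid both $q_b$ and the unique $a$-transition state gives at most $2^{n-2}+2^{n-3}$ reachable states. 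If $A$ has no sink state, completeness forces $\delta(q,a)=q_b$ for every $q$, so $q_b$ lies in every reachable subset, giving exactly $2^{n-1}$. Since $2^{n-1}>2^{n-2}+2^{n-3}$ for $n\ge 3$, the sink-free configuration dominates and $2^{n-1}$ is the worst case.

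For the lower bound I would exhibit a concrete three-letter witness realizing this extremal configuration: a DFA $A_n$ on states $\{0,\dots,n-1\}$ in which $a$ acts as the $n$-cycle $i\mapsto i+1\bmod n$, $c$ sends every state to $0$, and $b$ fixes every state except $1$, which it sends to $0$; the single simple rule $(c,c)$ then realizes $q_b=0$ with $\delta(q,c)=0$ for all $q$, so every reachable subset of $A_n'$ contains $0$. It then remains to check that all $2^{n-1}$ subsets containing $0$ are reachable and pairwise distinguishable. Reachability I would prove by induction on $|S|$: the base case is $\{0\}\xrightarrow{a(ab)^{i-1}}\{0,i\}$, and the inductive step shifts a size-$k$ set ``up by $i_1$'' via $a(ab)^{i_1-1}$ after re-indexing; distinguishability is immediate, since if $t\in P\setminus P'$ then $a^{n-1-t}$ carries $P$ into the accepting state $\{n-1\}$ but not $P'$.

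The step I expect to be the main obstacle is the upper-bound case analysis — in particular, arguing rigorously that among all DFAs whose associated (2,3)-semi-simple splicing language one is computing, the reachable-subset count is genuinely maximized by the sink-free configuration with $\delta(\cdot,a)\equiv q_b$, rather than by some hybrid arrangement in which several symbols carry non-trivial markers, or in which the $a$-preimage structure is distributed differently. Handling the simultaneous effect of multiple markers $(a,b)\in M$ and confirming that no combination of such constraints beats $2^{n-1}$ is the delicate part; by contrast, the lower-bound construction is routine once the target configuration has been identified.
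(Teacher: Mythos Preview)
Your proposal is correct and follows the paper's approach essentially line for line: the theorem is obtained by combining Proposition~\ref{prop:23-upper} and Lemma~\ref{lem:23-lower}, and your sketches of both halves---the sink/no-sink case split yielding $2^{n-1}$ versus $2^{n-2}+2^{n-3}$ for the upper bound, and the three-letter witness with the cyclic $a$, the reset $c$, the simple rule $(c,c)$, and the inductive reachability via $a(ab)^{i_1-1}$---match the paper's arguments exactly. Your caveat about the multi-marker case analysis is apt; the paper's proof of Proposition~\ref{prop:23-upper} treats this point at the same level of heuristic optimization that you describe.
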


Recall that in the proof of Proposition~\ref{prop:23-upper}, the bound depended on whether or not the DFA for the initial language contained a sink state. Since a DFA recognizing a finite language must have a sink state, the upper bound stated in the proposition is clearly not reachable when the initial language is finite. We will show that, in fact, even the upper bound in the case where the DFA contains a sink state which was stated in the proof is not reachable when the initial language is finite.
\begin{proposition} \label{prop:23-semi-finite-upper}
Let $H = (\Sigma,M,I)$ be a (2,3)-semi-simple splicing system where $I$ is a finite language recognized by a DFA $A$ with $n$ states. Then a DFA recognizing $L(H)$ requires at most $2^{n-3} + 2$ states.
\end{proposition}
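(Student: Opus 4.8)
The plan is to run the subset construction on the $n$-state NFA $A_H'$ of Proposition~\ref{prop:23-construction} built from a DFA $A=(Q,\Sigma,\delta,q_0,F)$ for $I$, and to bound the number of subsets of $Q$ that are reachable. We may assume $A$ has at most one useless state (merge useless states); since $I$ is finite, $A$ is acyclic, so it has a unique useless (sink) state $q_\emptyset$ and its initial state $q_0$ has no incoming transition. If no marker is \emph{active} --- that is, if for every $(a,b)\in M$ either $\delta(q,a)=q_\emptyset$ for all $q$ or $\im\delta_b$ contains no useful state --- then the $\varepsilon$-transitions added in the construction reach only useless states, $L(H)=I$, and its state complexity is at most $n\le 2^{n-3}+2$ for $n\ge 3$. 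So I may fix an active marker $(a,b)\in M$ and a useful state $q_b\in\im\delta_b$; note $q_b\ne q_\emptyset$ (it is useful) and $q_b\ne q_0$ (it has an incoming $b$-transition), so $q_b\in Q'':=Q\setminus\{q_0,q_\emptyset\}$, a set of size $n-2$.

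Next I perform the two degeneracy reductions already used in the proof of Proposition~\ref{prop:finite-upper}. First, no transition of $A_H'$ enters $q_0$: the transitions inherited from $A$ do not (it is the source of the acyclic $A$), and the transitions added by the construction land only inside the sets $\im\delta_{b'}$, and $q_0\notin\im\delta_{b'}$ since $q_0$ has no incoming transition; hence $q_0$ belongs to exactly one reachable subset, the initial one $P_0$. Second, $q_\emptyset$ is a trap of $A_H'$ that is never accepting (no bridge is attached to it, since $\delta(q_\emptyset,a')=q_\emptyset$ is useless for every $a'$), so for every $P$ the states $P$ and $P\cup\{q_\emptyset\}$ of the subset automaton are equivalent and we may discard $q_\emptyset$. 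Therefore the minimal DFA for $L(H)$ has at most $1+|\mathcal R|$ states, where $\mathcal R=\{P\subseteq Q'' : P\text{ or }P\cup\{q_\emptyset\}\text{ is reachable}\}$, and it suffices to prove $|\mathcal R|\le 2^{n-3}+1$.

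For the core of the argument I use the property isolated inside the proof of Proposition~\ref{prop:23-upper}: if a reachable subset contains a state $q$ with $\delta(q,a)\ne q_\emptyset$, then it contains every useful state of $\im\delta_b$, in particular $q_b$. Writing $T=\{q\in Q : \delta(q,a)\ne q_\emptyset\}$ for the trigger set, every $P\in\mathcal R$ meeting $T$ thus contains $q_b$, and only $2^{n-3}$ subsets of the $(n-2)$-set $Q''$ contain $q_b$. The remaining elements of $\mathcal R$ are disjoint from $T$, hence $q_b$-free, hence contained in $Q''\setminus\{q_b\}$; this alone yields only the weak bound $|\mathcal R|\le 2^{n-2}$, so the heart of the proof is a trade-off: I claim that whenever $P\in\mathcal R$ is disjoint from $T$, the augmented subset $P\cup\{q_b\}$ is \emph{not} reachable, with a single possible exception (accounting for the empty set, i.e.\ the sink). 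Granting this, the map $P\mapsto P\cup\{q_b\}$ injects all but one $T$-disjoint element of $\mathcal R$ into the family of $q_b$-containing subsets of $Q''$ that are unreachable, so $|\{P\in\mathcal R:q_b\in P\}|\le 2^{n-3}-(|\{P\in\mathcal R:P\cap T=\emptyset\}|-1)$ and hence $|\mathcal R|\le 2^{n-3}+1$. To prove the claim I would trace a shortest word $w$ reaching a $T$-disjoint $P$ backwards: each of the final steps fires no bridge (its target subset is $T$-disjoint), so peeling them off leads either to $P_0$ or to a subset that met $T$ and therefore contained $q_b$; using acyclicity of $A$ one then argues that a forward image of such a subset which has dropped $q_b$ cannot simultaneously have $q_b$ re-deposited into it. Finally, the case of several active markers reduces to this one: extra markers only impose further containment constraints on reachable subsets, so they can only shrink $\mathcal R$, and one checks that $|\mathcal R|$ is maximised when each active marker contributes exactly one useful deposit state.

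I expect the hard part to be precisely this trade-off claim --- that a reachable $T$-disjoint subset $P$ obstructs reachability of $P\cup\{q_b\}$. The subtlety is that $T$-disjoint subsets are not closed under the transitions of the subset automaton (reading a letter can move a thread onto a state of $T$ and thereby re-introduce $q_b$), so the backward and forward tracking must be organised carefully around both the acyclicity of $A$ and the fact that the only transitions of $A_H'$ absent from $A$ land exactly on $\im\delta_b$. A case split on whether $q_0\in T$, together with an explicit description of which $q_b$-containing subsets can be produced along a run, should close the remaining gap.
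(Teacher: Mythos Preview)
Your setup is fine up through the reduction to counting subsets $P\subseteq Q''=Q\setminus\{q_0,q_\emptyset\}$, but the argument breaks right after you introduce the trigger set $T$. From the (2,3) construction you correctly extract the implication ``$P\cap T\ne\emptyset\Rightarrow q_b\in P$''. You then write that the remaining reachable $P$ are ``disjoint from $T$, hence $q_b$-free''; that does not follow unless $q_b\in T$, which you never arranged. In general $q_b$ can lie outside $T$, and then $T$-disjoint reachable subsets can perfectly well contain $q_b$ (already $\{q_b\}$ itself, reached from $P_0$ by reading $b$, is such a subset). This also sinks the injection $P\mapsto P\cup\{q_b\}$: it need not be injective (it fixes any $P$ with $q_b\in P$), and your ``trade-off claim'' that $P\cup\{q_b\}$ is unreachable whenever $P$ is $T$-disjoint is neither proved nor true in the generality stated. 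The backward/forward tracking you sketch would have to rule out $q_b$ being re-deposited by an ordinary DFA transition landing in $\im\delta_b$, and nothing in your hypotheses prevents that.

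The paper avoids this difficulty by running the implication in the \emph{opposite} direction. It first argues extremally: to maximise the number of reachable subsets one may assume a single useful state $q_b$ in $\im\delta_b$, and (using finiteness) there is a state $q_1$ reachable only directly from $q_0$; the worst case has $q_b=q_1$. With that identification, $q_b$ has no incoming DFA transition except from $q_0$, so the only way $q_b$ can belong to a reachable subset $P\subseteq Q''$ with $|P|\ge 2$ is via the bridge, which forces $P\cap Q^a\ne\emptyset$. Now the counting is immediate: subsets containing $q_b$ of size $\ge 2$ must meet $Q^a$, while subsets avoiding $q_b$ must also avoid $Q^a$ (otherwise the bridge would add $q_b$); together with $\{q_b\}$, the initial state and the sink this gives $2^{n-3}+2$. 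The idea you are missing is precisely this choice of $q_b$ as a ``level-one'' state, which turns the one-way containment constraint into a two-way one and makes the injection argument unnecessary.
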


\begin{proof} 
Let $A = (Q,\Sigma,\delta,q_0,F)$ be the DFA for $I$ and let $A_H$ be the DFA obtained via the construction of Proposition~\ref{prop:23-upper}, given the (2,3)-semi-simple splicing system $H$. We will consider the number of reachable and pairwise distinguishable states of $A_H$. 

Recall from the proof of Proposition~\ref{prop:23-upper} that to maximize the number of sets that can be reached in $A_H$, the number of states with incoming transitions on any symbol $b$ with $(a,b) \in M$ must be minimized. Then for $(a,b) \in M$, there can be only one useful state with incoming transitions on $b$. Let us call this state $q_b \in Q$.

Since $I$ is a finite language, we know that $q_0$, the initial state of $A$, is contained in exactly one reachable state in $A_H$. Similarly $A$ must contain a sink state $q_\emptyset$ and for all subsets $P \subseteq Q$, we have that $P$ and $P \cup \{q_\emptyset\}$ are indistinguishable. Finally, we observe that there must exist at least one state $q_1 \in Q$ that is directly reachable from $q_0$ and is not reachable by any word of length greater than 1. Therefore, in order to maximize the number of reachable subsets, we must have that $q_1 = q_b$. 

Let $Q^a$ denote the set of states for which there is an outgoing transition on the symbol $a$. That is, if $q \in Q^a$, we have $\delta(q,a) \leq n-2$. Let $k_a = |Q^a|$. It is clear that $k_a \geq 1$. Now, consider a reachable subset $P \subseteq Q \setminus \{q_0,q_\emptyset\}$. We claim that if $|P| \geq 2$ and $q_b \in P$, then we must have $q \in P$ for some $q \in Q^a$. 

Suppose otherwise and that $Q^a \cap P = \emptyset$. Recall that $q_b = q_1$ and the only incoming transitions to $q_1$ are from the initial state $q_0$. Then this means that $P = \{q_1\}$ and $|P| = 1$, a contradiction. Therefore, we have $Q^a \cap P \neq \emptyset$ whenever $q_b \in P$ with $|P| \geq 2$.

Now, we can count the number of reachable subsets of $Q \setminus \{q_0,q_\emptyset\}$. There are $2^{n-3-k_a} (2^{k_a} - 1)$ non-empty subsets of size greater than 1 which contain $q_b$ and there are $2^{n-3-k_a} - 1$ non-empty subsets which do not contain $q_b$. Together with the initial and sink states and the set $\{q_b\}$, we have
\begin{equation*}
2^{n-3-k_a} (2^{k_a} - 1) + 2^{n-3-k_a} - 1 + 3.
\end{equation*}
Thus, the DFA $A_H$ has at most $2^{n-3} + 2$ reachable states.
\qed
\end{proof}

We will show that there exists a (2,3)-semi-simple splicing system with initial finite languages defined over a fixed alphabet that can reach the upper bound.
\begin{lemma} \label{lem:23-semi-finite-lower}
There exists a (2,3)-semi-simple splicing system with a finite initial language $H = (\Sigma,M,I_n)$ with $|M| = 1$ where $I_n$ is a finite language with state complexity $n$ such that the minimal DFA for $L(H)$ requires at least $2^{n-3} + 2$ states.
\end{lemma}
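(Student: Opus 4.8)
The plan is to mirror the structure of Lemma~\ref{lem:finite-lower} and Lemma~\ref{lem:23-lower}, but working within the tighter constraints imposed by the finite-language case and the bound $2^{n-3}+2$ from Proposition~\ref{prop:23-semi-finite-upper}. The matching upper bound proof tells us exactly what the witness must look like: the DFA $A_n$ for $I_n$ must be acyclic (finite language), must have a single sink state $q_\emptyset$, must have a state $q_1$ directly reachable from the initial state $q_0$ and from nowhere else, the single marker $(a,b)$ must satisfy $|{\im \delta_b}| = 1$ with $q_b = q_1$ its unique useful in-state, and the set $Q^a$ of states with an outgoing transition on $a$ should be as small as possible (size $1$) so that the count $2^{n-3-k_a}(2^{k_a}-1)+2^{n-3-k_a}-1+3$ attains $2^{n-3}+2$. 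So first I would explicitly construct $A_n = (Q_n,\Sigma_n,\delta_n,0,F_n)$ with $Q_n = \{0,1,\dots,n-1\}$, designate $1$ as $q_1=q_b$ and $n-1$ as the sink, use $0 \xrightarrow{b} 1$ as the only $b$-transition into a useful state, arrange a single state (say $1$) to carry the only $a$-transition into the ``working'' part of the automaton, and set up the remaining $n-3$ states $\{2,\dots,n-2\}$ — one of which, say $n-2$, is final — so that reading suitable words from the axiom part followed by one splicing step on $(a,b)$ can deposit any desired subset of the working states alongside the mandatory $1$.

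Then I would carry out the reachability argument: the splicing construction of Proposition~\ref{prop:23-construction} means that whenever the subset-construction DFA $A_n'$ enters a state containing a state with an outgoing $a$-transition, it also enters $\im \delta_b = \{1\}$, so every reachable nonempty subset (other than the initial one and the sink) contains $1$. I would show by induction on subset size that every set $\{1\} \cup S$ with $S \subseteq \{2,\dots,n-2\}$ is reachable — the base case handles singletons $\{1,i\}$ via a word that walks through the acyclic axiom automaton to seed $i$ and then triggers the marker, and the inductive step uses the same trick as in Lemma~\ref{lem:23-lower}, reaching a larger set from a smaller one by a carefully chosen word of the form (axiom-seeding word)$\cdot a$ or similar that shifts indices appropriately. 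Because the axiom automaton is acyclic, each such seeding word has bounded length, which is consistent with $I_n$ being finite; I need the transitions on letters of $\Gamma_n$ (analogous to the $a_S$ alphabet of Lemma~\ref{lem:finite-lower}) to let me pick out arbitrary subsets $S$ in a single step before the splicing fires.

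For distinguishability I would use the standard argument: given two distinct reachable subsets $P \neq P'$, pick $t$ in the symmetric difference and exhibit a word (a power of $a$, or $b^{\,k}$, reaching the single final state $n-2$ from $t$ but not from anything in $P \setminus \{t\}$) that separates them; the initial state $\{0\}$ is distinguished because $0$ has no incoming transitions and the sink $\{n-1\}$ because it is the unique dead state. Counting: $2^{n-3}$ subsets of the working states each paired with the mandatory $1$, plus the initial state and the sink, gives exactly $2^{n-3}+2$.

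The main obstacle I expect is the reachability induction — specifically, engineering the transition function on the working states $\{2,\dots,n-2\}$ and on the extra input letters so that (i) the initial automaton really is acyclic with the right single-sink, single-$q_1$, $|{\im\delta_b}|=1$ structure demanded by Proposition~\ref{prop:23-semi-finite-upper}, (ii) before the marker fires I can reach \emph{any} target subset of the working states (this is what forces an exponentially large alphabet $\Gamma_n$, exactly as in Lemma~\ref{lem:finite-lower}), and (iii) the index-shifting used in the inductive step stays inside the acyclic structure. Getting all three of these to coexist — particularly keeping the automaton acyclic while still allowing the subset construction to build up sets of every size — is the delicate part; the distinguishability and the final count are routine once the construction is pinned down.
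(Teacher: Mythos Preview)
Your proposal has a genuine structural gap. First, a minor algebraic point: the expression $2^{n-3-k_a}(2^{k_a}-1)+2^{n-3-k_a}-1+3$ from Proposition~\ref{prop:23-semi-finite-upper} simplifies to $2^{n-3}+2$ for \emph{every} $k_a\geq 1$, so there is no reason to force $|Q^a|=1$. More seriously, your specific choice $Q^a=\{1\}=\{q_b\}$ renders the single marker useless. In the $(2,3)$ construction the $\varepsilon$-closure adds $\im\delta_b$ to any subset that contains a state of $Q^a$; with $Q^a=\{q_b\}$ and $\im\delta_b=\{q_b\}$ this collapses to the tautology ``$q_b\in P\Rightarrow q_b\in P$''. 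The resulting NFA $A_H'$ is then just the original DFA $A_n$, and the subset construction never produces a set of size greater than one. Your exponential alphabet $\Gamma_n$ cannot rescue this, because in the $(2,3)$ mechanism the letter being read does not determine which set is added by the closure --- that set is fixed as $\im\delta_b$ --- so you are transplanting the $(2,4)$ idea into a setting where it does not apply. In particular, your phrase ``before the marker fires I can reach any target subset of the working states'' cannot be realized: before any non-trivial closure you are in a singleton.

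The paper's witness takes the opposite tack and uses only the three-letter alphabet $\{a,b,c\}$ with marker $(a,c)$. Here $\im\delta_c=\{1\}$, but $Q^a=\{1,\dots,n-3\}$ is made \emph{large}: every working state has a useful outgoing $a$-transition, so the $\varepsilon$-closure re-inserts $1$ whenever any working state is present. Both $a$ and $b$ shift a state $i\geq 2$ to $i+1$; the only difference is that $b$ sends state $1$ to the sink while $a$ sends it to $2$ (and the closure then restores $1$). Starting from $\{1\}$ and reading a word $w_T\in\{a,b\}^{*}$ whose successive letters encode membership in $T\subseteq\{2,\dots,n-2\}$ directly lands on $\{1\}\cup T$. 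No exponential alphabet and no induction on subset size is needed; distinguishability and the final count then go exactly as you sketched.
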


\begin{proof}
Let $A_n = (Q_n, \Sigma, \delta_n, 0, \{n-2\})$, where $Q_n = \{0, \dots, n-1\}$, $\Sigma = \{a,b,c\}$, and the transition function is defined
\begin{itemize}
\item $\delta(i,a) = i+1$ for $1 \leq i \leq n-2$,
\item $\delta(0,b) = \delta(1,b) = n-1$, $\delta(i,b) = i+1$ for $2 \leq i \leq n-2$,
\item $\delta(0,c) = 1$, $\delta(i,c) = n-1$ for $1 \leq i \leq n-2$,
\item $\delta(q,\sigma) = n-1$ for all other transitions not already defined.
\end{itemize}
The DFA $A_n$ is shown in Figure~\ref{fig:23-semi-finite}.

\begin{figure}
\begin{center}
\begin{tikzpicture}[shorten >=1pt,on grid,node distance=1.8cm,>=stealth,thick,auto]
    \node[state,initial] (0) {$0$};
    \node[state] (1) [right of=0]{$1$};
    \node[state] (2) [right of=1]{$2$};
    \node (3) [right of=2]{$\cdots$};
    \node[state] (4) [right of=3]{${{n-3}}$};
    \node[state,accepting] (5) [right of=4]{${{n-2}}$};
    \path[->]
        (0) edge  node {$c$} (1)
        (1) edge  node {$a$} (2)
        (2) edge  node {$a,b$} (3)
        (3) edge  node {$a,b$} (4)
        (4) edge  node {$a,b$} (5)
        ;
\end{tikzpicture}
\caption{The DFA $A_n$ of Lemma~\ref{lem:23-semi-finite-lower}. Transitions not shown are to the sink state $n-1$, which is not shown.}
\label{fig:23-semi-finite}
\end{center}
\end{figure}

Let $H = (\Sigma,\{(a,c)\},I_n)$ be a (2,3)-semi-simple splicing system with a finite initial language. We apply the construction from Proposition~\ref{prop:23-upper} to obtain a DFA~$A_n'$. We will show that $A_n'$ has $2^{n-3} + 2$  reachable states. First, we observe that the initial state of $A_n'$ is $\{0\}$. We will consider the reachable subsets of $\{1,\dots,n-2\}$.

Observe that since every state $i$ with $2 \leq i \leq n-3$ has an outgoing transition on $a$, $1 \in S$ for all reachable subsets $S \subseteq \{1,\dots,n-2\}$, since $(a,c) \in M$ and $1$ is the sole state of $A_n$ with an incoming transition on $c$. For each set $T \subseteq \{2,\dots,n-2\}$, let $t = \max T$ and we define words $w_T = a_t a_{t-1} \cdots a_2$ of length $t-1$ by
\begin{equation*}
a_j = \begin{cases}
a & \text{if $j \in T$},\\
b & \text{otherwise},
\end{cases}
\end{equation*}
for $2 \leq j \leq n-2$. Then $\{0\} \xrightarrow{c} \{1\} \xrightarrow{w_T} \{1\} \cup T$ and all $2^{n-3}$ subsets $\{1\} \cup T$ with $T \subseteq \{2,\dots,n-2\}$ are reachable.

To see that every reachable state is pairwise distinguishable, consider two subsets $P,P' \subseteq Q$ with $P \neq P'$. Then there is an element $t \in Q$ such that $t \in P$ and $t \not \in P'$. These two subsets can then be distinguished via the word $a^{n-2-t}$.

Thus, we have shown that $A_n'$ has at most $2^{n-3} + 2$ reachable and pairwise distinguishable states.
\qed
\end{proof}

Proposition~\ref{prop:23-semi-finite-upper} and Lemma~\ref{lem:23-semi-finite-lower} give the following theorem.
\begin{theorem} \label{thm:23-semi-finite}
Let $H = (\Sigma,M,I)$ be a (2,3)-semi-simple splicing system with a finite initial language, where $I$ is a finite language with state complexity $n$ and $M \subseteq \Sigma \times \Sigma$. Then the state complexity of $L(H)$ is at most $2^{n-3} + 2$ and this bound can be reached in the worst case.
\end{theorem}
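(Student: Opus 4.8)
The plan is to obtain the theorem by sandwiching the state complexity of $L(H)$ between the two bounds already isolated in Proposition~\ref{prop:23-semi-finite-upper} and Lemma~\ref{lem:23-semi-finite-lower}, so the work splits into an upper-bound half and a matching-witness half.

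For the upper bound I would start from a minimal DFA $A = (Q,\Sigma,\delta,q_0,F)$ with $|Q| = n$ for the finite language $I$, build the NFA $B_H$ for $L(H)$ via the construction of Proposition~\ref{prop:23-construction}, and determinize it. The determinized machine $A_H$ has subsets of $Q$ as states, and the task is to bound how many are simultaneously reachable and pairwise distinguishable. Three structural facts force a sharp reduction from $2^n-1$: (i) since $I$ is finite, $A$ is acyclic, so $q_0$ has no incoming transitions and lies in only the reachable subset $\{q_0\}$; (ii) $A$ has a sink state $q_\emptyset$, and for every $P$ the subsets $P$ and $P\cup\{q_\emptyset\}$ are indistinguishable, so $q_\emptyset$ may be ignored; (iii) by the marker analysis of Proposition~\ref{prop:23-upper}, to maximise reachability the symbol $b$ of the worst-case marker $(a,b)$ must have a single useful target state $q_b$, and since at least one state $q_1$ is reachable from $q_0$ by a word of length exactly one and no longer, the extremal configuration forces $q_1 = q_b$. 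The decisive counting step is then to show that any reachable subset of size $\ge 2$ containing $q_b$ must also contain a state with an outgoing $a$-transition (otherwise the only way to have entered such a subset was the length-one path to $q_1$, forcing $|P|=1$); splitting the subsets of $Q\setminus\{q_0,q_\emptyset\}$ by whether they contain $q_b$ and whether they meet $Q^a$ gives the count $2^{n-3-k_a}(2^{k_a}-1)+2^{n-3-k_a}-1$, which together with $\{q_0\}$, $\{q_b\}$, and the sink totals at most $2^{n-3}+2$.

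For the lower bound I would exhibit the witness family of Lemma~\ref{lem:23-semi-finite-lower}: the $n$-state acyclic DFA $A_n$ over $\Sigma=\{a,b,c\}$ recognising a finite language, together with the single simple marker $(a,c)$. After applying the construction of Proposition~\ref{prop:23-upper}, every reachable subset of $A_n'$ must contain the state $1$ (the sole useful target of $c$), so it suffices to reach all $2^{n-3}$ sets $\{1\}\cup T$ with $T\subseteq\{2,\dots,n-2\}$; this is done by reading $\{0\}\xrightarrow{c}\{1\}$ and then a word $w_T$ that chooses $a$ at the positions in $T$ and $b$ elsewhere, so that exactly the states of $T$ survive. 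Distinguishability is routine: if $t\in P\setminus P'$ then $a^{n-2-t}$ separates $P$ from $P'$. Counting $\{0\}$, the sink, and these $2^{n-3}$ sets gives $2^{n-3}+2$ reachable and pairwise distinguishable states, and combining the two halves yields exactly $2^{n-3}+2$ in the worst case.

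I expect the main obstacle to be the upper-bound counting rather than the witness: one must argue carefully that the finiteness of $I$ (which simultaneously pins $q_0$ into a single reachable subset and produces the ``depth-one-only'' state $q_1$) and the marker constraint interact so that $q_1$ and $q_b$ coincide in the extremal case, and that it is precisely this coincidence that prunes the count from the finite-with-sink estimate $2^{n-2}+2^{n-3}$ appearing inside the proof of Proposition~\ref{prop:23-upper} down to $2^{n-3}+2$. The lower-bound construction, by contrast, is a direct verification once the roles of $a$, $b$, $c$ in $A_n$ are read off its figure.
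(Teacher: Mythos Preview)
Your proposal is correct and follows exactly the paper's approach: the theorem is obtained by combining the upper bound of Proposition~\ref{prop:23-semi-finite-upper} with the matching witness of Lemma~\ref{lem:23-semi-finite-lower}, and your summaries of both arguments (the $q_1=q_b$ coincidence forcing the $2^{n-3}+2$ count, and the $\{0\}\xrightarrow{c}\{1\}\xrightarrow{w_T}\{1\}\cup T$ reachability scheme with $a^{n-2-t}$ distinguishability) match the paper's proofs. One terminological slip: the marker $(a,c)$ with $a\neq c$ is a \emph{semi-simple} marker, not a simple one; this matters because the paper separately shows (Theorem~\ref{thm:23-finite-simple}) that restricting to simple markers strictly lowers the bound to $2^{n-4}+2^{n-5}+2$.
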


Unlike the situation with regular initial languages, when we restrict (2,3)-semi-simple splicing systems with initial finite languages further to allow only (2,3)-simple splicing rules, the bound of Theorem~\ref{thm:23-semi-finite} is not reachable.
\begin{proposition} \label{prop:23-finite-simple-upper}
Let $H = (\Sigma,M,I)$ be a (2,3)-simple splicing system where $I$ is a finite language recognized by a DFA $A$ with $n$ states. Then a DFA recognizing $L(H)$ requires at most $2^{n-4} + 2^{n-5} + 2$ states.
\end{proposition}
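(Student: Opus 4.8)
The plan is to follow the proof of Proposition~\ref{prop:23-semi-finite-upper} and then bring in the additional constraints forced by simplicity. Fix a DFA $A=(Q,\Sigma,\delta,q_0,F)$ with $n$ states recognizing $I$; since $I$ is finite, $A$ is acyclic apart from its sink $q_\emptyset$, and it has a useful state $q_1$ whose only in-transitions come from $q_0$. Build $A_H$ via the construction of Proposition~\ref{prop:23-construction} and determinize; reachable states of $A_H$ are subsets of $Q$, subsets differing only in $q_\emptyset$ are indistinguishable, and as in Proposition~\ref{prop:23-semi-finite-upper} one may assume that for the (single, without loss of generality) marker there is a unique useful state $q_a$ with the appropriate incoming transition. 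Since $H$ is \emph{simple} this marker is $(a,a)$, so the letter $a$ that initiates a splice is also the letter whose in-image is the landing set. I would first record the two consequences of $a=b$: (i) $q_a\in\im\delta_a$, and because $A$ is acyclic away from $q_\emptyset$ while $q_a$ is useful, $q_a$ has no $a$-self-loop, so $q_a\notin Q^a$ and every $q\in Q^a$ satisfies $\delta(q,a)=q_a$ (here $Q^a$ is the set of states with a useful outgoing $a$-transition); (ii) consequently, in $A_H$ reading $a$ from \emph{any} subset $P$ yields the singleton $\{q_a\}$ if $P\cap Q^a\neq\emptyset$ and $\{q_\emptyset\}$ otherwise, and the $\varepsilon$-closure changes nothing.

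The new idea, absent from the semi-simple case, is to exploit (ii): since every occurrence of $a$ along a computation collapses the current subset to a singleton, every reachable subset of $A_H$ is reachable from $\{q_0\}$ or from $\{q_a\}$ by a word that uses no occurrence of $a$ (together with $\{q_\emptyset\}$). Reading non-$a$ letters never triggers a splice on its own — only the $\varepsilon$-closure can add a state not produced by the deterministic part of $A$, and the only state it can add is $q_a$, added exactly when a state of $Q^a$ is present. From this I would deduce: $q_0$ lies in a single reachable subset (it has no in-transitions); $q_1$ lies only in the handful of subsets reached in one step from the initial subset, since it cannot be produced by the closure ($q_1\ne q_a$, otherwise $Q^a\subseteq\{q_0\}$ and the splice is essentially inert) and its only in-transitions are from $q_0$; and every reachable subset $P$ with $|P|\ge 2$ satisfies the forward implication $P\cap Q^a\neq\emptyset\Rightarrow q_a\in P$. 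Hence, apart from the initial subset and the (at most one essentially new) subset containing $q_1$, every reachable subset is a subset of the $n-3$ states $Q\setminus\{q_0,q_1,q_\emptyset\}$ satisfying that implication, and the number of such subsets is at most $2^{n-4}+2^{n-5}$ (it equals $2^{n-4}+2^{n-4-m}$, where $m\ge 1$ is the number of $Q^a$-states among them, the case $m=0$ forcing $A_H$ to have only linearly many states). Adding the initial subset and $\{q_\emptyset\}$ yields the bound $2^{n-4}+2^{n-5}+2$.

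The main obstacle is the reachability count in the last step, i.e.\ verifying that the collapsing behaviour of $a$ — beyond merely forcing $q_a\notin Q^a$, which on its own only reproduces the $2^{n-3}+2$ bound of Proposition~\ref{prop:23-semi-finite-upper} — really does confine the bulk of the reachable subsets to $Q\setminus\{q_0,q_1,q_\emptyset\}$ and to the forward-implication pattern. This requires a careful case analysis according to whether $q_0$ and $q_1$ lie in $Q^a$ (which fixes the exact shape of the initial subset and of the subsets containing $q_1$), a verification that no other state plays the restricted role of $q_1$, and an argument reducing a general marker set $M$ to the single-marker case.
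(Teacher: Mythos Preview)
Your plan diverges from the paper at the very first borrowed step, and that step is where the gap lies. You import from Proposition~\ref{prop:23-semi-finite-upper} the assumption that, in a maximizing configuration, the marker symbol has a \emph{single} useful incoming state $q_a$. In the semi-simple setting this is harmless because the two coordinates of the marker are independent: one can take $q_1=q_b$ (so that $q_1$ enters large subsets via the closure) while keeping $a$ free to have many useful outgoing transitions. In the simple case the marker is $(a,a)$, so the ``landing'' symbol and the ``triggering'' symbol coincide. You correctly observe that $q_1=q_a$ would force $Q^a\subseteq\{q_0\}$ and kill splicing after one step; but then, with $|\im\delta_a|=1$ and $q_1\ne q_a$, the state $q_1$ is \emph{never} produced by the closure, and your collapse argument shows $q_1$ occurs only in subsets reached in one step from the initial configuration. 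In other words, your own analysis exhibits precisely why $|\im\delta_a|=1$ is \emph{not} the maximizing choice in the simple case --- yet your upper-bound argument covers only this case. The paper resolves this by arguing that in the worst case one must have both $q_0\in Q^a$ with $\delta(q_0,a)=q_1\in\im\delta_a$ \emph{and} a second state $q\ne q_0$ in $Q^a$ with $\delta(q,a)=q'\notin Q_1$, so that $|\im\delta_a|\ge 2$ and $|Q^a|\ge 2$; the count is then carried out with those parameters, not with $|\im\delta_a|=1$. The lower-bound witness in Lemma~\ref{lem:23-finite-simple-lower} indeed has $|\im\delta_c|=3$.

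A secondary issue: even granting $|\im\delta_a|=1$, your claim that there is ``at most one essentially new'' subset containing $q_1$ is not justified. If the initial subset is $\{q_0,q_a\}$ (i.e.\ $q_0\in Q^a$), then for every letter $\sigma$ with $\delta(q_0,\sigma)=q_1$ the one-step image is $\{q_1,\delta(q_a,\sigma)\}$ plus closure, and $\delta(q_a,\sigma)$ can range over many states as $\sigma$ varies. So the number of subsets containing $q_1$ is bounded by $O(n)$ rather than by one; this does not break the asymptotics, but it does break the exact count you need. The paper avoids this entirely by putting $q_1$ inside $\im\delta_a$ and folding the $q_1$-containing subsets into the type-(3) family $\im\delta_a\cup P\cup P'$.
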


\begin{proof}
Let $A = (Q,\Sigma,\delta,q_0,F)$ be the DFA for $I$ and let $B_H = (Q,\Sigma,\delta,q_0,F)$ be the NFA obtained via the construction from Proposition~\ref{prop:23-construction} given the (2,3)-simple splicing system $H$. Let $A_H$ be the DFA obtained by applying the subset construction to $B_H$ and the states of $A_H$ are subsets of states of $B_H$. 

It is well known that DFAs recognizing finite languages are acyclic and that their states can be ordered. For an integer $i \geq 0$, let $Q_i$ be the set of states of $A$ that are reachable by a word of length at most $i$. For two states $p,q \in S$, we write $p < q$ if $p \in Q_i$ and $q \in Q_j$ with $i < j$.

For each symbol $a \in \Sigma$ with $(a,a) \in M$, let $Q^a = \{q \in Q \mid \delta(q,a) \neq q_\emptyset\}$, the set of states with outgoing transitions to non-sink states on the symbol $a$. Observe that for a subset $P \subseteq Q$, if $P$ is a reachable subset in $A_H$ and $P \cap Q^a \neq \emptyset$, then $\im \delta_a \subseteq P$. From these conditions, we have the following states.
\begin{enumerate}
\item The initial state $q_0' = \{q_0\}$ and sink state $q_\emptyset$.
\item States $P \subseteq Q \setminus \left(\{q_0,q_\emptyset\} \cup Q_1 \cup \bigcup_{(a,a) \in M} Q^a \right)$
\item For each $(a,a) \in M$, $\im \delta_a \cup P \cup P'$,  where $P \subseteq Q \setminus \left(\{q_0,q_\emptyset\} \cup \bigcup_{a \in M} Q^a \right)$ and $P' \subseteq Q^a$.
\end{enumerate}
To maximize the number of subsets of $Q$ that can be reached, we must minimize the number of states with incoming and outgoing transitions on markers (symbols in $M$) and assume that for all $(a,a), (b,b) \in M$ with $a \neq b$, the sets $Q^a$ and $Q^b$ are disjoint.

From this, it is clear that we must have $q_0 \in Q^a$ and at least one state $q_1 \in Q_1$ with $q_1 \in \im \delta_a$ for some $(a,a) \in M$. Suppose otherwise. Since states in $Q_1$ are reachable only from the initial state $q_0$, any reachable subset $P \subseteq Q$ with $|P| \geq 2$ can not contain a state from $Q_1$. In fact, for any state $q_1 \in Q_1$ and subset $P \subseteq Q$ of size 2 or greater, we have $q_1 \in P$ only if $\im \delta_a \subseteq P$ for some marker $a \in M$. Thus, there must exist a transition $\delta(q_0,a) = q_1$ for $a \in M$. 

However, this is insufficient. Since the initial state $q_0$ is only reachable on $\varepsilon$, in order to reach a subset $P \subseteq Q \setminus \{q_0\}$, there must exist at least one other state $q \neq q_0$ in $Q^a$. Furthermore, this state $q$ must have a transition on $a$ to some state $q' \not \in Q_1$. Thus, there are at least two states in $Q^a$ and there are at least two states in $\im \delta_a$.

For $(a,a) \in M$, let
\begin{equation*}
t_a = \begin{cases}
|Q^a| & \text{if $q_0 \not \in Q^a$,} \\
|Q^a| - 1 & \text{if $q_0 \in Q^a$,}
\end{cases}
\end{equation*}
and let 
\begin{equation*}
t = \begin{cases}
\left|\bigcup_{(a,a) \in M} Q^a \right| - 1 & \text{if $q_0 \in Q^a$ for some $(a,a) \in M$,} \\
\left|\bigcup_{(a,a) \in M} Q^a \right| & \text{otherwise.} \\
\end{cases}
\end{equation*}
Then (1) gives 2 states, (2) gives $2^{n-2-|Q_1|-t}$ states, and (3) gives up to 
\begin{equation*}
\sum_{a \in M} (2^{t_a}-1)(2^{n-2-|\im \delta_a| - t_a}).
\end{equation*}
Thus, to maximize the number of subsets of $Q$ that can be reached, both sets $Q^a$ and $\im \delta_a$ must be minimized. We have already shown above that at minimum, $|Q^a| = 2$ and $|\im \delta_a| = 2$. Then $t_a = 1$ and this gives up to $2 + 2^{n-4} + 2^{n-5}$ states.
\qed
\end{proof}

We will show that this bound is reachable by a family of witnesses defined over a fixed alphabet.
\begin{lemma} \label{lem:23-finite-simple-lower}
There exists a (2,3)-simple splicing system $H = (\Sigma,M,I_n)$ with $|M| = 1$ where $I_n$ is a finite language with state complexity $n$ such that the minimal DFA for $L(H)$ requires at least $2^{n-4} + 2^{n-5} + 2$ states.
\end{lemma}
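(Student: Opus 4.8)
The plan is to exhibit a witness $A_n$ whose shape is exactly the one forced by the analysis at the end of the proof of Proposition~\ref{prop:23-finite-simple-upper}, and then to check that the subset construction applied to the NFA of Proposition~\ref{prop:23-construction} actually reaches $2^{n-4}+2^{n-5}+2$ distinguishable states. Concretely, I would take $A_n$ with state set $\{0,1,\dots,n-1\}$ over a fixed alphabet, with $0$ the start state, $n-1$ the sink, $n-2$ the unique final state, and a single marker $(a,a)$, arranged so that: (i) $A_n$ is acyclic, hence $I_n = L(A_n)$ is finite with state complexity $n$; (ii) $\delta(0,a) = q_1$ where $q_1$ has no incoming transition other than from $0$, so $q_1$ is the member of $\im\delta_a$ that lies among the states reachable from the start by a word of length at most one; (iii) exactly one further state $q^{\ast}\neq 0$ has a non-sink $a$-transition, $\delta(q^{\ast},a) = q^{\ast\ast}$ with $q^{\ast\ast}$ not reachable from $0$ in one step, so that $Q^a = \{0,q^{\ast}\}$ and $\im\delta_a = \{q_1,q^{\ast\ast}\}$ both have size $2$; and (iv) the non-$a$ letters move the remaining $\sim n-5$ states along a chain towards $n-2$, with the splice on $q^{\ast}$ used to re-inject $q_1$ (and through it re-seed the low end of the chain) so that, despite acyclicity, all subsets of these $\sim n-5$ states can be produced in each of the handful of admissible configurations of the states near the splice sites. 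This is the minimal structure consistent with Proposition~\ref{prop:23-finite-simple-upper}, so the whole content of the lemma is verifying that the bound is actually attained by such an $A_n$.

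Having fixed $A_n$, I would pass to the DFA $A_n'$ produced by the subset construction. Every reachable subset $P$ satisfies the saturation property noted in the proof of Proposition~\ref{prop:23-upper}, namely $P\cap Q^a\neq\emptyset$ forces $\im\delta_a\subseteq P$, and the $\varepsilon$-closure built into the construction makes the reachable subsets fall into a few regimes: the start state (the only reachable subset meeting $\{0\}$), the sink, a family of subsets disjoint from $Q^a$ on which the splice is dormant, and a family of ``spliced'' subsets containing $\{q^{\ast}\}\cup\im\delta_a$. The counting target is that the first two regimes give $2$ states and the other two give $2^{n-4}+2^{n-5}$ states between them. I would prove reachability of all of these by induction on $|P|$, writing down the connecting words explicitly as in the reachability arguments of Lemma~\ref{lem:23-lower} and Lemma~\ref{lem:23-semi-finite-lower}: reading $a$ collapses or re-seeds, reading the other letters shifts along the chain, and the splice re-inserts $q_1$ and $q^{\ast\ast}$ whenever $q^{\ast}$ occurs in the image. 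For distinguishability, given $P\neq P'$ I would pick a state $t$ in their symmetric difference; since $A_n$ is acyclic with $n-2$ its only final state, a word of the form $b^{\,n-2-t}$ that reads along the chain to $n-2$ is accepted from exactly one of $P$ and $P'$. Summing over the regimes yields $2+2^{n-4}+2^{n-5}$ reachable, pairwise distinguishable states, matching Proposition~\ref{prop:23-finite-simple-upper}.

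The hard part is step (iv): choosing the transitions on the non-marker letters so that, with an alphabet of fixed size and with $A_n$ acyclic, exactly $2^{n-4}+2^{n-5}$ non-trivial subsets become reachable --- not fewer, since acyclicity rules out the usual ``loop at the start state'' trick, so the only way to re-populate low-indexed states (and thereby reach exponentially many subsets) is to keep re-triggering the splice on $q^{\ast}$; and not more, since the size-$2$ constraints on $Q^a$ and $\im\delta_a$ that are needed to stay inside the bound of Proposition~\ref{prop:23-finite-simple-upper} sharply limit where and when the splice can fire. Finding one small alphabet that simultaneously gives full independent control of the $\sim n-5$ ``free'' states in each admissible near-splice configuration, while never producing a subset outside the intended regimes, is the delicate point; once the transition table is correct, the reachability induction and the distinguishability and counting steps are routine.
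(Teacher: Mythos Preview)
Your proposal correctly identifies the overall strategy---build a witness matching the constraints isolated in Proposition~\ref{prop:23-finite-simple-upper}, then verify reachability and distinguishability in the subset automaton---but it stops short of the actual content of the lemma. You yourself flag step~(iv) as ``the hard part'' and then do not carry it out: no alphabet, no transition table, no concrete words realising the reachability claims. For a lower-bound lemma of this type, the explicit witness \emph{is} the proof; everything else is routine checking. The paper supplies a fully specified DFA over a seven-letter alphabet $\{a,b,c,d,e,f,g\}$ with marker $(c,c)$, and then exhibits, for each target subset, an explicit word reaching it (using $a$ and $b$ to build the $\{1,2,3\}\cup T$ family and the four letters $d,e,f,g$ to handle the four subcases for subsets of $\{3,\dots,n-2\}$). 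Without a comparable construction your argument is a plan, not a proof.

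There is also a structural mismatch worth noting. You commit to $|Q^a|=2$ and $|\im\delta_a|=2$ because the upper-bound formula in Proposition~\ref{prop:23-finite-simple-upper} is maximised there. The paper's witness, however, has $Q^c=\{0,1,2\}$ and useful $\im\delta_c=\{1,2,3\}$, both of size~$3$. So the ``minimal structure consistent with Proposition~\ref{prop:23-finite-simple-upper}'' that you propose is not the one the paper uses, and it is not clear that a size-$2$ choice can be made to work over a fixed alphabet while keeping $A_n$ acyclic: with only one non-initial state carrying a live $a$-transition you have very little room to repeatedly re-seed the low end of the chain. At minimum you would need to argue that your size-$2$ configuration really does admit such a construction, or abandon it for the size-$3$ configuration the paper uses.
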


\begin{proof}
Let $A_n = (Q_n, \Sigma, \delta_n, 0, \{n-2\})$, where $Q_n = \{0, \dots, n-1\}$, $\Sigma = \{a,b,c,d,e,f,g\}$, and the transition function is defined
\begin{align*}
&\delta_n(i,a) = i+1 \text{ for $0 \leq i \leq n-2,$} \\
&\delta_n(i,b) = i+1 \text{ for $i = 0, 1, 2, 4, \dots, n-2,$} & \delta_n(2,b) &= n-1, \\
&\delta_n(i,c) = i+1 \text{ for $i = 0, 1, 2,$} & \delta_n(i,c) &= n-1 \text{ for $3 \leq i \leq n-2,$} \\
&\delta_n(i,d) = n-1 \text{ for $i = 0, 1,$} & \delta_n(i,d) &= i+1 \text{ for $2 \leq i \leq n-2,$} \\
&\delta_n(i,e) = n-1 \text{ for $i = 0, 1, 3,$} & \delta_n(i,e) &= i+1 \text{ for $i = 2, 4, \dots, n-2,$} \\
&\delta_n(i,f) = n-1 \text{ for $i = 0, 1, 2,$} & \delta_n(i,f) &= i+1 \text{ for $3 \leq i \leq n-2,$} \\
&\delta_n(i,g) = n-1 \text{ for $i = 0, 1, 2, 3,$} & \delta_n(i,g) &= i+1 \text{ for $4 \leq i \leq n-2,$} \\
&\delta_n(n-1,\sigma) = n-1 \text{ for all $\sigma \in \Sigma$.} 
\end{align*}
The DFA $A_n$ is shown in Figure~\ref{fig:23-finite}.

\begin{figure}
\begin{center}
\begin{tikzpicture}[shorten >=1pt,on grid,node distance=1.8cm,>=stealth,thick,auto]
    \node[state,initial] (0) {$0$};
    \node[state] (1) [below left of=0]{$1$};
    \node[state] (2) [below right of=1]{$2$};
    \node[state] (3) [above right of=2]{$3$};
    \node[state] (4) [right of=3]{$4$};
    \node (a) [right of=4]{$\dots$};
    \node[state,accepting] (5) [right of=a]{${{n-2}}$};
    \path[->]
        (0) edge  node {$a,b,c$} (1)
        (1) edge [below left] node {$a,b,c$} (2)
        (2) edge [below right] node {$a,b,c,d,e$} (3)
        (3) edge [align=center] node {$a,d,f$} (4)
        (4) edge [align=center] node {$a,b,d$\\$e,f,g$} (a)
        (a) edge [align=center] node {$a,b,d$\\$e,f,g$} (5)
        ;
\end{tikzpicture}
\caption{The DFA $A_n$ of Lemma~\ref{lem:23-finite-simple-lower}. Transitions not shown are to the sink state $n-1$, which is not shown.}
\label{fig:23-finite}
\end{center}
\end{figure}

Let $H = (\Sigma,\{(c,c)\},I_n)$ be a (2,3)-simple splicing system with a finite initial language. We apply the construction from Proposition~\ref{prop:23-upper} to obtain a DFA $A_n'$. We will show that $A_n'$ has $2^{n-3} + 2^{n-4} + 2$ reachable states. First, we observe that the initial state of $A_n'$ is $\{0,1,2,3\}$, since $c \in M$. We will consider the reachable subsets of $\{1, \dots, n-2\}$. 

We will consider two cases. First, we will consider subsets $S \subseteq \{1, \dots, n-2\}$. By our construction, if $1 \in S$, then $2, 3 \in S$ and similarly, if $2 \in S$, then $1, 3 \in S$. Let $T = \{i_1, \dots, i_k\}$, where $4 \leq i_1 < \cdots < i_k \leq n-2$. We define words $w_T = a_4 \cdots a_{n-2}$ of length $n-5$ by
\begin{equation*}
a_j = \begin{cases}
a & \text{if $n-2 - j + 4 \in T$}, \\
b & \text{otherwise}.
\end{cases}
\end{equation*}
for $4 \leq j \leq n-2$. Then we have $\{0,1,2,3\} \xrightarrow{w_T} \{1,2,3\} \cup T$. 

Next, we will consider subsets $S \subseteq \{3, \dots, n-2\}$. Let $S = \{i_1, \dots, i_k\}$ with $3 \leq i_1 < \cdots < i_k \leq n-2$. There are four cases to consider.
\begin{itemize}
\item If $i_1 = 3$ and $i_2 = 4$, then $S$ is reachable from the state $\{1,2,3,i_3-1,\dots,i_k-1\}$ on the word $d$.
\item If $i_1 = 3$ and $i_2 > 4$, then $S$ is reachable from the state $\{1,2,3,i_2-1, \dots, i_k-1\}$ on the word $e$.
\item If $i_1 = 4$, then $S$ is reachable from the state $\{1,2,3,i_2-1, \dots, i_k-1\}$ on the word $f$.
\item If $i_1 > 4$, then $S$ is reachable from the state $\{1,2,3,i_1-1, \dots, i_k-1\}$ on the word $g$.
\end{itemize}

This gives a total of $2^{n-4} + 2^{n-5} + 2$ reachable states. To see that each of these states is pairwise distinguishable, consider two subsets $P,P' \subseteq Q$ with $P \neq P'$. Then there is an element $t \in Q$ such that $t \in P$ and $t \not \in P'$ and these two states can be distinguished by the word $a^{n-2-t}$.

Thus, we have shown that there are $2^{n-4} + 2^{n-5} + 2$ reachable and pairwise distinguishable states.
\qed
\end{proof}

Proposition~\ref{prop:23-finite-simple-upper} and Lemma~\ref{lem:23-finite-simple-lower} give the following theorem.
\begin{theorem} \label{thm:23-finite-simple}
Let $H = (\Sigma,M,I)$ be a (2,3)-simple splicing system with a finite initial language, where $I \subseteq \Sigma^*$ is a finite language with state complexity $n$ and $M \subseteq \Sigma^* \times \Sigma^*$. Then the state complexity of $L(H)$ is at most $2^{n-4} + 2^{n-5} + 2$ and this bound can be reached in the worst case.
\end{theorem}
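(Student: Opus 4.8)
The plan is to combine a matching upper bound and lower bound. For the upper bound I would invoke Proposition~\ref{prop:23-finite-simple-upper}, which already shows that every DFA recognizing $L(H)$ has at most $2^{n-4}+2^{n-5}+2$ states. Its argument is a careful enumeration of the subsets of $Q$ that are reachable in the determinization of the NFA of Proposition~\ref{prop:23-construction}, exploiting two kinds of structural constraint. The first is intrinsic to (2,3)-simple splicing: if a reachable subset $P$ meets $Q^a = \{q \mid \delta(q,a) \neq q_\emptyset\}$ for a marker $(a,a) \in M$, then $P$ must contain all of $\im \delta_a$, so reachable subsets fall into those disjoint from every $Q^a$ and those of the form $\im\delta_a \cup P \cup P'$. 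The second is finiteness: the DFA for $I$ is acyclic, hence $q_0$ occurs only in the singleton $\{q_0\}$, the sink $q_\emptyset$ can be merged into every subset, and the length-one layer $Q_1$ is reachable only from $q_0$. Optimizing the count forces $q_0 \in Q^a$ and forces $|Q^a| \geq 2$ and $|\im\delta_a| \geq 2$ for the active marker; substituting these minima yields exactly $2^{n-4}+2^{n-5}+2$.

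For the lower bound I would invoke the witness of Lemma~\ref{lem:23-finite-simple-lower}: a family $I_n$ over the seven-letter alphabet $\{a,b,c,d,e,f,g\}$ recognized by a chain-shaped acyclic DFA $A_n$, with the single marker $(c,c)$. The point is that $\im\delta_c = \{1,2,3\}$, so after determinization the initial state is $\{0,1,2,3\}$ and reachable subsets are constrained accordingly; the auxiliary letters $b,d,e,f,g$ are designed so that, by index-shifting, one can reach $2^{n-4}+2^{n-5}$ subsets of $\{1,\dots,n-2\}$, which together with the initial state and the sink gives $2^{n-4}+2^{n-5}+2$ reachable states. Pairwise distinguishability is routine: two distinct reachable subsets differ at some state $t$, and the word $a^{n-2-t}$ separates them because $a$ acts as a forward shift toward the unique final state $n-2$.

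Putting the two halves together proves the theorem. I expect the real difficulty to sit entirely inside Proposition~\ref{prop:23-finite-simple-upper}: one must show that the requirements ``$q_0 \in Q^a$'' and ``some $q_1 \in Q_1 \cap \im\delta_a$'' are both unavoidable, that at least two states lie in $Q^a$ and at least two in $\im\delta_a$, and --- crucially --- that these minima can be achieved simultaneously with no further hidden interaction shrinking the count, since otherwise the two bounds would not meet. Once the extremal configuration is identified, the witness of Lemma~\ref{lem:23-finite-simple-lower} must be engineered to realize precisely it, which explains why an alphabet of that particular size is needed; everything else is bookkeeping.
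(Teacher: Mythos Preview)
Your proposal is correct and follows exactly the paper's approach: the theorem is obtained by combining Proposition~\ref{prop:23-finite-simple-upper} for the upper bound with the witness of Lemma~\ref{lem:23-finite-simple-lower} for the matching lower bound, and your summaries of both arguments accurately reflect their content. The only minor imprecision is that $\im\delta_c$ technically also contains the sink state $n-1$, but since the sink is discarded in the subset construction this does not affect anything.
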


\section{State Complexity of (1,4)-semi-simple Splicing Systems} \label{sec:14}
In this section, we consider the state complexity of (1,4)-semi-simple splicing systems. Recall that a (1,4)-semi-simple splicing system is a splicing system with rules of the form $(a,\varepsilon;\varepsilon,b)$ for $a, b \in \Sigma$. As with (2,3)-semi-simple splicing systems, we can easily modify the construction of Proposition~\ref{prop:pixton-construction} to obtain an NFA for (1,4)-semi-simple splicing systems.

\begin{proposition} \label{prop:14-construction}
Let $H = (\Sigma,M,L)$ be a (1,4)-semi-simple splicing system with a regular initial language, $M = M_1 \times M_2$ with $M_1,M_2 \subseteq \Sigma$ and let $L$ be recognized by a DFA with $n$ states. Then there exists an NFA $A_H'$ with $n+m$ states such that $L(A_H') = L(H)$, where $m = |M_1|$.
\end{proposition}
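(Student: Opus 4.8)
The plan is to re-run the bridge construction of Proposition~\ref{prop:pixton-construction}, adapting the bridge gadget to the shape of a $(1,4)$-rule and then exploiting the product structure $M = M_1 \times M_2$ to share bridge states. Recall that the rule $(a,\varepsilon;\varepsilon,b)$ sends $x = x_1 a x_2$ and $y = y_1 b y_2$ to $z = x_1 a b y_2$; unlike in the $(2,4)$-case, the marked letter $a$ of the first operand survives into the output, so the ``bridge'' joining $x_1$ to $y_2$ must read the word $ab$ rather than $b$. Let $A = (Q,\Sigma,\delta,q_0,F)$ be a DFA for $L$ with $|Q| = n$, which we may assume has no unreachable states. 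Following Proposition~\ref{prop:pixton-construction}, for each $(a,b) \in M$ we would take a bridge $i_{(a,b)} \xrightarrow{a} c_{(a,b)} \xrightarrow{b} t_{(a,b)}$ spelling $ab$, add the $\varepsilon$-transitions $q \xrightarrow{\varepsilon} i_{(a,b)}$ whenever $\delta(q,a)$ is useful and $t_{(a,b)} \xrightarrow{\varepsilon} q'$ whenever $q'$ has an incoming $b$-transition in the current machine, and iterate these additions to a fixed point as in that proof.

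The key point is that the entry guard ``$\delta(q,a)$ is useful'' depends only on $a$, and $M = M_1 \times M_2$ makes every $b \in M_2$ available on the right of every $a \in M_1$; hence the source states $i_{(a,b)}$ and the middle states $c_{(a,b)}$ of the bridges $B_{(a,b)}$ with a fixed $a$ may be merged, leaving one source state $i_a$ and one middle state $p_a$ per $a \in M_1$, with $i_a \xrightarrow{a} p_a$ and $p_a \xrightarrow{b} t_{(a,b)}$ for all $b \in M_2$. (For a non-rectangular $M$ this merge is illegitimate, which is why the hypothesis names $M_1, M_2$.) The standard $\varepsilon$-removal then deletes every $i_a$ (only $\varepsilon$-in-edges, one $a$-out-edge) and every $t_{(a,b)}$ (one $b$-in-edge, only $\varepsilon$-out-edges) but keeps each $p_a$, which ends up with a real incoming $a$-edge and real outgoing $b$-edges. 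So $A_H'$ has state set $Q \cup \{p_a \mid a \in M_1\}$, i.e.\ $n + m$ states with $m = |M_1|$; explicitly its transitions are those of $A$, the edges $q \xrightarrow{a} p_a$ for $a \in M_1$ and $\delta(q,a)$ useful, and, for $b \in M_2$, the edges $p_a \xrightarrow{b} r$ for every state $r$ of $A_H'$ that has some incoming $b$-edge. Since only finitely many transitions are possible, the iteration computing this last (fixed-point) clause terminates, and as it only adds $\varepsilon$-transitions among $Q \cup \{i_a\} \cup \{t_{(a,b)}\}$ it does not change which states survive removal.

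Correctness, $L(A_H') = L(H)$, then follows by the argument of Proposition~\ref{prop:pixton-construction}: reading a run of a word $w$ that visits $p_{a_1}, \dots, p_{a_k}$ and factoring $w$ at those visits exhibits $w$ as built by $k$ applications of rules $(a_j,\varepsilon;\varepsilon,b_j) \in \mathcal M$ from shorter words of $L(H)$, where the usefulness guard on the jump into $p_{a_j}$ supplies a completing first operand and the ``incoming-$b_j$-edge'' guard a second operand; this gives $L(A_H') \subseteq \sigma^*(L)$, and conversely one shows $\sigma^i(L) \subseteq L(A_H')$ by induction on $i$, reading the first operand to its cut point, jumping through the relevant $p_a$, and resuming along a run of the second operand.

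I expect the correctness bookkeeping to be the only real work. In contrast to the $(2,3)$- and $(2,4)$-cases, a single round of $\varepsilon$-additions does not suffice here: a second operand $y = y_1 b y_2$ can itself be a splice, so after jumping one may need to jump again before re-entering $A$, which shows up as edges $p_a \xrightarrow{b} p_b$ when $b \in M_1 \cap M_2$ (the language $\sigma^*(\{ba\})$ with $M = \{(a,a)\}$ is a small example where omitting this loses words). One must therefore check that the iteration stabilizes and keeps the state count at $n+m$, and, in the soundness direction, that consecutive bridge visits are charged to consecutive splices. One also has to confirm that ``$\delta(q,a)$ useful'' is exactly the right guard, using that $A$ is reachable so that the states reached by a $b$-edge from $Q$ are precisely the non-bridge states $A_H'$ can occupy just after reading a $b$. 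Everything else---the gadget, the merge, and the counting---is immediate once this is set up.
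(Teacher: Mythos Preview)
Your proposal is correct and follows essentially the same approach as the paper: the paper also instantiates the bridge of Proposition~\ref{prop:pixton-construction} as a three-state gadget $i_{(a,b)} \xrightarrow{a} p_{(a,b)} \xrightarrow{b} t_{(a,b)}$, iterates the $\varepsilon$-additions, and after $\varepsilon$-removal merges all middle states $p_{(a,b)}$ sharing the same $a$ into a single $p_a$, arriving at exactly your NFA on $Q \cup \{p_a \mid a \in M_1\}$ with transitions $q \xrightarrow{a} p_a$ when $\delta(q,a)$ is useful and $p_a \xrightarrow{b} \im\delta_b \cup \{p_b\}$ for $(a,b) \in M$. Your write-up is in fact a bit more explicit than the paper's on two points---why the merge of the $p_{(a,b)}$ is legitimate (the rectangular shape $M = M_1 \times M_2$) and why the bridge-to-bridge edges $p_a \xrightarrow{b} p_b$ are needed---and your sketch of the two inclusions for $L(A_H') = L(H)$ is sound.
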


\begin{proof}
Let $H = (\Sigma,M,L)$ be a (1,4)-semi-simple splicing system with a regular initial language and let $A = (Q,\Sigma,\delta,q_0,F)$ be a DFA for $L \subseteq \Sigma^*$.  We will define the NFA $A_H = (Q',\Sigma,\delta',q_0,F)$ by following the construction of Proposition~\ref{prop:pixton-construction} with a modification to the definition of the bridges $B_{(a,b)}$, which we will describe in the following.

For (1,4)-semi-simple splicing, for each marker $(a,b) \in M$, we define the bridge $B_{(a,b)}$ as an automaton with an initial state $i_{(a,b)}$, an intermediate state $p_{(a,b)}$, a final state $t_{(a,b)}$ and two transitions $i_{(a,b)} \xrightarrow{a} p_{(a,b)}$ and $p_{(a,b)} \xrightarrow{b} t_{(a,b)}$. The bridge $B_{(a,b)}$ for (1,4)-semi-simple splicing rules is shown in Figure~\ref{fig:bridge14}.

\begin{figure}
\begin{center}
\begin{tikzpicture}[shorten >=1pt,on grid,node distance=3cm,>=stealth,thick,auto]
    \node[state,initial] (0) {$i_{(a,b)}$};
    \node[state] (1) [right of=0]{$p_{(a,b)}$};
    \node[state,accepting] (2) [right of=1]{$t_{(a,b)}$};
    \path[->]
        (0) edge  node {$a$} (1)
        (1) edge  node {$b$} (2)
        ;
\end{tikzpicture}
\caption{The bridge $B_{(a,b)}$ for (1,4)-semi-simple splicing rules}
\label{fig:bridge14}
\end{center}
\end{figure}
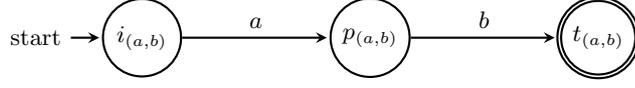

As with the construction of the NFA for (2,4)-semi-simple splicing systems in Proposition~\ref{prop:pixton-construction}, we add transitions iteratively. Since the additional transitions depend on paths of length one, we can accomplish our construction with a maximum of two iterations. We begin with
\begin{equation*}
\delta_0' = \delta \cup \bigcup_{(a,b) \in M} \left\{ i_{(a,b)} \xrightarrow{a} p_{(a,b)}, p_{(a,b)} \xrightarrow{b} t_{(a,b)}  \right\}.
\end{equation*}
For the first iteration of the construction, we add $\varepsilon$-transitions between states of the original NFA and states belonging to the bridges. Then after the first iteration of additional transitions, we have the transition function 
\begin{equation*}
\delta_1'= \delta_0' \cup \bigcup_{(a,b) \in M} (\{q \xrightarrow{\varepsilon} i_{(a,b)} \mid \text{$\delta(q,a)$ is useful}\} \cup \{t_{a,b} \xrightarrow{\varepsilon} \im (\delta_0')_b\}).
\end{equation*}
For additional iterations, $\varepsilon$-transitions may be added between states belonging to the bridges in one of four ways, for $(a,b), (a',b') \in M$:
\begin{enumerate}
\item $i_{(a',b')} \xrightarrow{\varepsilon} i_{(a,b)}$ for $a' = a $,
\item $p_{(a',b')} \xrightarrow{\varepsilon} i_{(a,b)}$ for $b' = a$,
\item $t_{a,b} \xrightarrow{\varepsilon} t_{(a',b')}$ for $b' = b$,
\item $t_{a,b} \xrightarrow{\varepsilon} t_{(a',b')}$ for $b' = b$.
\end{enumerate}
Since there are finitely many transitions of this form that can be added, there may be only finitely many additional iterations. Therefore we have $\delta' = \delta_k'$ for some finite $k$.

We can then remove $\varepsilon$-transitions, merging states $i_{(a,b)}$ and all its incoming and outgoing transitions into transitions on $a$, while all states $t_{(a,b)}$ and its incoming and outgoing transitions are merged and replaced with transitions on $b$. Furthermore, all states $p_{(a,b)}$ and $p_{(a',b')}$ are merged for $a = a'$.

Thus, after $\varepsilon$-transition removal, we obtain an NFA $A_H' = (Q'', \Sigma, \delta'', q_0, F)$, where $Q'' = Q \cup Q_M$ with $Q_M = \{p_a \mid (a,b) \in M \}$ and the transition function $\delta''$ is defined for $q \in Q''$ by
\begin{itemize}
\item $\delta''(q,a) = \{\delta(q,a)\}$ if $\delta(q,a)$ is useless or $(a,b) \not \in M$ for any $b \in \Sigma$,
\item $\delta''(q,a) = \{\delta(q,a)\} \cup \{p_a\}$ if $\delta(q,a)$ is useful and there exists $b \in \Sigma$ such that $(a,b) \in M$,
\item $\delta''(p_a,b) = \im \delta_b \cup \{p_b\}$ if $(a,b) \in M$,
\end{itemize}
and all other transitions are undefined.

Then the NFA $A_H'$ behaves as follows. Upon reading a symbol $a$ with $(a,b) \in M$ for some $b \in \Sigma$, there is a transition to a state $p_a$ for each $a$ with $(a,b) \in M$. From each state $p_a$, there are transitions on $b$ to each state in $\im \delta_b$ and $p_b$. Thus the NFA $A_H'$ accepts the language $L(H)$ and since the state set of $A_H'$ is $Q'' = Q \cup Q_M$, $A_H'$ has $n+m$ states.
\end{proof}

This construction immediately gives an upper bound of $2^{n+m}$ states necessary for an equivalent DFA via the subset construction, where $m$ is the number of symbols on the left side of each pair of rules in $M$. However, we will show via the following DFA construction that the upper bound is much lower than this.

\begin{proposition} \label{prop:14-upper}
Let $H = (\Sigma,M,L)$ be a (1,4)-semi-simple splicing system with a regular initial language, where $M = M_1 \times M_2$ with $M_1, M_2 \subseteq \Sigma$ and $L \subseteq \Sigma^*$ is recognized by a DFA with $n$ states. Then there exists a DFA $A_H$ such that $L(A_H) = L(H)$ and $A_H$ has at most $(2^n - 2)(|M_1|+1) + 1$ states.
\end{proposition}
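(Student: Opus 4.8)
The plan is to start from the NFA $A_H'$ of Proposition~\ref{prop:14-construction}, whose state set is $Q \cup Q_M$ with $Q_M = \{p_a \mid a \in M_1\}$ having $m = |M_1|$ elements, apply the subset construction, and then minimize. I would first record two structural invariants of the reachable subsets $S = P \cup R$ (with $P \subseteq Q$, $R \subseteq Q_M$), each proved by induction on the length of a word reaching $S$ from $\{q_0\}$. First, $P \neq \varepsilon$-part aside, $P$ is always nonempty: this holds for $\{q_0\}$, and if $P \neq \emptyset$ then for any $\sigma$ the successor contains $\delta_\sigma(P) \neq \emptyset$ since $A$ is complete; hence the empty subset is unreachable and the reachable part of the subset automaton is already complete without adding a sink. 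Second, $|R| \le 1$: inspecting $\delta''$, the only transitions whose target meets $Q_M$ are those labelled $\sigma$ that send some state to $p_\sigma$ (existing only when $\sigma \in M_1$), so the $Q_M$-part of any successor of $S$ is contained in $\{p_\sigma\}$. Consequently every reachable state is either a \emph{pure} set $P$ with $\emptyset \neq P \subseteq Q$, or a \emph{mixed} set $P \cup \{p_a\}$ with $\emptyset \neq P \subseteq Q$ and $a \in M_1$, which already gives at most $(2^n-1)(m+1)$ states.

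To sharpen this I would split on whether $A$ has a useless state. If $A$ has no useless state, then for each $a \in M_1$ the mixed state $Q \cup \{p_a\}$ is equivalent to the pure state $Q$: indeed $\delta''(Q,\sigma) = \im\delta_\sigma \cup (\{p_\sigma\}$ if $\sigma \in M_1)$ — here the bridge entry $p_\sigma$ is forced in whenever $\sigma \in M_1$, precisely because every state of $A$ is useful — and $\delta''(Q \cup \{p_a\},\sigma)$ only adds $\delta''(p_a,\sigma)$, which is $\im\delta_\sigma \cup (\{p_\sigma\}$ if $\sigma \in M_1)$ when $\sigma \in M_2$ and is empty otherwise, so the two successors coincide for every $\sigma$; since also $p_a \notin F$ the accepting status agrees, and the two states are equivalent. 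Hence the $m$ ``full'' mixed states collapse onto $Q$, leaving at most $(2^n - 1) + m(2^n - 2) = (2^n - 2)(m+1) + 1$ distinct states. If instead $A$ has a useless state $q_\emptyset$, then $P$ and $P \cup \{q_\emptyset\}$ are indistinguishable for every $P$, so in each of the $m+1$ ``slots'' (the $Q_M$-part being $\emptyset$ or some $\{p_a\}$) there are at most $2^{n-1}$ equivalence classes, giving at most $2^{n-1}(m+1)$ states, which is strictly below $(2^n-2)(m+1)+1$ for $n \ge 2$. In both cases the minimized subset DFA $A_H$ has at most $(2^n-2)(|M_1|+1)+1$ states and recognizes $L(H)$.

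The inductions and the arithmetic are routine; I expect the main obstacle to be the trim-case computation that $Q \cup \{p_a\}$ and $Q$ have identical successor behaviour. This identity genuinely fails in general: if some letter $b \in M_1 \cap M_2$ never occurs in a word of $L$, then $\im\delta_b$ consists of useless states, $\delta''(Q,b)$ does \emph{not} add $p_b$, but $\delta''(Q \cup \{p_a\},b)$ does, and that bridge state $p_b$ can still be live (via other letters of $M_2$); this is exactly why the useless-state case must be peeled off first, since when $A$ is trim every $\im\delta_\sigma$ contains a useful state and the pathology disappears. Handling the interaction between the irregular transition pattern on the bridge states (they have outgoing transitions only on letters of $M_2$) and this degenerate ``letter absent from $L$'' situation is the delicate part of the argument.
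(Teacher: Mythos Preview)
Your proposal is correct and follows essentially the same idea as the paper: encode a determinized state as a nonempty subset $P \subseteq Q$ together with at most one ``last marker read'' from $M_1$, obtain the preliminary bound $(2^n-1)(|M_1|+1)$, and then collapse all the ``full'' states (those with $P = Q$) into one.

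The difference is only in packaging. The paper defines the DFA $A_H$ directly on $2^Q \times (M_1 \cup \{\varepsilon\})$, with a transition function that \emph{always} records the marker $a$ upon reading $a \in M_1$, independent of usefulness; with that definition, the indistinguishability of $\langle Q,\varepsilon\rangle$ and $\langle Q,a\rangle$ follows from the single observation $\bigcup_{q \in Q}\delta(q,b) = \im\delta_b$, and no case split is needed. You instead determinize the NFA of Proposition~\ref{prop:14-construction}, whose bridge entries are guarded by the ``$\delta(q,a)$ useful'' condition; this is precisely what forces your trim/non-trim case analysis, which you identify correctly and handle correctly (including the point that $q_\emptyset$ remains useless in $A_H'$ because it never spawns a bridge state). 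Your route has the advantage that correctness of the recognized language is inherited directly from Proposition~\ref{prop:14-construction}, whereas the paper simply defines its DFA and does not separately justify $L(A_H) = L(H)$; the paper's route is shorter.
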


\begin{proof}
Let $A = (Q,\Sigma,\delta,q_0,F)$ be a DFA for $L$. We will define the DFA $A_H = (Q',\Sigma,\delta',q_0',F')$. Then the state set of $A_H$ is $Q' = 2^Q \times (M_1 \cup \{\varepsilon\})$, the initial state is $q_0' = \langle \{q_0\},\varepsilon \rangle$, the set of final states is $F' = \{\langle P,a \rangle \mid P \cap F \neq \emptyset \}$, and the transition function $\delta'$ is defined 
\begin{itemize}
\item $\delta'(\langle P,\varepsilon \rangle, a) = \langle P', \varepsilon \rangle$ if $a \not \in M_1$,
\item $\delta'(\langle P,\varepsilon \rangle, a) = \langle P', a \rangle$ if $a \in M_1$,
\item $\delta'(\langle P,b \rangle, a) = \langle P', \varepsilon \rangle$ if $(b,a) \not \in M$ and $a \not \in M_1$,
\item $\delta'(\langle P,b \rangle, a) = \langle P', a \rangle$ if $(b,a) \not \in M$ and $a \in M_1$,
\item $\delta'(\langle P,b \rangle, a) = \langle \im \delta_a, \varepsilon \rangle$ if $(b,a) \in M$ and $a \not \in M_1$,
\item $\delta'(\langle P,b \rangle, a) = \langle \im \delta_a, a \rangle$ if $(b,a) \in M$ and $a \in M_1$,
\end{itemize}
where $P' = \bigcup_{q \in P} \delta(q,a)$. 

This construction gives an immediate upper bound of $(2^n - 1)(|M_1|+1)$ states, however, not all of these states are distinguishable. Consider the two states $\langle Q,\varepsilon \rangle$ and $\langle Q,a \rangle$ for some $a \in M_1$. We claim that these two states are indistinguishable. This arises from the observation that $\bigcup_{q \in Q} \delta(q,a) = \im \delta_a$ for all $a \in \Sigma$. Then one of the following occurs:
\begin{itemize}
\item $\langle Q,\varepsilon \rangle \xrightarrow{b} \langle \im \delta_b,\varepsilon \rangle$ and $\langle Q,a \rangle \xrightarrow{b} \langle \im \delta_b,\varepsilon \rangle$ if $b \not \in M_1$,
\item $\langle Q,\varepsilon \rangle \xrightarrow{b} \langle \im \delta_b,b \rangle$ and $\langle Q,a \rangle \xrightarrow{b} \langle \im \delta_b,b \rangle$ if $b \in M_1$.
\end{itemize}
Note that in either case, it does not matter whether or not $(a,b) \in M$ and the two cases are distinguished solely by whether or not $b$ is in $M_1$. Thus, all states $\langle Q,a \rangle$ with $a \in M_1 \cup \{\varepsilon\}$ are indistinguishable. 

Thus, $A_H$ has at most $(2^n-2)(|M_1|+1)+1$ states.
\qed
\end{proof}

When the initial language is a regular language, the upper bound is easily reached, even when we are restricted to simple splicing rules.
\begin{lemma} \label{lem:14-lower}
There exists a (1,4)-simple splicing system with a regular initial language $H = (\Sigma,M,L_n)$ with $|M| = 1$ where $L_n$ is a regular language with state complexity $n$ such that the minimal DFA for $L(H)$ requires at least $(2^n-2)(|M_1|+1)+1$ states.
\end{lemma}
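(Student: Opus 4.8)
The plan is to exhibit a three-letter witness and verify that the DFA produced by the construction of Proposition~\ref{prop:14-upper} attains the bound. Fix $\Sigma = \{a,b,c\}$ and let $A_n = (Q_n,\Sigma,\delta_n,0,\{n-1\})$ with $Q_n = \{0,\dots,n-1\}$, where $\delta_n(i,a) = i+1 \bmod n$ (an $n$-cycle), $\delta_n(n-1,b) = 0$ and $\delta_n(i,b) = i$ for $i \neq n-1$ (a transformation whose image $\{0,\dots,n-2\}$ has size $n-1$), and $\delta_n(i,c) = i$ (the identity). Then $A_n$ is minimal: state $i$ is reached from $0$ by $a^i$, and any two states are separated by a suitable $a^k$ since $F_n$ is a singleton; hence $L_n = L(A_n)$ has state complexity $n$. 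Take $H = (\Sigma,\{(a,a)\},L_n)$, so $M = M_1 \times M_2$ with $M_1 = M_2 = \{a\}$ and $|M_1| = 1$; the target is $(2^n-2)(|M_1|+1)+1 = 2^{n+1}-3$.

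Apply Proposition~\ref{prop:14-upper} to get the DFA $A_H$ with states $2^{Q_n}\times\{\varepsilon,a\}$, initial state $\langle\{0\},\varepsilon\rangle$, and (instantiating the transition rules for the single marker $(a,a)$): reading $b$ or $c$ from any $\langle P,x\rangle$ goes to $\langle\delta_b(P),\varepsilon\rangle$ resp.\ $\langle\delta_c(P),\varepsilon\rangle = \langle P,\varepsilon\rangle$; reading $a$ from $\langle P,\varepsilon\rangle$ goes to $\langle\delta_a(P),a\rangle$; and reading $a$ from $\langle P,a\rangle$ splices, going to $\langle\im\delta_a,a\rangle = \langle Q_n,a\rangle$ since $\delta_a$ is a bijection. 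The point of the identity letter $c$ is to return from the ``$a$-flavour'' to the ``$\varepsilon$-flavour'' without disturbing the tracked subset, so that $\delta_a$ can be used as a step in a subset-reachability argument without the second $a$ triggering a reset.

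For reachability I would first note that $\langle Q_n,\varepsilon\rangle$ is reached by the word $aac$. From $\langle Q_n,\varepsilon\rangle$ the word $ac$ realizes $P \mapsto \delta_a(P)$ and the word $b$ realizes $P \mapsto \delta_b(P)$, so every set in the orbit of $Q_n$ under the monoid generated by $\delta_a$ and $\delta_b$ occurs as an $\varepsilon$-flavour state. Since $\delta_a$ is an $n$-cycle and $\delta_b$ has defect one, merging only $n-1$ into $0$, this orbit is all of $2^{Q_n}\setminus\{\emptyset\}$: the powers of the $n$-cycle give $Q_n$; composing with $\delta_b$ and rotating gives every $(n-1)$-subset; and, by induction on size, each further application of $\delta_b$ to a set containing both $0$ and $n-1$ lowers the size by one while rotations move the removed pair to any position, so all nonempty subsets are reached (the classical ``reverse needs $2^n$ states'' phenomenon). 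Hence every $\langle P,\varepsilon\rangle$ with $\emptyset \neq P \subseteq Q_n$ is reachable, and then so is every $\langle P,a\rangle = \delta'(\langle\delta_a^{-1}(P),\varepsilon\rangle,a)$, $\delta_a$ being bijective. This gives $2(2^n-1)$ reachable states, and by Proposition~\ref{prop:14-upper} the only coincidence among them is $\langle Q_n,\varepsilon\rangle = \langle Q_n,a\rangle$, leaving $2^{n+1}-3$.

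For distinguishability I would argue: given two distinct states, read $c$ first if necessary to move each into $\varepsilon$-flavour with its subset unchanged; the one remaining case, $\langle P,\varepsilon\rangle$ versus $\langle P,a\rangle$ with $P \neq Q_n$, is settled by reading one $a$, which sends them to $\langle\delta_a(P),a\rangle$ and $\langle Q_n,a\rangle$, whose subsets differ. It then suffices to separate $\langle P,\varepsilon\rangle$ from $\langle P',\varepsilon\rangle$ with $P \neq P'$: pick $t$ in the symmetric difference with, say, $t \in P \setminus P'$, and read $(ac)^k$ with $k \equiv n-1-t \pmod n$; this applies the bijection $\delta_a^k$ to each tracked subset and maps $t$ to the unique final state $n-1$, so exactly one of the two resulting subsets meets $F_n = \{n-1\}$. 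The main obstacle is the reachability step — verifying that the orbit of $Q_n$ under the cyclic shift and the defect-one merge is all of $2^{Q_n}\setminus\{\emptyset\}$ — which, while a standard combinatorial fact, needs the size-induction sketched above; the second delicate point is the $\varepsilon/a$ flavour bookkeeping (ensuring a stray $a$ never resets the subset to $\im\delta_a$ during these arguments), which is exactly what the identity letter $c$ is there to circumvent.
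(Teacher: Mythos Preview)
Your proof is correct and follows essentially the same approach as the paper: the same three-letter DFA (an $n$-cycle on $a$, a rank-$(n{-}1)$ map on $b$, the identity on $c$), the same strategy of reaching $Q_n$ by a single splice and then descending to every nonempty subset via the cycle-and-merge monoid, and the same distinguishability by rotating a witness element into the unique final state. The only cosmetic difference is that the paper places the marker on the identity letter $c$ (and takes $F=\{0\}$) rather than on the cycle letter $a$, which lets it apply $\delta_a$ and $\delta_b$ directly without your $ac$ flavour-reset trick, but the two arguments are otherwise interchangeable.
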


\begin{proof}
Let $A_n = (Q_n,\Sigma,\delta_n,0,F_n)$ be the DFA that recognizes $L_n$ with $Q_n = \{0,\dots,n-1\}$, $\Sigma = \{a,b,c\}$,  $F_n = \{0\}$ and the transition function is defined by
\begin{itemize}
\item $\delta(i,a) = i+1 \bmod n$ for all $0 \leq i \leq n-1$,
\item $\delta(i,b) = i$ for $0 \leq i \leq n-2$, $\delta(n-1,b) = 0$,
\item $\delta(i,c) = i$ for $0 \leq i \leq n-1$.
\end{itemize}
The DFA $A_n$ is shown in Figure~\ref{fig:14-witness}.

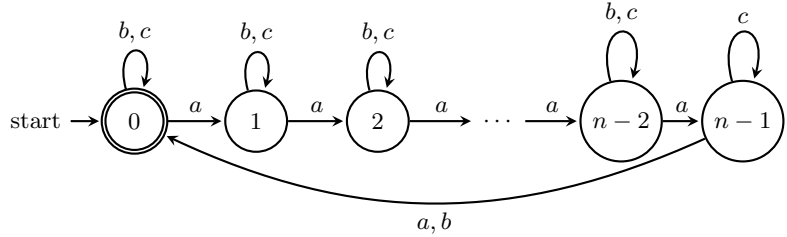
\begin{figure}
\begin{center}
\begin{tikzpicture}[shorten >=1pt,on grid,node distance=1.6cm,>=stealth,thick,auto]
    \node[state,initial,accepting] (0) {$0$};
    \node[state] (1) [right of=0]{$1$};
    \node[state] (2) [right of=1]{$2$};
    \node (3) [right of=2]{$\cdots$};
    \node[state] (4) [right of=3]{${{n-2}}$};
    \node[state] (5) [right of=4]{${{n-1}}$};
    \path[->]
        (0) edge [loop above] node {$b,c$} (0)
        (1) edge [loop above] node {$b,c$} (1)
        (2) edge [loop above] node {$b,c$} (2)
        (4) edge [loop above] node {$b,c$} (4)
        (5) edge [loop above] node {$c$} (5)
        (0) edge  node {$a$} (1)
        (1) edge  node {$a$} (2)
        (2) edge  node {$a$} (3)
        (3) edge  node {$a$} (4)
        (4) edge  node {$a$} (5)
        (5) edge [bend left=25] node {$a,b$} (0)
        ;
\end{tikzpicture}
\caption{The DFA $A_n$ for Lemma~\ref{lem:14-lower}}
\label{fig:14-witness}
\end{center}
\end{figure}

We consider the (1,4)-simple splicing system with a regular initial language $H = (\Sigma,\{(c,c)\},L_n)$ and consider the DFA $A_n'$ obtained via the construction of Proposition~\ref{prop:14-upper}. States of $A_n'$ are of the form $\langle P,\sigma \rangle$ for $P \subseteq Q_n$ and $\sigma \in M_1 \cup \{\varepsilon\}$. Note that $P \neq \emptyset$ since the empty set is not reachable. We will show that all such states with non-empty $P$ are reachable and pairwise distinguishable.

First, observe that $\langle 0,\varepsilon \rangle \xrightarrow{c^2} \langle Q_n,c \rangle \xrightarrow{a} \langle Q_n,\varepsilon \rangle$. Then we will show that for all nonempty subsets $S \subseteq Q_n$, every state $\langle S,\varepsilon \rangle$ is reachable by showing that it can be reached from $\langle Q_n, \varepsilon \rangle$. We have already shown that for the sole subset of $Q_n$ of size $n$, $Q_n$, the state $\langle Q_n,\varepsilon \rangle$ is reachable from the initial state. 

Next, we will show that we can reach a state $\langle S,\varepsilon \rangle$ where $S$ is a subset of size $k-1$ from some state $\langle T, \varepsilon \rangle$, where $T$ is a subset of size $k \geq 2$. Suppose that we can reach $\langle S,\varepsilon \rangle$ for a subset $S \subseteq Q_n$ of size $k$ and we wish to reach the state $\langle S \setminus \{t\}, \varepsilon \rangle$ for some $t \in Q_n$. There are two cases.

If $t+1 \in S$, then we have 
\begin{equation*}
\langle S,\varepsilon \rangle \xrightarrow{a^{n-1-t} b a^{t+1}} \langle S \setminus \{t\}, \varepsilon \rangle.
\end{equation*}
The same argument holds for $t = n-1$ and $0 \in S$.

On the other hand, if $t+1 \not \in S$, then we must first reach state $\langle S',\varepsilon \rangle$, where $S' = \delta'(\langle S,\varepsilon \rangle,a^{n-1-t})$. Observe that $t \xrightarrow{a^{n-1-t}} n-1$ and thus $n-1 \in S'$. From $\langle S', \varepsilon \rangle$, we want to reach the state $\langle S' \setminus \{n-1\}, \varepsilon \rangle$. Let $s = \min S'$. Then 
\begin{equation*}
\langle S', \varepsilon \rangle \xrightarrow{b} \langle S' \setminus \{n-1\} \cup \{0\}, \varepsilon \rangle \xrightarrow{(a^{n-1}b)^s a^s} \langle S' \setminus \{n-1\}, \varepsilon \rangle.
\end{equation*}
Finally, we shift every element of $S'$ back to its original position in $S$ by
\begin{equation*}
\langle S' \setminus \{n-1\}, \varepsilon \rangle \xrightarrow{a^{t+1}} \langle S \setminus \{t\}, \varepsilon \rangle,
\end{equation*}
and we have reached $\langle S \setminus \{t\}, \varepsilon \rangle$ as desired. Thus, we have shown that we can reach a state $\langle S,\varepsilon \rangle$ where $S$ is a subset of $Q_n$ of size $k-1$ from a state $\langle T,\varepsilon \rangle$ with a subset $T$ of $Q_n$ of size $k$.

Then, from each state $\langle S,\varepsilon \rangle$, the state $\langle S,c \rangle$ is reachable via the word $c$. Thus, every state of the form $\langle S,a \rangle$ for nonempty $S \subseteq Q_n$ and $a \in M_1 \cup \{\varepsilon\}$ is reachable.

To show that each of these states is pairwise disjoint, consider two states $\langle P,a \rangle$ and $\langle P',a' \rangle$, with nonempty $P,P' \subseteq Q_n$ and $a,a' \in M_1 \cup \{\varepsilon\}$. First, suppose that $P \neq P'$. Then there exists an element $t \in P$ such that $t \not \in P'$. Then $\langle P,a \rangle$ and $\langle P',a' \rangle$ are distinguishable via the word $a^{n-1-t}$. 

Now suppose that $P = P' \neq Q_n$ and $a \neq a'$. Thus, we consider two states $\langle P,\varepsilon \rangle$ and $\langle P,c \rangle$. We have $\langle P,\varepsilon \rangle \xrightarrow{c} \langle P,c \rangle$ and $\langle P,c \rangle \xrightarrow{c} \langle Q_n,c \rangle$. Since $P \neq Q_n$, these two states fall under the above case and are distinguishable. Finally, if $P = Q_n$, then the two states are not distinguishable, as shown in the proof of Proposition~\ref{prop:14-upper}.  

Thus, we have shown that $A_n'$ contains $(2^n - 2)(|M_1|+1) + 1$ reachable and pairwise distinguishable states.
\qed
\end{proof}

We note that the witness of Lemma~\ref{lem:14-lower} had $|M| = 1$ and therefore $|M_1| = 1$. It is not difficult to see that we can set $|M_1|$ to be arbitrarily large by adding symbols with transitions that behave the same way as $c$ and adding the corresponding markers to $M$ for each new such symbol.

By Proposition~\ref{prop:14-upper} and Lemma~\ref{lem:14-lower} we have the following result.
\begin{theorem}
Let $H = (\Sigma,M,L)$ be a (1,4)-semi-simple splicing system with a regular initial language, where $L \subseteq \Sigma^*$ is a regular language with state complexity $n$ and $M = M_1 \times M_2$ with $M_1,M_2 \subseteq \Sigma$. Then the state complexity of $L(H)$ is at most $(2^n - 2)(|M_1|+1) + 1$ and this bound can be reached in the worst case.
\end{theorem}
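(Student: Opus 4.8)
The plan is to assemble the theorem from the two halves already proved in this section, since it is a standard matching upper-bound/lower-bound statement. The ``at most'' direction is immediate from Proposition~\ref{prop:14-upper}: starting from a minimal DFA with $n$ states recognizing $L$, the construction there produces a DFA $A_H$ with $L(A_H) = L(H)$ and at most $(2^n-2)(|M_1|+1)+1$ states, so the state complexity of $L(H)$ is bounded by this quantity. No additional argument is needed for this half.

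For tightness I would invoke Lemma~\ref{lem:14-lower}, which exhibits, for each $n$, a (1,4)-simple (hence semi-simple) splicing system $H = (\Sigma,\{(c,c)\},L_n)$ over a three-letter alphabet with $L_n$ of state complexity $n$ whose generated language requires at least $(2^n-2)(|M_1|+1)+1$ states; there $|M|=1$, so $|M_1|=1$ and the bound read off is $2(2^n-2)+1$. Since the bound in the theorem is parametrized by $|M_1|$, I would then use the remark following that lemma to handle arbitrary $|M_1|$: adjoin to $\Sigma$ extra letters whose action in $A_n$ copies that of $c$ (i.e.\ the identity permutation on $Q_n$) and add the corresponding self-paired markers. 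The construction of Proposition~\ref{prop:14-upper} then multiplies the reachable state count by the new, larger $|M_1|+1$, and the reachability argument of Lemma~\ref{lem:14-lower} goes through with each fresh letter $\sigma$ playing exactly the role $c$ played: from any reachable $\langle S,\varepsilon\rangle$ one reaches $\langle S,\sigma\rangle$ on the single letter $\sigma$. This gives, for every target value of $|M_1|$, a witness meeting the bound, and combining the two directions yields the theorem.

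I do not expect a genuine obstacle here, as the substantive work lives in Proposition~\ref{prop:14-upper} and Lemma~\ref{lem:14-lower}; this theorem is essentially their packaging. The one point worth a sentence of care is the distinguishability bookkeeping: Proposition~\ref{prop:14-upper} shows the states $\langle Q_n,\sigma\rangle$ are mutually indistinguishable and so collapse to the single ``$+1$'' state, and one should confirm this collapse is undisturbed by the duplicated letters. It is, precisely because each duplicated letter induces the same identity transformation on $Q_n$ as $c$, so $\bigcup_{q\in Q_n}\delta(q,\sigma)=\im\delta_\sigma=Q_n$ for each such $\sigma$; hence no new pair among the $\langle Q_n,\sigma\rangle$ becomes separable and the count $(2^n-2)(|M_1|+1)+1$ is exact in the worst case.
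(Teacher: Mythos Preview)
Your proposal is correct and follows the paper's approach exactly: the theorem is simply the packaging of Proposition~\ref{prop:14-upper} (upper bound) and Lemma~\ref{lem:14-lower} (matching lower bound), together with the remark after that lemma about duplicating the letter $c$ to realize arbitrary $|M_1|$. Your added care about the indistinguishability of the $\langle Q_n,\sigma\rangle$ states under duplication is a reasonable elaboration of what the paper leaves as ``not difficult to see.''
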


We will show that this bound cannot be reached by any (1,4)-semi-simple splicing system when the initial language is finite. 
\begin{proposition} \label{prop:14-semi-finite-upper}
Let $H = (\Sigma,M,I)$ be a (1,4)-semi-simple splicing system with a finite initial language, where $M = M_1 \times M_2$ with $M_1, M_2 \subseteq \Sigma$ and $I \subseteq \Sigma^*$ is a finite language recognized by a DFA with $n$ states. Then there exists a DFA $A_H$ such that $L(A_H) = L(H)$ and $A_H$ has at most $2^{n-2} +|M_1| \cdot 2^{n-3} + 1$ states. 
\end{proposition}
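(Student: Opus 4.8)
The plan is to start from the DFA $A_H$ produced in the proof of Proposition~\ref{prop:14-upper}, whose states are pairs $\langle P,\sigma\rangle$ with $\emptyset\neq P\subseteq Q$ and $\sigma\in M_1\cup\{\varepsilon\}$, and to show that finiteness of $I$ kills enough of these states so that at most $2^{n-2}+|M_1|\cdot 2^{n-3}+1$ survive. Fix an $n$-state DFA $A=(Q,\Sigma,\delta,q_0,F)$ for $I$; we may assume it is trimmed, and since $I$ is finite it is acyclic, so (i) $q_0$ has no incoming transition, (ii) $A$ has a unique sink $q_\emptyset$, and (iii) taking $q_1$ to be the second state in a topological order of the useful states, every incoming transition of $q_1$ comes from $q_0$; hence $q_1$ is reachable in $A$ only by words of length exactly $1$, and $q_1\in\im\delta_c$ only if $\delta(q_0,c)=q_1$.

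First I would carry out the cheap reductions. Since $q_0$ has no incoming transition it lies in no set $\delta_c(P)$ and in no set $\im\delta_b$, so the only reachable state of $A_H$ whose first component contains $q_0$ is the initial state $\langle\{q_0\},\varepsilon\rangle$; and since the transition relation is total, $\langle\emptyset,\cdot\rangle$ is unreachable, the role of the sink being played by $\langle\{q_\emptyset\},\varepsilon\rangle$. For every $P$ and $\sigma$ the states $\langle P,\sigma\rangle$ and $\langle P\cup\{q_\emptyset\},\sigma\rangle$ are equivalent, so every reachable state other than the initial one may be taken with $P\subseteq Q\setminus\{q_0,q_\emptyset\}$, a set of size $n-2$. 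This already bounds the $\varepsilon$-marked states by $2^{n-2}+1$: the initial state, the sink state $\langle\{q_\emptyset\},\varepsilon\rangle$, and at most $2^{n-2}-1$ states $\langle P,\varepsilon\rangle$ with $\emptyset\neq P\subseteq Q\setminus\{q_0,q_\emptyset\}$.

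The crux is to save an extra factor of two on each of the $|M_1|$ marked tracks. I would argue that a reachable marked state $\langle P,a\rangle$ with $a\in M_1$ can contain $q_1$ only in one of finitely many exceptional ways: such a state is reached either by reading $a$ without splicing, so $P=\delta_a(P')$ for the predecessor set $P'$, whence $q_1\in P$ forces $q_0\in P'$, i.e. $P'=\{q_0\}$ and $P=\{q_1\}$; or by a splice that resets the first component to $\im\delta_a$ (with $\delta(q_0,a)=q_1$). Discarding the exceptional sets, the generic marked states have $q_1\notin P$ and $P\subseteq Q\setminus\{q_0,q_\emptyset,q_1\}$, giving at most $2^{n-3}$ per marker; summing over $M_1$ and adding the $2^{n-2}+1$ from the $\varepsilon$-part yields the claimed $2^{n-2}+|M_1|\cdot 2^{n-3}+1$.

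The step I expect to be the main obstacle is precisely reconciling those exceptional marked states — the singleton $\langle\{q_1\},a\rangle$ and, when $\delta(q_0,a)=q_1$ for a marker letter $a$, the reset set $\langle\im\delta_a,a\rangle$ — with the budget, since naively they add up to $|M_1|(2^{n-3}+1)$. One has to show that whenever such extra marked states are reachable, correspondingly many $\varepsilon$-marked states are \emph{not} (for instance, if every letter carrying $q_0$ to $q_1$ lies in $M_1$, then $\langle\{q_1\},\varepsilon\rangle$ is unreachable, freeing a slot in the $\varepsilon$-count), so that the grand total never exceeds $2^{n-2}+|M_1|\cdot 2^{n-3}+1$. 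This constant-level case analysis — of the same flavour as, but heavier than, the one in Propositions~\ref{prop:23-semi-finite-upper} and~\ref{prop:23-finite-simple-upper} — is where the care is needed. I would finish by recalling that the construction of Proposition~\ref{prop:14-upper} already produces a DFA for $L(H)$, so the count above is an honest upper bound, and by checking the degenerate cases $n\le 3$ and $I\subseteq\{\varepsilon\}$ separately, where the topological-order choice of $q_1$ is vacuous.
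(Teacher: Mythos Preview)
Your approach is exactly the paper's: start from the $\langle P,\sigma\rangle$-DFA of Proposition~\ref{prop:14-upper}, exploit acyclicity of $A$ to isolate $q_0$, the sink $q_\emptyset$, and a ``level-one'' state $q_1$ whose only incoming transitions are from $q_0$, and count. The paper's own enumeration is
\[
1 + 1 + (2^{n-2}-1) + |M_1| + |M_1| + |M_1|(2^{n-3}-1)
\;=\; 2^{n-2} + |M_1|\,(2^{n-3}+1) + 1,
\]
the six terms being the initial state, the sink, the states $\langle P,\varepsilon\rangle$ with $\emptyset\neq P\subseteq Q\setminus\{q_0,q_\emptyset\}$, the states $\langle\{q_1\},c\rangle$, the states $\langle P,c\rangle$ with $q_1\in P$ and $|P|\ge 2$, and the states $\langle P,c\rangle$ with $q_1\notin P$, for $c\in M_1$. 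This is the figure that appears verbatim at the end of the paper's proof and as the lower bound in Lemma~\ref{lem:14-semi-finite-lower}. The quantity $2^{n-2}+|M_1|\cdot 2^{n-3}+1$ in the proposition header is a misprint; the paper never proves it, and it is in fact false.

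Your instinct that the exceptional marked states are the obstacle is therefore correct, but your proposed trade-off cannot close the gap. In the witness of Lemma~\ref{lem:14-semi-finite-lower} there is a non-marker letter (namely $c$, and also every $a_S\in\Gamma_n$ with $1\in S$) sending $q_0$ to $q_1$, so the hypothesis of your trade-off fails and \emph{all} $2^{n-2}-1$ states $\langle P,\varepsilon\rangle$ remain reachable; at the same time, for each $a\in M_1=\{b,d\}$ one has $\delta(q_0,a)=q_1$ and $a\in M_2$, so both exceptional states $\langle\{q_1\},a\rangle$ and $\langle\im\delta_a,a\rangle$ are reachable. The resulting DFA genuinely has $2^{n-2}+|M_1|\,(2^{n-3}+1)+1$ pairwise inequivalent states, and hence no upper-bound argument can do better. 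In short, your outline is correct through the ``naive'' count $2^{n-2}+|M_1|\,(2^{n-3}+1)+1$, which \emph{is} the paper's bound; the final trade-off paragraph should simply be dropped.
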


\begin{proof}
Let $A = (Q,\Sigma,\delta,q_0,F)$ be a DFA for $I$ with $n$ states and let $A_H$ be the DFA recognizing $L(H)$ obtained via the construction of Proposition~\ref{prop:14-upper}. Since $I$ is finite, the initial state of $A$ contains no incoming transitions and $A$ must have a sink state. Therefore, for any state $\langle S,c \rangle$, we have $S \subseteq Q \setminus \{q_0,q_\emptyset\}$ and $c \in M_1 \cup \{\varepsilon\}$, where $q_\emptyset$ is the sink state. This gives us up to $(2^{n-2} - 1) (|M_1| + 1) + 2$ states. 

We can reduce the number of reachable states further by noting that since $I$ is finite, $A$ must contain at least one useful state $q_1$ that is directly reachable only from the initial state $q_0$. Then there are only two ways to reach a state $\langle P,c \rangle$ in $A_H$ with $q_1 \in P$. Either $P = \{q_1\}$ and is reached directly via a transition from $\{q_0\}$ or $|P| \geq 2$ and $P = \im \delta_b$ for some $(a,b) \in M$. For each $c \in M_1$, this gives a total of 2 reachable states $\langle P,c \rangle$.

Therefore, we can enumerate the reachable states of $A_H$ as follows:
\begin{itemize}
\item the initial state $\langle \{q_0\},\varepsilon \rangle$,
\item the sink state $\langle \{q_\emptyset\},\varepsilon \rangle$,
\item at most $2^{n-2} - 1$ states of the form $\langle P,\varepsilon \rangle$, where $P \subseteq Q \setminus \{q_0,q_\emptyset\}$,
\item at most $|M_1|$ states of the form  $\langle \{q_1\},c \rangle$ with $c \in M_1$,
\item at most $|M_1|$ states of the form  $\langle P,c \rangle$ such that $P \subseteq Q \setminus \{q_0,q_\emptyset\}$, $|P| \geq 2$, and $q_1 \in P$ with $c \in M_1$,
\item at most $|M_1| (2^{n-3} - 1)$ states of the form  $\langle P,c \rangle$ such that $P \subseteq Q \setminus \{q_0,q_1,q_\emptyset\}$ with $c \in M_1$.
\end{itemize}
This gives a total of at most $2^{n-2} + |M_1| \cdot (2^{n-3}+1) + 1$ reachable states in~$A_H$.
\qed
\end{proof}

\begin{lemma} \label{lem:14-semi-finite-lower}
There exists a (1,4)-semi-simple splicing system with a finite initial language $H = (\Sigma,M,I_n)$, where $M = M_1 \times M_2$ with $M_1, M_2 \subseteq \Sigma$ and $M_1 \cap M_2 = \emptyset$ and $I_n$ is a finite language with state complexity $n$ such that a DFA recognizing $L(H)$ requries $2^{n-2} + |M_1| \cdot (2^{n-3}+1) + 1$ states.
\end{lemma}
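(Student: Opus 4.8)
The plan is to build, for each $n$, an acyclic DFA $A_n = (Q_n,\Sigma,\delta_n,0,F_n)$ with $Q_n=\{0,\dots,n-1\}$ recognizing a finite language $I_n$ of state complexity $n$, to choose $M = M_1\times M_2$ with $M_1\cap M_2 = \emptyset$ and $|M_1| = m$, and then to run the DFA construction of Proposition~\ref{prop:14-upper} on $H=(\Sigma,M,I_n)$ to obtain $A_n'$, whose states are pairs $\langle P,\sigma\rangle$ with $P\subseteq Q_n$ and $\sigma\in M_1\cup\{\varepsilon\}$. The goal is to realize as reachable and pairwise distinguishable states of $A_n'$ exactly the list appearing in the proof of Proposition~\ref{prop:14-semi-finite-upper}: the initial state $\langle\{0\},\varepsilon\rangle$ and the sink $\langle\{n-1\},\varepsilon\rangle$; the $2^{n-2}-1$ states $\langle P,\varepsilon\rangle$ with $\emptyset\neq P\subseteq\{1,\dots,n-2\}$; the $|M_1|$ states $\langle\{q_1\},c\rangle$; the $|M_1|$ states $\langle P,c\rangle$ with $q_1\in P$ and $|P|\geq 2$; and the $|M_1|(2^{n-3}-1)$ states $\langle P,c\rangle$ with $\emptyset\neq P\subseteq\{2,\dots,n-2\}$, for a total of $2^{n-2}+|M_1|(2^{n-3}+1)+1$.

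For $A_n$ I would follow the template of the finite-language witnesses of Lemma~\ref{lem:finite-lower} and Lemma~\ref{lem:23-semi-finite-lower}: take $0$ as the unique source, make state $1$ the only state reachable from $0$ by a length-one word (and by nothing longer), so that $q_1=1$, make $n-1$ the sink, and let states $2,\dots,n-2$ form a subset-choice gadget in which every subset can be assembled one level at a time. Concretely I would include a plain level-advancing letter, an exponentially large family of letters $a_S$ with $\delta_n(i,a_S) = \min\{j\in S \mid i<j\le n-2\}$ as in Lemma~\ref{lem:finite-lower} placed in $M_2$ (so the images $\im\delta_{a_S}$ cover the subsets that splices must land on), and $m$ distinct letters $c_1,\dots,c_m$ forming $M_1$, each given identical transition behaviour, say $\delta_n(i,c_k)=i+1$ for $0\le i\le n-3$, so that they are interchangeable except for what is recorded in the second coordinate. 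Along the way one must check that $A_n$ is acyclic with no self-loop on useful states, has exactly $n$ reachable pairwise inequivalent states, and that every transition into state $1$ originates at $0$.

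Reachability then splits into cases. The states $\langle P,\varepsilon\rangle$ with $P\subseteq\{1,\dots,n-2\}$ are obtained by induction on $|P|$: a base case produced by a single letter of $M_2$ or by a splice on some $c_k a_S$, and an inductive step that shifts a smaller reachable subset into place using letters of $M_2$ and the level-advancing letter, in the spirit of the shifting arguments of Lemma~\ref{lem:14-lower} and Lemma~\ref{lem:23-semi-finite-lower}. The state $\langle\{q_1\},c_k\rangle$ is reached directly from $\langle\{0\},\varepsilon\rangle$ by reading $c_k$, and a state $\langle P,c_k\rangle$ with $\emptyset\neq P\subseteq\{2,\dots,n-2\}$ is reached by reading $c_k$ from an already-reachable $\langle P',\varepsilon\rangle$ for which $\bigcup_{q\in P'}\delta_n(q,c_k)=P$ (here $c_k$ acting as a relabeling bijection on the gadget states is what is used). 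For distinguishability, two states with distinct first coordinates are separated by a word reading an element of the symmetric difference down to the final state, exactly as in the earlier lemmas, and $\langle P,\varepsilon\rangle$ versus $\langle P,c_k\rangle$ is separated by a word that triggers a splice on $(c_k,a)$ from the second but not the first --- the same computation as in the indistinguishability analysis of Proposition~\ref{prop:14-upper}, which here produces a distinguishing word precisely because $P\neq Q_n$.

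The main obstacle I anticipate is the family of $|M_1|$ states $\langle P,c_k\rangle$ with $q_1\in P$ and $|P|\geq 2$. Since $q_1$ has $q_0$ as its only predecessor, any reachable first coordinate containing $q_1$ together with another state must come from a splice, giving a state of the form $\langle\im\delta_b,\varepsilon\rangle$ with $b\in M_2$; but reading $b\in M_2$ leaves the second coordinate $\varepsilon$, and reading an $M_1$-letter afterwards replaces the first coordinate by its $\delta_{c_k}$-image, which can contain $q_1$ only if the previous first coordinate was $\{q_0\}$, forcing a singleton. Thus making these states reachable seems to require the splice-and-record transition $\delta'(\langle P,b\rangle,a)=\langle\im\delta_a,a\rangle$ with $a\in M_1\cap M_2$ to be active, which is in tension with the hypothesis $M_1\cap M_2=\emptyset$; reconciling this --- and doing so while keeping $A_n$ acyclic and \emph{not} over-generating reachable subsets, which would violate the upper bound of Proposition~\ref{prop:14-semi-finite-upper} --- is the delicate heart of the proof. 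Once the transition function is pinned down so that all listed states, and no others, are reachable, the distinguishability bookkeeping is routine.
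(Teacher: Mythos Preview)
Your plan mirrors the paper's approach closely: the same acyclic DFA template (source $0$, state $1$ reachable only from $0$, sink $n-1$, an exponentially large family $\Gamma_n=\{a_S\}$ of subset-selecting letters placed in $M_2$, and shift letters $i\mapsto i+1$ forming $M_1$), the same case split for reachability, and the same distinguishability arguments via a neutral shift letter.

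Your obstacle analysis is exactly right, and it is the crux. The paper resolves it by taking $M_1=\{b,d\}$ and putting \emph{both} $(b,d)$ and $(d,b)$ into $M$, so that from the initial state the word $db$ gives
\[
\langle\{0\},\varepsilon\rangle \xrightarrow{\,d\,} \langle\{1\},d\rangle \xrightarrow{\,b\,} \langle\im\delta_b,\,b\rangle,
\]
using precisely the splice-and-record transition $\delta'(\langle P,d\rangle,b)=\langle\im\delta_b,b\rangle$ with $b\in M_1$ that you flagged. But this forces $b,d\in M_1\cap M_2$, so the paper's own witness does \emph{not} satisfy the disjointness hypothesis $M_1\cap M_2=\emptyset$ in the lemma statement. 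Your difficulty is therefore genuine: under strict disjointness the states $\langle P,c_k\rangle$ with $q_1\in P$ and $|P|\ge 2$ are unreachable for exactly the reason you give, and the stated count cannot be met. The disjointness clause appears to be a misstatement; if you drop it and allow a pair of $M_1$-letters that also serve as each other's right-hand sides, your construction goes through and coincides with the paper's.
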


\begin{proof}
We will consider the following family of splicing systems. Let $A_n = (Q_n,\Sigma_n,\delta_n,0,\{n-2\})$ be the DFA recognizing $I_n$, where $Q_n = \{0, \dots, n-1\}$ and $\Sigma = \{b,c,d\} \cup \Gamma_n$ where $\Gamma_n = \{a_S \mid S \subseteq \{1, \dots, n-2\} \}$. We define the transition function $\delta_n$ by
\begin{itemize}
\item $\delta_n(i,a_S) = \min\{j \in S \mid i < j \leq n-2\}$ for $0 \leq i \leq n-2$,
\item $\delta_n(i,b) = i+1$ for $0 \leq i \leq n-2$,
\item $\delta_n(i,c) = i+1$ for $0 \leq i \leq n-2$,
\item $\delta_n(i,d) = i+1$ for $0 \leq i \leq n-2$,
\item and all other transitions are to $n-1$.
\end{itemize} 
Let $M_n = \{b\} \times \Gamma_n \cup \{(b,d),(d,b)\}$. We consider the (1,4)-semi-simple splicing system $H = \{\Sigma_n, M_n, I_n\}$. Let $A_n'$ be the NFA recognizing $L(H)$ obtained via the construction from Proposition~\ref{prop:14-upper} and consider the DFA that results from applying the subset construction. 

Let us consider the number of reachable states of $2^{Q_n} \times \{b,d,\varepsilon\}$. First, the initial state $\langle \{0\},\varepsilon \rangle$ is reachable by definition and the sink state $\langle \{n-1\},\varepsilon \rangle$ is reachable on the word $c^{n-1}$. 

Now we consider states $q = \langle S,\varepsilon \rangle$, where $S \subseteq \{1,\dots,n-2\}$. From the initial state, we can reach states of the form $\langle T,\varepsilon \rangle$ with $T \subseteq \{1, \dots, n-2\}$ via the word $b a_T$ and there are $2^{n-2} - 1$ such states. 

Next, we consider states $q = \langle S,b \rangle$ where $S \subseteq \{1,\dots,n-2\}$. By Proposition~\ref{prop:14-semi-finite-upper}, there are exactly two states $\langle S',b \rangle$ that are reachable with $1 \in S'$. Either $S' = \{1\}$ or $S' = \im (\delta_n)_b$. The state $\langle \{1\},b \rangle$ is reachable from the initial state via the word $b$ while the state $\langle \im (\delta_n)_b, b \rangle$ is reachable from the initial state via the word $db$. 

Now consider a subset $R = \{i_1, \dots, i_k\}$ with $2 \leq i_1 < \cdots i_k \leq n-2$. To reach the state $\langle R,b \rangle$, let $R' = \{i_1 - 1, \dots, i_k - 1\}$. Since $R' \subseteq \{2, \dots, n-2\}$, we have
\begin{equation*}
\langle \{0\},\varepsilon \rangle \xrightarrow{b a_{R'}} \langle R',\varepsilon \rangle \xrightarrow{b} \langle R,b \rangle.
\end{equation*}  
There are $2^{n-3} - 1$ such states, giving a total of $2^{n-3}$ reachable states of the form $\langle S,b \rangle$. A similar argument holds for states of the form $\langle S,d \rangle$ with $S \subseteq \{1,\dots,n-2\}$.

To show that each of these states is pairwise distinguishable, consider two states $\langle S,\sigma \rangle$ and $\langle S', \sigma' \rangle$ for $S,S' \subseteq Q_n$ and $\sigma, \sigma' \in \{b,d,\varepsilon\}$. First, suppose that $S \neq S'$. Then without loss of generality, there exists an element $t \in S$ that is not in $S'$ and the two states can be distinguished by the word $c^{n-2-t}$. 

Now, suppose $S = S'$ and $\sigma \neq \sigma'$. First, consider when $\sigma \in \{b,d\}$ and $\sigma' = \varepsilon$. Let $S = \{i_1, \dots, i_k\}$ and $\sigma = d$. Then $\langle S,\sigma \rangle \xrightarrow{b} \langle \{1, \dots, n-2\}, \varepsilon \rangle$ and $\langle S',\varepsilon \rangle \xrightarrow{b} \langle \{i_1 + 1, \dots, i_k + 1\}, b \rangle$ and the two resultant states can be distinguished as in the case above. The argument is similar for $\sigma = b$

Next, suppose $\sigma = b$ and $\sigma' = d$. We have $\langle S,b \rangle \xrightarrow{d} \langle \{1,\dots,n-2\},d \rangle$ and $\langle S,d \rangle \xrightarrow{d} \langle T,d \rangle$, where $T = \bigcup_{q \in S} \delta_n(q,d)$. Since $I_n$ is a finite language, we know that $1 \not \in T$ and therefore, $T \neq \{1,\dots,n-2\}$ and the two states can be distinguished as above. Again, the argument is similar with $\sigma = d$ and $\sigma' = b$. Thus, all states $\langle S,\sigma \rangle$ and $\langle S',\sigma' \rangle$ are distinguishable.

Therefore, $A_n'$ has $2^{n-2} + |M_1| \cdot (2^{n-3}+1) + 1$ reachable and pairwise distinguishable states.
\qed
\end{proof}
We note that one can arbitrarily increase the size of $M$ by adding symbols $s$ and $t$ with the same role as $b$ and $d$, respectively, and the corresponding pairs of rules $(s,t)$ and $(t,s)$.

Together, Proposition~\ref{prop:14-semi-finite-upper} and Lemma~\ref{lem:14-semi-finite-lower} give the following result.

\begin{theorem}
Let $H = (\Sigma,M,I)$ be a (1,4)-semi-simple splicing system with a finite initial language, where $I \subseteq \Sigma^*$ is a finite language with state complexity $n$ and $M = M_1 \times M_2$ with $M_1, M_2 \subseteq \Sigma$. Then the state complexity of $L(H)$ is at most $2^{n-2} + |M_1| \cdot 2^{n-3} + 1$ and this bound is reachable in the worst case.
\end{theorem}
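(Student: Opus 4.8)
The statement is a direct corollary of the two preceding results, so the plan is simply to assemble them. For the upper bound I would invoke Proposition~\ref{prop:14-semi-finite-upper}: starting from any $n$-state DFA $A$ for the finite language $I$, feed it through the DFA construction of Proposition~\ref{prop:14-upper}, and then discard the states that acyclicity of $A$ forbids — the fact that $q_0$ has no incoming transition (so the only reachable $\langle P,\varepsilon\rangle$ with $q_0\in P$ is $\langle\{q_0\},\varepsilon\rangle$), the mergeability of the sink $q_\emptyset$, and the restriction that a set $P\ni q_1$ with $|P|\ge 2$ can only occur as $\im\delta_b$. This yields at most $2^{n-2} + |M_1|\cdot 2^{n-3} + 1$ reachable and pairwise distinguishable states, which bounds the state complexity of $L(H)$.

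For tightness I would point to the family of Lemma~\ref{lem:14-semi-finite-lower}: the DFA $A_n$ there recognizes a finite language $I_n$ of state complexity exactly $n$, and for the splicing system $H=(\Sigma_n,M_n,I_n)$ the DFA $A_n'$ obtained from the construction of Proposition~\ref{prop:14-upper} was shown to have the full count of reachable states, distinguished by the suffixes $c^{n-2-t}$ together with the $b,d$ probes used in the lemma. This matches the upper bound, establishing that the bound is reached in the worst case. The remark following the lemma, which inflates $M$ by adding symbol pairs playing the roles of $b$ and $d$, shows in addition that the $|M_1|$-dependence of the bound is genuine rather than an artefact of one small witness.

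There is no genuinely hard step: the only care needed is bookkeeping already carried out in the two cited results — verifying that the witness $A_n$ is minimal (so the stated state complexity $n$ of $I_n$ is correct), and checking that Proposition~\ref{prop:14-semi-finite-upper} and Lemma~\ref{lem:14-semi-finite-lower} analyse \emph{the same} DFA construction (that of Proposition~\ref{prop:14-upper}), so that the upper and lower counts refer to a common object. Once these are noted, the theorem follows by combining the two inequalities.
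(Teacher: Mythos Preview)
Your proposal is correct and matches the paper's own proof, which is literally the single sentence ``Together, Proposition~\ref{prop:14-semi-finite-upper} and Lemma~\ref{lem:14-semi-finite-lower} give the following result.'' Your elaboration of what each cited result contributes is accurate, and your observation that one should check the witness $A_n$ is minimal and that both results refer to the same DFA construction is sound bookkeeping, though the paper itself does not spell this out.
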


\if 0
Now, we will consider the state complexity when restricting rules to (1,4)-simple splicing rules. We will see in the following that this makes the structure of reachable sets more complicated. First, we have the following lemma.
\begin{lemma}
Let $t \in M_1 \cap M_2$. If $\langle P,t \rangle$ is reachable in $A'$, then $P \subseteq \im \delta_t$.
\end{lemma}

\begin{proof}
By definition, $\langle P,t \rangle$ can only be reached via a transition on $t$. Suppose $\langle S,c \rangle$ is a state of $A'$ such that $\delta'(\langle S,c \rangle, t) = \langle P,t \rangle$. There are two possibilities. If $c$ is a symbol such that $(c,t) \in M$, we have $P = \im \delta_t$. Otherwise, if there is no pair $(c,t) \in M$ or $c = \varepsilon$, we have 
\begin{equation*}
P = \bigcup_{q \in S} \delta(q,t) \subseteq \im \delta_t.
\end{equation*}
\qed
\end{proof}
This lemma gives us an upper bound of $2^{|\im \delta_t|}$ for any states of the form $\langle P,t \rangle$. It is clear from this that any (1,4)-simple splicing system will not be able to reach the upper bound of Proposition~\ref{prop:14-semi-finite-upper}.
\fi

To conclude this section, we observe, as noted in Section~\ref{sec:24}, that when the visible site is 4, as in (1,4)- and (2,4)- splicing systems, lower bound witnesses with finite initial languages must be defined over an alphabet that grows exponentially with the number of states in order to reach the upper bound. This is in contrast to (2,3)-semi-simple splicing systems, as in Section~\ref{sec:23} and (1,3)-semi-simple splicing systems, studied in \cite{Kari2019a}. In both of these cases, lower bound witnesses with a fixed size alphabet sufficed.

\section{Conclusion} \label{sec:concl}
We have studied the state complexity of several variants of semi-simple splicing systems. Our results are summarized in Table~\ref{tab:conclusion} and we include the state complexity of (1,3)-semi-simple and (1,3)-simple splicing systems from \cite{Kari2019a} for comparison. We observe that for all variants of semi-simple splicing systems, the state complexity bounds for splicing systems with regular initial languages are reached with simple splicing witnesses defined over a three-letter alphabet.

{\setlength{\tabcolsep}{6pt}
\renewcommand{\arraystretch}{1.25}
\begin{table}
    \centering
    \begin{tabular}{@{}|l|l|l|@{}}
    \toprule
     & \textbf{Regular axiom set} & \textbf{Finite axiom set} \\
    \midrule
    (2,4)-semi. & $2^n-1, |\Sigma| = 3$ & $2^{n-2} + 1, |\Sigma| \geq 2^{n-3}$ \\
    (2,3)-semi. & $2^{n-1}, |\Sigma| = 3$& $2^{n-3}+2, |\Sigma| = 3$ \\
    (1,4)-semi. & $(2^{n-2} - 2)(|M_1| + 1) + 1, |\Sigma| = 3$& $2^{n-2} + |M_1|\cdot 2^{n-3}, |\Sigma| \geq 2^{n-3}$ \\
    \emph{(1,3)-semi.} \cite{Kari2019a} & $2^n-1, |\Sigma| = 3$& $2^{n-2} + 1, |\Sigma| = 3$ \\
    \midrule
    (2,4)-simple & $2^n-1, |\Sigma| = 3$ & \emph{Same as (1,3)}\\
    (2,3)-simple & $2^{n-1}, |\Sigma| = 3$& $2^{n-4} + 2^{n-5} + 2, |\Sigma| = 7$\\
    (1,4)-simple & $(2^{n-2} - 2)(|M_1| + 1) + 1, |\Sigma| = 3$& \emph{?} \\
    \emph{(1,3)-simple} \cite{Kari2019a} & $2^n-1, |\Sigma| = 3$& $2^{n-2} + 1, |\Sigma| \geq 2^{n-3}$\\
    \bottomrule
    \end{tabular}
    \vspace*{2mm}
    \caption{Summary of state complexity bounds for $(i, j)$ simple splicing systems and semi-simple splicing systems with alphabet $\Sigma$, state complexity of the axiom  $n$, and set of splicing rules $M = M_1 \times M_2$, with $M_1, M_2 \subseteq \Sigma$.}
    \label{tab:conclusion}
\end{table}
}

For semi-simple splicing systems with finite initial languages, we observe that the state complexity bounds for the (2,3) and (1,3) variants are reached by witnesses defined over a three-letter alphabet, while both of the (1,4) and (2,4) variants require an alphabet size that is exponential in the size of the DFA for the initial language.

For simple splicing systems with finite initial languages, since (1,3)- and (2,4)-simple splicing systems are equivalent, the bound is reached by the same witness from \cite{Kari2019a}. The witness for (2,3)-simple splicing systems with a finite initial language is defined over a fixed alphabet of size 7, while the problem remains open for (1,4)-simple splicing systems.

Another problem that remains open is the state complexity of (1,4)- and (2,4)- simple and semi-simple splicing systems with finite initial languages defined over alphabets of size $k$ for $3 < k < 2^{n-3}$. A similar question can be asked of (2,3)-simple splicing systems with a finite initial language for alphabets of size less than 7.

\bibliographystyle{splncs_srt}
\bibliography{library}
\end{document}